\renewcommand{\topfraction}{1.0}
\renewcommand{\bottomfraction}{1.0}
\renewcommand{\textfraction}{0.0}
\renewcommand{\floatpagefraction}{0.0}
\newcommand{\delete}[1]{{\color{blue}\sout{#1}}}
\newcommand{\add}[1]{{\color{red} #1}}
\newcommand{\change}[2]{{\color{blue} \sout{#1}} {\color{red} #2}}
\newcommand{\new}[1]{{\color{olive}#1}}
\newcommand{\shu}[1]{\textcolor{blue}{(Shu:#1)}}
\newcommand{\ken}[1]{\textcolor{red}{(ken:#1)}}
\newcommand{\R}{\mathbb{R}}
\newcommand{\Z}{\mathbb{Z}}
\newcommand{\Zmod}[1]{\mathbb{Z}/#1\mathbb{Z}}
\newcommand{\zmod}[1]{\mathbb{Z}/#1\mathbb{Z}}
\newcommand{\rp}[1]{\mathbb{R}P^{#1}}
\newcommand{\tr}[1]{\mathrm{tr}\left(#1\right)}
\newcommand{\U}[1]{\mathrm{U}\left(#1\right)}
\newcommand{\RoverZ}{\mathbb{R}/\mathbb{Z}}
\newcommand{\PU}[1]{\mathrm{PU}\left(#1\right)}
\newcommand{\coho}[3]{\mathrm{H}^{#1}(#2;#3)}
\newcommand{\cohoZ}[2]{\mathrm{H}^{#1}(#2;\mathbb{Z})}
\newcommand{\cohoR}[2]{\mathrm{H}^{#1}(#2;\mathbb{R})}
\newcommand{\cohoU}[2]{\mathrm{H}^{#1}(#2;\U{1})}
\newcommand{\emb}[2]{\mathrm{Emb}(#1,#2)}
\newcommand{\abs}[1]{\left|#1\right|}
\declaretheoremstyle[
       shaded={bgcolor=\color{rgb}{0.9,0.9,0.9}}  
]{theorem}
\declaretheorem[style=theorem]{theorem}
\declaretheoremstyle[
       shaded={bgcolor=\color{rgb}{0.9,0.9,0.9}}
]{question}
\declaretheoremstyle[
       shaded={bgcolor=\color{rgb}{0.9,0.9,0.9}}  
]{remark}
\declaretheoremstyle[
       shaded={bgcolor=\color{rgb}{0.9,0.9,0.9}}  
]{proposition}
\declaretheorem[style=theorem]{proposition}
\declaretheoremstyle[
       shaded={bgcolor=\color{rgb}{0.9,0.9,0.9}}  
]{definition}
\declaretheorem[style=theorem]{definition}
\declaretheoremstyle[
       shaded={bgcolor=\color{rgb}{0.9,0.9,0.9}}  
]{assumption}
\declaretheoremstyle[
       shaded={bgcolor=\color{rgb}{0.9,0.9,0.9}}  
]{conjecture}
\declaretheoremstyle[
       shaded={bgcolor=\color{rgb}{0.9,0.9,0.9}}  
]{corrorary}
\declaretheoremstyle[
       shaded={bgcolor=\color{rgb}{0.9,0.9,0.9}}  
]{axiom}
\declaretheoremstyle[
       shaded={bgcolor=\color{rgb}{0.9,0.9,0.9}}  
]{lemma}
\declaretheorem[style=theorem]{lemma}
\def\N{{\mathbb{N}}}
\def\Z{{\mathbb{Z}}}
\def\Q{{\mathbb{Q}}}
\def\R{{\mathbb{R}}}
\def\C{{\mathbb{C}}}
\def\fp#1{{\frac{\partial}{\partial{#1}} }}
\def\fpp#1#2{{\frac{\partial{#1}}{\partial{#2}} }}
\def\g{{\mathfrak{g}}}
\def\oc{{\mathcal{U}}}
\def\D{{\mathcal{D}}}
\def\S{{\mathcal{S}}}
\def\F{{\mathcal{F}}}
\def\a{{\alpha}}
\def\b{{\beta}}
\def\e{{\epsilon}}
\def\o{{\omega}}
\def\r{{\rho}}
\def\ga{{\gamma}}
\def\th{{\theta}}
\def\si{{\sigma}}
\def\Ga{{\Gamma}}
\def\na{{\nabla}}
\def\io{{\iota}}
\def\id{{\bf 1}}
\def\t{{\tau}}
\def\De{{\Delta}}
\def\de{{\delta}}
\def\bu{{\bullet}}
\def\pa{{\partial}}
\def\>{{\geq }}
\def\<{{\leq }}
\def\un{\underline}
\def\os{\overset}
\def\vp{\varphi}
\def\ot{\otimes}
\def\ha{\frac{1}{2}}
\def\fd{{\mathcal D}}
\def\fo{{\mathcal F}}
\def\ph{{\phi}}
\def\of{{\omega_\fd}}
\def\pe{{\Lambda_\fd}}
\def\ua{{\underline{A}}}
\def\dc{{C_\fd}}
\def\M{{\mathcal M}}
\def\hol{{\text{Hol}}}
\def\tt{{T}}
\def\pg{{\text PGL}}
\begin{document}

\title{Higher Berry Phase from Projected Entangled Pair States in (2+1) dimensions}

\author{Shuhei Ohyama}
\email{shuhei.ohyama@riken.jp}
\affiliation{
RIKEN Center for Emergent Matter Science, Wako, Saitama, 351-0198, Japan}
\author{Shinsei Ryu}
\email{shinseir@princeton.edu}
\affiliation{Department of Physics, Princeton University, Princeton, New Jersey 08544, USA}

\date{\today} 

\begin{abstract}
    We consider families of invertible many-body quantum states in $d$ spatial dimensions
    that are parameterized over some parameter space $X$. 
    The space of such families is expected to have topologically distinct sectors classified by the cohomology group $\cohoZ{d+2}{X}$.
    These topological sectors are distinguished
    by a topological invariant built from 
    a generalization of the Berry phase, called the higher Berry phase. 
    In the previous work \cite{OR23}, 
    we introduced a generalized inner product 
    for three one-dimensional many-body quantum states,
    (``triple inner product'').
    The higher Berry phase for one-dimensional invertible states can be introduced 
    through the triple inner product 
    and furthermore the topological invariant, 
    which takes its value in $\cohoZ{3}{X}$, can be extracted. 
    In this paper, we introduce an inner product of four two-dimensional invertible quantum many-body states.
    We use it to measure the topological nontriviality of parameterized families of 
    2d invertible states. In particular, we define a topological invariant 
    of such families that takes values in $\cohoZ{4}{X}$. 
    Our formalism uses projected entangled pair states (PEPS).
    We also construct a specific example of non-trivial parameterized families of 2d invertible states parameterized over $\mathbb{R}P^4$
    and demonstrate the use of our formula.
    Applications for symmetry-protected topological phases are also discussed.
\end{abstract}

\maketitle

\setcounter{tocdepth}{3}
\tableofcontents

\section{Introduction}

Invertible states are states realized as the unique ground states of gapped Hamiltonians. 
A notable subclass of invertible states is symmetry-protected topological (SPT) phases.
In the absence of symmetry conditions, these states are trivial and are continuously deformable to, e.g., a product state. 
On the other hand, when subjected to a set of symmetry conditions, 
they acquire a distinct topological nature and are separated from trivial states by a quantum phase transition 
\cite{Gu_2009,Pollmann_2012,
Chen_2011,Schuch_2011,
Chen_2013,Senthil_2015,
Chiu_2016,zeng2018quantum,CP-GSV21}.

More generally, one can consider a family of invertible states that depend continuously on some parameter(s) $x$,
$\{\ket{\psi(x)}: \text{invertible state}\left.\right|x\in X\}$,
where $X$ represents the parameter space. 
We refer to such a family as invertible states over $X$ 
or invertible states parameterized by $X$.
In simple cases, a parameterized family of invertible states 
can be viewed as an adiabatic process, that can be topologically non-trivial.
A classic example is the Thouless pump 
\cite{Thouless83},
which is a topologically non-trivial adiabatic process that transports quantized charges.

Considering families of invertible states is also useful in developing a systematic 
classification of invertible states.
According to Kitaev's conjecture, the classification of such states 
is provided by a generalized cohomology theory 
\cite{Kitaev13}.
From general and mathematical considerations using relativistic field theories, 
the classification of interacting invertible states is expected to be given 
by a generalized cohomology known as the Anderson dual of bordism theory.
Partial proofs supporting this idea have been provided \cite{FH21}.
However, it is worth noting that the classification of invertible states 
realized as lattice systems does not necessarily have to coincide 
with the classification expected from quantum field theory. 
A recent work 
\cite{Ogata22}
has pointed out the existence of classes that deviate from the classification 
by the Anderson dual of bordism theory in $2+1$-dimensional fermion systems.

This paper focuses on topologically non-trivial families of invertible states in dimensions higher than two.
In particular, we aim to construct a systematic method 
for computing the topological invariants of such families.
The space of families of $d$-dimensional invertible states parametrized by $X$ 
has topologically distinct sectors classified by $\cohoZ{d+2}{X}$. 
This class is distinguished by the higher Berry phase,
or more precisely the topological invariants built from the higher Berry phase,
which has been actively studied recently as a generalization of the regular
Berry phase -- see, e.g.,
\cite{
  KS20-1,KS20-2, Hsin_2020,
  Shiozaki21,
  Choi_2022,
  Cordova_2020a, Cordova_2020b,
  OTS23,
  beaudry2023homotopical,
  Wen_2023,
  qi2023charting,
  shiozaki2023higher,
artymowicz2023quantization}.
However, the explicit construction of the higher Berry phase,
in terms of many-body wave functions in particular,
is not known completely. 
In our previous paper
\cite{OR23}, 
we introduced a generalized inner product (or multi-wave function overlap) of three $1$d many-body states 
to measure the nontriviality and construct the invariant 
which takes its value in $\cohoZ{3}{X}$. 
In this paper,  we introduce an inner product of four $2$d states and 
try to measure the nontriviality in $\cohoZ{4}{X}$.

We do so by using tensor networks,
more specifically, projected entangled pair states (PEPS).
Tensor networks 
are a versatile framework
to study various problems in many-body quantum physics
\cite{CP-GSV21}.
Recent works also demonstrate that
tensor networks can capture higher geometric structures, such as the higher Berry phase, 
encoded in many-body quantum 
wavefunction
\cite{OSS22, OTS23, OR23, SHO23, QSWSPBH23, Spiegel23}.
In our previous work \cite{OR23}, 
we used the matrix product state (MPS) representation of
(1+1)-dimensional invertible states 
and introduced the triple inner product.
The triple inner product assigns 
a complex number for three invertible states (MPS) in (1+1) dimensions
and allows us to extract the higher Berry phase.
Furthermore, as a mathematical structure that underlies 
the higher Berry phase, 
we constructed and identified a gerbe on $X$ 
for a parameterized family of 
(1+1)-dimensional invertible states.
Here, a gerbe can be thought of as a  
generalization of a complex line bundle,
which underlies the description of the regular Berry phase. 
See Ref.\ \cite{QSWSPBH23} that also associates a gerbe to a parameterized family of MPS over $X$.
The subsequent work \cite{SHO23} 
provides a formulation for the higher Berry curvature using MPS,
and also demonstrated 
numerical evaluations of the Berry curvature using 
the density matrix renormalization group (DMRG).

%

Specifically, 
following Ref.\ \cite{OR23},
we introduce a generalized inner product 
for four PEPS, which we refer to as "quadruple inner product", 
and demonstrate that we can extract the higher Berry phase.
Additionally, we will argue (with some caveats) that the underlying mathematical structure is 
a 2-gerbe -- similarly to how gerbes are higher generalizations of complex line bundles, 
2-gerbes are generalizations of gerbes. 
While our primary focus will be on (2+1)-dimensional PEPS, we anticipate that generalizations to higher dimensions should be possible.

The paper is organized as follows: 
the remaining part of this section 
(Sec.\ \ref{Physical Intuition of the Invariant})  
provides an overview of the construction
of the triple inner product of MPS 
and outline its higher-dimensional generalization, 
serving as an instructive guide for 
the subsequent detailed construction. 
Section \ref{The Higher Berry Phase for (2+1)d Systems}
delves into a specific class of PEPS, 
the semi-injective PEPS, to implement the ideas sketched 
in Sec.\ \ref{Physical Intuition of the Invariant}.
Our formula for the quadruple inner product and the higher Berry phase, 
written in terms 
of various fixed point tensors, 
is summarized in Eq.\
\eqref{eq:quad_inner_product}
in Sec.\ \ref{preliminaries}.
Also,
at the end of Sec.\ \ref{preliminaries},
we outline two applications;
the topological invariants of SPT phases and 
the higher Berry phase
associated with parameterized invertible states.
Specifically, 
when applied to 
(2+1)d SPT phases protected by symmetry $G$, 
we can use the quadrupole inner product 
to extract the topological invariants of SPT phases 
valued in $\mathrm{H}^3(G, \mathrm{U}(1))$.
In the SPT context, our formulation is closely related to \cite{MGSC18}
and other prior works 
\cite{CLW11,
Else_2014,
Kawagoe_2021, kapustin2024anomalous, rubio2024classifying,Garre_Rubio_2023,rubio2024fractional}.
In the subsequent section
Sec.\ \ref{sec:proof},
we provide 
the rationale behind our formula
and prove various claims 
that are necessary to establish the formula. 
In Sec.\ 
\ref{Relation to the Group Cohomological Classification},
we discuss concrete models 
that realize non-trivial SPT phases.
For the example, we show that the higher Berry phase is given by
the group cohomology phase (3-cocycle) of the SPT phases.
In Sec.\ \ref{sec:Model parameterized over rp4},
we present a concrete model parametrized over 
$\mathbb{R}P^4$ and show,
using the quadruple inner product,
that it is characterized by a non-trivial topological invariant.
Finally in Sec.\ \ref{sec:2gerbe},
we identify a proper mathematical structure 
that describes parameterized (2+1)d invertible states
and the higher Berry phase.
We conclude in Sec.\ \ref{sec:summary}.

\subsection{Physical Intuition
}
\label{Physical Intuition of the Invariant}

\begin{figure}[t]
\includegraphics[scale=1.4]{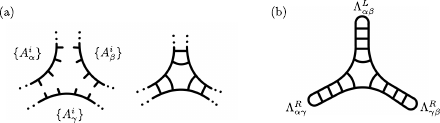}
\caption{
\label{fig:tri_inner}
(a) 
Taking the triple inner product of three matrix product states 
$\{A^i_{\alpha}\}$,
$\{A^i_{\beta}\}$
and
$\{A^i_{\beta}\}$.
(b) The triple inner product in which the virtual legs 
at spatial infinities are contracted 
with the left and right eigenvectors 
(fixed points)
$\Lambda^L_{\alpha\beta}$,
$\Lambda^R_{\alpha\gamma}$
and
$\Lambda^R_{\gamma\beta}$
of the mixed transfer matrices.
}
\end{figure}

Previous works studied the higher Berry phase for (1+1)d many-body states
using MPS. 
In this paper, we consider (2+1)d many-body states and their higher Berry phase using PEPS.
The motivation and intuition behind our construction 
of the quadruple inner product and the higher Berry phase
can be best described by reviewing their (1+1)d counterparts.

To set the stage, we consider a family of MPS 
with MPS matrices $\{A^i(x)\}_{i=1,\ldots,\mathsf{D}}$ parameterized by $x\in X$,
where $X$ is the parameter space.
(Here, $i=1, \ldots, \mathsf{D}$ represents physical degrees of freedom and 
indices for the auxiliary (internal) are implicit here 
-- see Ref.\ \cite{OR23} for our notations.)
We assume that our MPS tensors are normal, 
$\sum_i A^i A^{i\dagger} =1$, and work with the right canonical gauge.
As shown in Ref.\ \cite{OR23},
the higher Berry phase for the family, and the topological invariant (the Dixmier-Douady class)
can be extracted from a certain overlap of three MPS as follows. 
We consider an open covering of $X$, $\{U_{\alpha}\}$.
On each patch, we have MPS, $\{A^i_{\alpha}(x)\}$. 
When two patches overlap, two MPS 
$\{A^i_{\alpha}(x)\}$ 
and
$\{A^i_{\beta}(x)\}$
represent the same physical state.
The fundamental theorem of injective MPS 
then asserts that these two MPS are related
to each other by a gauge transformation,
$A^i_{\alpha}(x) = g_{\alpha\beta}(x)A^i_{\beta}(x)g^{\dag}_{\alpha\beta}(x)$. 
Now, at the triple intersection,
$U_{\alpha\beta\gamma}:= 
U_{\alpha}\cap U_{\beta} \cap U_{\gamma}$, we consider three MPS, 
$\{A^i_{\alpha}(x)\}$, $\{A^i_{\beta}(x)\}$ and $\{A^i_{\gamma}(x)\}$,
representing the same physical state,
and consider the triple inner product,
defined schematically in Fig.\ \ref{fig:tri_inner}(a). 
To make the definition precise, however, we need to address two issues.
First, we need to specify how virtual bonds at the ends of the chain are contracted, 
i.e., there are remaining legs at the interface between the two states. Otherwise, 
Fig.\ \ref{fig:tri_inner} represents an operator rather than a number. 
Second,  
we must address a conceptual concern: as we are interested in intrinsic quantities to many-body quantum states, the higher generalization of the inner product and the Berry phase must 
make sense in the thermodynamic limit. 
These can be addressed simultaneously by contracting the virtual legs 
using the left or right fixed point 
of the mixed transfer matrix [Fig.\ \ref{fig:tri_inner}(b)]. 
Using the eigenvalue equation, 
we can regard this diagram as the overlap of three infinite systems.
This quantity is called the triple inner product, and using it, we can compute the higher Berry phase for (1+1)d systems. 
Recalling that the regular Berry phase for (0+1)d quantum systems 
can be computed from the inner product of two states, 
the above generalization for (1+1)d systems can be considered extremely natural.

\begin{figure}[t]
  \includegraphics[scale=1.7]{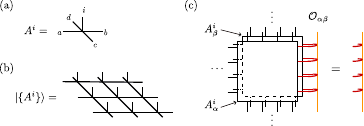}
  \caption{
    \label{4leg_tensor}
    (a) A PEPS tensor with one physical and four auxiliary bonds.
    (b) The PEPS representation of a (2+1)d invertible state.
    (c) A transition MPO is an eigen MPO for the mixed transfer matrix made of two PEPS
    $A^{i}_{\alpha}$ and $A^i_{\beta}$.
    Here, the two PEPS are contracted by their physical bonds (not shown explicitly).
  }
\end{figure}

Following this line of thought,
we expect that 
the higher Berry phase in general dimensions
can be computed by considering $d$-fold inner products of ($d+1$)d quantum states. 
Let us now turn to the (2+1)d case and consider 
PEPS $\{A^i(x)\}$ parameterized over $X$.
Throughout the paper, we consider PEPS tensors with one physical bond and four auxiliary bonds (Fig.\ \ref{4leg_tensor}). 
Paralleling the (1+1)d case, 
it is natural to consider the overlap 
of four physically equivalent states in different gauges 
at an intersection 
$U_{\alpha\beta\gamma\delta}:=
U_{\alpha}\cap U_{\beta}\cap U_{\gamma}
\cap U_{\delta}$
as in Fig.\ \ref{fig:quad_inner}.
There, as in the (1+1)d case, the physical indices of the PEPS 
are contracted "partially" with others.
I.e., the physical indices of $\{A^i_{\alpha}\}$ in the first quadrant  
are contracted with the 
physical indices of $\{A^i_{\beta}\}$ in the third quadrant, 
etc.
As in the (1+1)d case,
there are remaining legs (bonds) at the boundaries of the PEPS networks.  
They need to be properly contracted such that the inner product 
is well-defined for the many-body states (PEPS) 
in the thermodynamic limit.

\begin{figure}[t]
  \centering
  \includegraphics[scale=0.09]{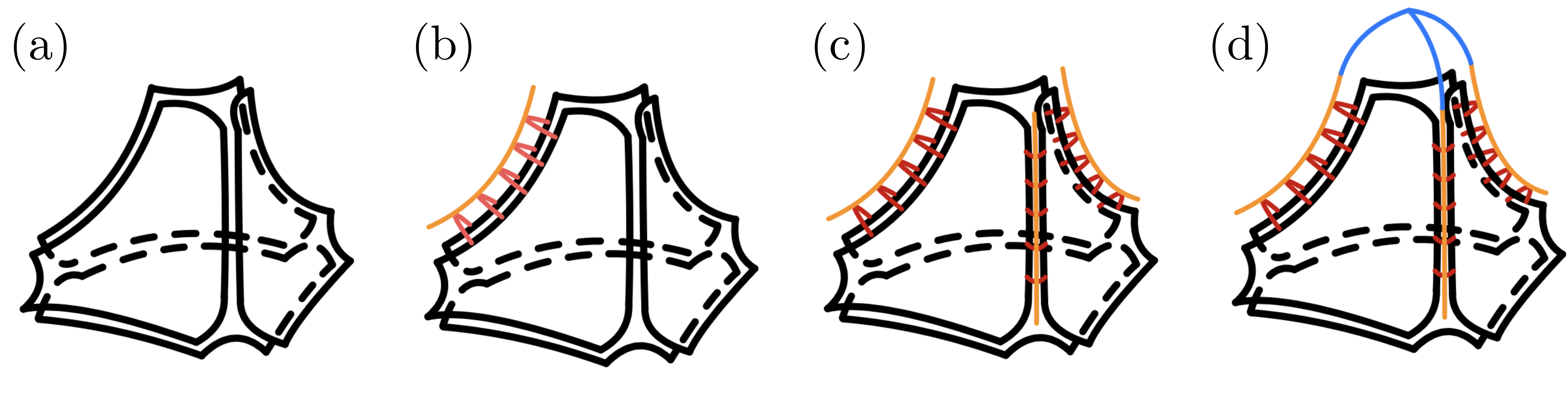}
  \caption{ \label{fig:quad_inner}
    The quadruple inner product for four PEPSs 
    (here represented as 2d "sheets")
    defined by gluing four PEPSs as illustrated above.
    (a) Four PEPSs from different patches at the quad intersection 
    $U_{\alpha\beta\gamma\delta}$.
    The physical indices (not shown explicitly) are contracted such that 
    the four PEPSs are glued with each other. 
    (b) For a pair of PEPS, we can define an eigen MPO ${\cal O}_{\alpha\beta}$,
    that can be used to contract indices of PEPS as in (c).
    (d) The three-leg tensor can contract indices at the intersection
    where three MPO meet.
  }
\end{figure}


First, let us consider the dangling bonds  
at the interface where two PEPS networks meet.
Much the same way with the (1+1)d case, 
we consider the mixed transfer matrix at the intersection 
$U_{\alpha}\cap U_{\beta}=U_{\alpha\beta}$
as depicted in Fig.\ \ref{4leg_tensor}(c).
Here, we assume the transfer matrix has a unique maximal eigenvalue $1$ and 
the corresponding eigenvector is given as a matrix product operator (MPO). 
Moreover, we assume that MPO tensors do not depend on the number 
of sites in the interface direction
\footnote{These are naive guesses intended to 
give an explanation of the idea for constructing invariants, 
and we should mention that they are too strong assumptions. 
When we actually construct the invariants in Sec.\  
\ref{The Higher Berry Phase for $2+1$-d Systems}, 
we will relax the eigenvalue equations related 
to the transfer matrix and avoid these difficulties.}.
This MPO is a higher-dimensional analog of the transition functions, denoted as $g_{\alpha\beta}(x)$, 
in (1+1)d case. 
We will call it a transition MPO and denote it as ${\cal O}_{\alpha\beta}$. 
By using the property of the fixed point, 
we can implement the thermodynamic limit in the bulk direction, 
i.e., the direction orthogonal to the interface.

In addition to the interface (the orange part) between the two PEPS networks, 
it is evident that there is "an interface of the interface" 
(the blue part
in Fig.\ \ref{fig:quad_inner}). 
At this point, the three virtual legs of the MPO,
${\cal O}_{\alpha\beta}$, ${\cal O}_{\beta\gamma}$ and ${\cal O}_{\alpha\gamma}$,
say, 
meet with each other. 
We need an appropriate tensor so that we can contract the legs. 
We thus postulate the existence of a three-leg tensor 
$\Lambda_{\alpha\beta\gamma}$ for triple intersection 
$U_{\alpha\beta\gamma}$.
Furthermore, similarly to the transition MPO ${\cal O}_{\alpha\beta}$ 
that satisfies the fixed point condition, 
it is natural to require the three-leg tensor to satisfy 
a fixed point condition such that we can take the thermodynamic
limit in the directions along the double interfaces between 
two PEPS networks.
Specifically, we require 
the "higher eigen equation" as in 
Fig.\  \ref{fig:highereigeneq}(a).
We will adopt this as the defining condition of the three-leg tensor.
%
%
%
\begin{figure}[t]
\includegraphics[scale=0.25]{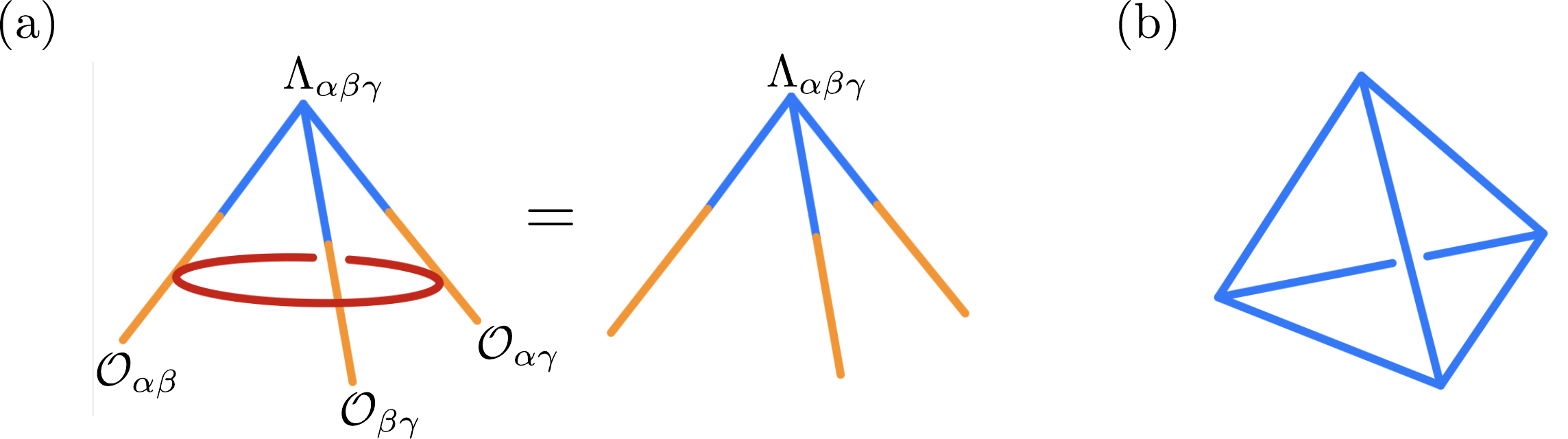}
  \caption{(a) The fixed point condition for the three-leg tensor 
    and (b) the higher Berry phase.
    \label{fig:highereigeneq}}
\end{figure}
(More generically, the eigenvalue 
can be a complex number of modulus one. 
The phase basically is the Dixmier-Douady class that 
appeared above when we considered three MPS. For now,
we assume such "lower invariant" vanishes for simplicity.)

By using these tensors, we can contract all the remaining legs 
in Fig.\ \ref{fig:quad_inner}. 
Furthermore, we can use the eigenequations (fixed point conditions)
to annihilate physical sites, 
and then reduce the diagram Fig.\ \ref{fig:quad_inner}(d) 
to the tetrahedral diagram made of four $3$-leg tensors as  
Fig.\ \ref{fig:highereigeneq}(b).
This completes the sketch of 
the construction of the higher Berry phase 
(and the topological invariant built from the higher Berry phase)
from PEPS.
Note however that in the above construction,
we assume the existence of various tensors. 
It is necessary to clarify these points. 
In the next section, we will describe our construction 
using semi-injective PEPS \cite{MGSC18} 
to concretely implement this idea.
\footnote{
In this section and 
in this figure, 
we denote the three-leg tensor 
simply by $\Lambda_{\alpha\beta\gamma}$.
However, as we will 
see in the later sections, 
we need various types of three-leg tensors
for the precise construction. 
}

Finally, to see why these are natural objects to consider,
and how these objects need to be combined,  
it is good to compare with the (1+1)d case.
In (1+1)d, we assign an MPS $|\{A^i_{\alpha}\}\rangle$ for $U_{\alpha}$, 
$|\{\Lambda_{\alpha\beta}\}\rangle$ for $U_{\alpha\beta}$,
and 
a $\mathrm{U}(1)$ phase $c_{\alpha\beta\gamma}$
for $U_{\alpha\beta\gamma}$, respectively. 
Here, $|\{\Lambda_{\alpha\beta}\}\rangle$ or simply 
$\Lambda_{\alpha\beta}$ represents 
"mixed gauge" MPS. 
In mathematical terms,
$|\{\Lambda_{\alpha\beta}\}\rangle$ is a 
line bundle (one-dimensional Hilbert space) assigned 
for the intersection $U_{\alpha\beta}$.
The $\mathrm{U}(1)$ phase $c_{\alpha\beta\gamma}$ 
is the higher Berry phase.
In (2+1)d, we need to consider a "higher" version.
We assign a 
PEPS $|\{A^i_{\alpha}\}\rangle$ for $U_{\alpha}$, 
an MPO ${\cal O}_{\alpha\beta}$ for $U_{\alpha\beta}$,
a three-leg tensor $\Lambda_{\alpha\beta\gamma}$ for $U_{\alpha\beta\gamma}$,
and 
a $\mathrm{U}(1)$ phase $c_{\alpha\beta\gamma\delta}$
for $U_{\alpha\beta\gamma\delta}$.
Roughly, the three-leg tensor 
$\Lambda_{\alpha\beta\gamma}$ 
is an analog of $\Lambda_{\alpha\beta}$
in (2+1)d, and the $\mathrm{U}(1)$ phase $c_{\alpha\beta\gamma\delta}$ 
is the higher Berry phase we are looking for.
Note that at the double intersection $U_{\alpha\beta}$
we assign an MPO, which can be thought of as an MPS over $U_{\alpha\beta}$, 
and if so, we are essentially putting a gerbe at the intersection  
$U_{\alpha\beta}$. This is a generalization of a gerbe, 
just like a gerbe is a generalization of a line bundle. 
We will discuss the mathematical structure 
behind the construction of our invariant in Sec.\ \ref{sec:2gerbe}.

\section{The Higher Berry Phase for (2+1)d Systems}
\label{The Higher Berry Phase for (2+1)d Systems}

\subsection{Preliminaries}
\label{preliminaries}

In this section, we will materialize the idea 
sketched above by using 
a class of PEPS, the so-called semi-injective PEPS
\cite{MGSC18}.
We begin by reviewing necessary backgrounds and notations;
MPO, semi-injective PEPS,
and transfer matrices.

\subsubsection{MPO}

Matrix product operators (MPO)
are central objects in our discussions.
They are defined by a set of matrices (tensors)
$\{B^{ij}\}_{i,j=1}^\mathsf{D}$.
Here, $i,j=1,\cdots, \mathsf{D}$ are ``physical'' indices   
representing, e.g., physical spin degrees of freedom. 
In addition, the tensors $\{B^{ij}\}_{i,j=1}^\mathsf{D}$
carry ``virtual'' indices (not shown explicitly).
The dimension of the virtual space (``bond dimension'')
is denoted by $\chi$. 
We use
$\mathcal{O}_L[\{B^{ij}\}]$ to denote
the MPO of length $L$ that is generated by
the MPO tensors $\{B^{ij}\}_{i,j=1}^{\mathsf{D}}$. 
To be concrete, we consider periodic boundary conditions,
although it is possible to discuss other boundary conditions.
When there is no confusion, we shall call both
$\{B^{ij}\}_{i,j=1}^\mathsf{D}$
and ${\cal O}_L[\{B^{ij}\}]$
MPO.
We recall that MPS (MPO) matrices are said to be normal when
the corresponding transfer matrix has a unique largest eigenvalue. 
We also note that any injective matrices are proportional 
to normal matrices.
Conversely, any normal matrices become injective after blocking.

\subsubsection{Semi-injective PEPS}\label{sec:semi}

Semi-injective PEPS,
introduced in Ref.\ \cite{MGSC18},
is a type of PEPS that can represent a wide class of (2+1)d invertible states
including ground states of symmetry-protected topological (SPT) phases. 
To introduce semi-injective PEPS,
let's consider a square lattice, where each unit cell consists of four sites
[Fig.\ \ref{fig:semiinj}(a)].
At each site within a unit cell, 
we assign a physical quantum mechanical degree of freedom ("spin") with local Hilbert space.
The dimensions of the local Hilbert space 
can be different from site to site. 
The total Hilbert space is constructed 
from the local Hilbert spaces, 
by arranging the local degrees of freedom in a translationally symmetric manner.
With this setup,
a semi-injective PEPS is defined by two data 
$(\ket{\phi},O)$. Here, $\ket{\phi}$ is a state defined on each 
$4$-site system as in Fig.\ \ref{fig:semiinj}(c) with full rank reduced density matrix. 
$O$ is an invertible operator that acts on the four corners of four adjacent
squares,
as in Fig.\ \ref{fig:semiinj}(b).
By using these data, a semi-injective PEPS generated by ($|\phi\rangle$, $O$) 
is defined as in
Fig.\ \ref{fig:semiinj}(d).
\begin{figure}[t]
\includegraphics[scale=1.5]{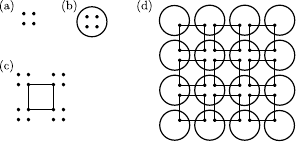}
\caption{
\label{fig:semiinj}
The setup for semi-injective PEPS.
(a) The unit cell;
(b) the invertible operator 
${O}$;
(c) the full-rank state $|\phi\rangle$;
(c) the semi-injective PEPS
$(|\phi\rangle, {O})$.
}
\end{figure}

What is important for us is 
the fundamental theorem 
of semi-injective PEPS.
Let us now consider two semi-injective PEPS,
$(|\phi_{\alpha}\rangle,O_{\alpha})$
and 
$(|\phi_{\beta}\rangle,O_{\beta})$, 
that generate
the same physical state.
Then, the fundamental theorem of semi-injective PEPS
states that
they are related by an invertible MPO acting on their auxiliary indices
\cite{MGSC18}.
We shall call the MPO a transition MPO. 
(Recall also Fig.\ \ref{4leg_tensor} (c).)
In the following, we denote the MPO matrices (tensors) that
generate the transition MPO as
$\{B^{ij}_{\alpha\beta}\}_{i,j=1}^{\mathsf{D}}$.
Also, 
$\mathcal{O}_L[\{B^{ij}_{\alpha\beta}\}]$ denotes
the corresponding MPO of length $L$ that is generated by
the MPO tensors $\{B^{ij}_{\alpha\beta}\}_{i,j=1}^{\mathsf{D}}$. 
Furthermore, without losing generality,
the fundamental theorem asserts that we can take 
$\{B^{ij}_{\alpha\beta}\}_{i,j=1}^{\mathsf{D}}$ as normal matrices (tensors).

\subsubsection{Key Propositions and the Unitarity Assumption}\label{sec:prop}\label{sec:assumption}
In the construction of the quadruple inner product, various quantities are realized as fixed points of the transfer matrix. The following claim proves useful in this context\footnote{We would like to thank Norbert Schuch for teaching us the proposition and its proof.}.

\begin{proposition}\label{thm:fp}\;\\
\\
Let $\{B_{1}^{ij}\}_{i,j=1}^{\mathsf{D}},...,\{B_{k}^{ij}\}_{i,j=1}^{\mathsf{D}}$ be normal MPO matrices and assume that the composition of all the MPO is proportional to the identity MPO. That is, there exists an invertible complex number $c\in\mathbb{C}^{\times}$ such that 
\begin{eqnarray}\label{eq:cycle}
    \mathcal{O}_{L}[\{\sum_{j_2,...,j_k}B_{1}^{j_1 j_2}\otimes \cdots\otimes
  B_{k}^{j_{k} j_{k+1}}\}]
  =c^L\mathcal{O}_{L}[\{\delta^{j_1 j_{k+1}}\}]
\end{eqnarray}
for all $L\in\mathbb{N}$.
Then, the spectrum of the $k$-transfer matrix
\begin{eqnarray}
    T^{(k)}[B_{1},...,B_{k}]:=\frac{1}{c\mathsf{D}}\sum_{j_1,...,j_k}B_{1}^{j_1 j_2}\otimes \cdots\otimes B_{k}^{j_{k} j_{1}},
\end{eqnarray}
acting on $M\in\mathbb{C}^{\chi_1\cdots \chi_k}$ as
\begin{eqnarray}
\begin{tikzpicture}[line cap=round,line join=round,x=1.0cm,y=1.0cm, scale=0.3, baseline={([yshift=-.6ex]current bounding box.center)}, thick, shift={(0,0)}, scale=0.8]
  \def\tate{3} 
  \def\yoko{0.5} 
  \def\sen{7} 
  \def\gaisen{2} 
  \pgfmathsetmacro\hankei{\tate*1.7} 
  \def\hankeinosa{2} 
  \def\nobishiro{1}
  \def\kankaku{6}
  \draw (0,\tate/2+\tate/2+\tate/2+\nobishiro) -- (0,0-\nobishiro);
  \fill (0,0-\nobishiro) circle (2mm); 
  \fill (0,\tate/2+\tate/2+\tate/2+\nobishiro) circle (2mm); 
  \draw (-\nobishiro,\tate+\tate/2) -- (+\nobishiro,\tate+\tate/2);
  \draw (+\nobishiro,\tate+\tate/2) node [right] {$B_1$};
  \draw (-\nobishiro,\tate) -- (+\nobishiro,\tate);
  \draw (+\nobishiro,\tate) node [right] {$B_2$};
  \fill (0-\nobishiro/2,\tate/2) circle (1mm);
  \fill (0+\nobishiro/2,\tate/2) circle (1mm);
  \fill (0-\nobishiro/2,\tate*0.75) circle (1mm);
  \fill (0+\nobishiro/2,\tate*0.75) circle (1mm);
  \fill (0-\nobishiro/2,\tate*0.25) circle (1mm);
  \fill (0+\nobishiro/2,\tate*0.25) circle (1mm);
  \draw (-\nobishiro,0) -- (+\nobishiro,0);
  \draw (+\nobishiro,0) node [right] {$B_{k}$};
\end{tikzpicture}
\;\;\;\cdot\;\;\;
\begin{tikzpicture}[line cap=round,line join=round,x=1.0cm,y=1.0cm, scale=0.3, baseline={([yshift=-.6ex]current bounding box.center)}, thick, shift={(0,0)}, scale=0.7]
  \def\tate{3} 
  \def\yoko{0.5} 
  \def\sen{7} 
  \def\gaisen{2} 
  \pgfmathsetmacro\hankei{\tate*1.7} 
  \def\hankeinosa{2} 
  \def\nobishiro{1}
  \def\kankaku{6}
  \draw (0+\nobishiro,\tate/2+\tate/2+\tate/2) -- (0+\nobishiro,0);
  \draw (-\nobishiro,\tate+\tate/2) -- (+\nobishiro,\tate+\tate/2);
  \draw (-\nobishiro,\tate) -- (+\nobishiro,\tate);
  \fill (0,\tate/2) circle (1mm);
  \fill (0,\tate*0.75) circle (1mm);
  \fill (0,\tate*0.25) circle (1mm);
  \draw (-\nobishiro,0) -- (+\nobishiro,0);
  \draw (\nobishiro,\tate*0.75) node [right] {$M$};
\end{tikzpicture}
\;\;\;:=\;\;\;
\begin{tikzpicture}[line cap=round,line join=round,x=1.0cm,y=1.0cm, scale=0.3, baseline={([yshift=-.6ex]current bounding box.center)}, thick, shift={(0,0)}, scale=0.7]
  \def\tate{3} 
  \def\yoko{0.5} 
  \def\sen{7} 
  \def\gaisen{2} 
  \pgfmathsetmacro\hankei{\tate*1.7} 
  \def\hankeinosa{2} 
  \def\nobishiro{1}
  \def\kankaku{6}
  \draw (0,\tate/2+\tate/2+\tate/2+\nobishiro) -- (0,0-\nobishiro);
  \fill (0,0-\nobishiro) circle (2mm); 
  \fill (0,\tate/2+\tate/2+\tate/2+\nobishiro) circle (2mm); 
  \draw (-\nobishiro,\tate+\tate/2) -- (+\kankaku/2,\tate+\tate/2);
  \draw (-\nobishiro,\tate) -- (+\kankaku/2,\tate);
  \fill (0+\nobishiro,\tate/2) circle (1mm);
  \fill (0+\nobishiro,\tate*0.75) circle (1mm);
  \fill (0+\nobishiro,\tate*0.25) circle (1mm);
  \draw (-\nobishiro,0) -- (+\kankaku/2,0);
  \draw (0+\kankaku/2,\tate/2+\tate/2+\tate/2) -- (0+\kankaku/2,0);
\end{tikzpicture}\, ,
\end{eqnarray}
is $\{0,1\}$ and the eigenvalue $1$ is nondegenerate.
In this diagram, 
black dots at the ends 
of vertical bonds  
mean that the physical indices 
at the top and the bottom are contracted together.
\end{proposition}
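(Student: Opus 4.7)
The plan is to convert the spectral question into a calculation of the power-sum traces $\mathrm{tr}((T^{(k)})^L)$ for all $L \geq 1$, which the hypothesis \eqref{eq:cycle} pins down exactly. Set $\tilde{T} := c\mathsf{D}\, T^{(k)}$ and let $\tilde{B}^{j_1 j_{k+1}} := \sum_{j_2,\ldots,j_k} B_1^{j_1 j_2}\otimes\cdots\otimes B_k^{j_k j_{k+1}}$ be the stacked MPO tensor that already appears in the hypothesis. The first step is the standard diagrammatic identity
\[
\mathrm{tr}(\tilde{T}^L) \;=\; \mathrm{tr}\bigl(\mathcal{O}_L[\tilde{B}]\bigr),
\]
where the left trace is over the virtual space $\mathbb{C}^{\chi_1\cdots\chi_k}$ and the right trace is over the physical space $(\mathbb{C}^{\mathsf{D}})^{\otimes L}$. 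Both sides describe the same closed tensor network built from $L$ copies of $\tilde{B}$ arranged in a ring, with all virtual bonds contracted cyclically and all physical indices contracted vertically within each column; the two sides differ only in the order in which the contractions are carried out. The hypothesis \eqref{eq:cycle} then gives $\mathcal{O}_L[\tilde{B}] = c^L\,\mathbf{1}$ on the physical space, so $\mathrm{tr}(\mathcal{O}_L[\tilde{B}]) = c^L\mathsf{D}^L$, and dividing by the normalization $(c\mathsf{D})^L$ yields
\[
\mathrm{tr}\bigl((T^{(k)})^L\bigr) \;=\; 1 \qquad \text{for every } L \geq 1.
\]

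Next I would extract the spectrum from these power sums. Let $\mu_1,\ldots,\mu_s$ be the distinct eigenvalues of $T^{(k)}$ with algebraic multiplicities $m_1,\ldots,m_s$. The identities $\sum_j m_j \mu_j^L = 1$ for all $L \geq 1$ are equivalent, as formal power series in $t$, to
\[
\sum_{j=1}^{s} m_j \, \frac{\mu_j}{1-\mu_j t} \;=\; \frac{1}{1-t}.
\]
Any nonzero $\mu_j$ contributes a pole at $t = 1/\mu_j$ on the left, while the right has its only pole at $t = 1$; hence the only nonzero eigenvalue can be $\mu_j = 1$, and a residue comparison at $t = 1$ forces its algebraic multiplicity to be exactly $1$. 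This gives $\mathrm{spec}(T^{(k)}) \subseteq \{0,1\}$ with the eigenvalue $1$ nondegenerate, as claimed.

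The main obstacle I anticipate is not analytic but diagrammatic: securing the identity $\mathrm{tr}(\tilde{T}^L) = \mathrm{tr}(\mathcal{O}_L[\tilde{B}])$ requires careful tracking of how the $k$ virtual spaces of the individual $B_r$ combine into $\mathbb{C}^{\chi_1\cdots\chi_k}$, and how the cyclic physical contractions within each column match the MPO physical trace, under the paper's conventions. Once that identification is in hand, the rest of the argument is essentially automatic. I will also note that normality of the individual $B_r$ plays no direct role in the proof itself; it enters only through the broader context (the fundamental theorem of MPO/semi-injective PEPS) that motivates why a relation of the form \eqref{eq:cycle} is a natural hypothesis.
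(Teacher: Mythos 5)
Your proposal is correct and follows essentially the same route as the paper: take the physical ("vertical") trace of Eq.~\eqref{eq:cycle} to deduce $\mathrm{Tr}[(T^{(k)})^{L}]=1$ for all $L$, then read off the spectrum from the power sums. The only difference is that you spell out the final step (via a generating-function/residue argument) that the paper simply asserts, and you flag the diagrammatic bookkeeping behind the trace identity, which is the paper's one-line "vertical trace" step.
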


\begin{proof}
  By taking the vertical trace of Eq.\ \eqref{eq:cycle},
  we obtain $\mathrm{Tr}[(T^{(k)})^L]=1$. Here, $\mathrm{Tr}$ is the horizontal trace
  originating from PBC. Let $\{\xi_{i}\}$ be the spectrum of $T^{(k)}$. Then, the above equation turns out to be $\sum_{i}\xi_{i}^{L}=1$ for any $L$. This equation implies that one of the $\xi_{i}$, say $\xi_1$, is $1$ and the others are $0$.
\end{proof}
In the following, we often use this proposition for $k=2$ and $3$.

As mentioned in Sec.\ \ref{sec:semi},
while a transition MPO is guaranteed to be normal (and hence invertible),
it is not necessarily an MPU.
In our construction of the higher Berry phase below,
however, this invertibility, which holds 
when an MPO is an MPU, plays a crucial role. 
Therefore, we will further assume 
transition MPO are MPU.
\footnote{
When an MPO is an MPU,
it is guaranteed that 
its transfer matrix 
has 
an invertible fixed point.
} 
This can be rationalized by recalling the (1+1)d case. 
There, 
the transition functions 
($g_{\alpha\beta}$)
defined for physically equivalent normal MPS matrices are invertible
but not necessarily unitary.
However, 
it is possible to make the transition functions
unitary by performing a partial gauge fixing known as the canonical form
\cite{P-GVWC07}.
While the existence of such a canonical form for PEPS is not known at present,
it is expected that a similar gauge fixing exists,
allowing us to transform the transition MPO into an MPU.
We also note that in the context of SPT phases,
their classification does not change 
upon restricting ourselves to the case where the transition MPO is an MPU.
\footnote{In fact, when considering the classification of SPT phases, 
the invariants using MPO take values in $\coho{3}{G}{\mathbb{C}^{\times}}$, and when they have an MPU, they take values in $\cohoU{3}{G}$, but the two cohomologies are known to be mathematically isomorphic.}
For these reasons, we will assume henceforth that transition MPO are MPU.

For an MPO that is an MPU, the following proposition holds true: 
\begin{proposition}\label{thm:fp_unitary}\;\\
\\
Let
$\{B_{1}^{ij}\}_{i,j=1}^{\mathsf{D}},...,\{B_{k}^{ij}\}_{i,j=1}^{\mathsf{D}}$ be
normal MPU matrices
and assume that the composition of 
$\{B_{1}^{ij}\}_{i,j=1}^{\mathsf{D}},...,\{B_{k-1}^{ij}\}_{i,j=1}^{\mathsf{D}}$ 
is proportional to the MPO generated by $\{B_{k}^{ij}\}_{i,j=1}^{\mathsf{D}}$. That is, there exists an invertible complex number $c\in\mathbb{C}^{\times}$ such that 
\begin{eqnarray}
    \mathcal{O}_{L}[\{\sum_{j_2,...,j_{k-1}}B_{1}^{j_1 j_2}\otimes \cdots\otimes
  B_{k-1}^{j_{k-1} j_{k}}\}]
  =c^L\mathcal{O}_{L}[\{B_{k}^{j_1 j_{k}}\}]
\end{eqnarray}
for all $L\in\mathbb{N}$.
Then,\\
(i) the spectrum of the $(k-1,1)$-transfer matrix
\begin{eqnarray}
    T^{(k-1,1)}[B_{1},...,B_{k-1};B_{k}]:=\frac{1}{c \mathsf{D}}\sum_{j_1,...,j_{k}}B_{1}^{j_1 j_2}\otimes \cdots\otimes
  B_{k-1}^{j_{k-1} j_{k}}\otimes B_{k}^{j_{1}, j_{k}\dagger}
\end{eqnarray}
with respect to the right action on $M\in\mathbb{C}^{\chi_1\cdots \chi_{k-1}}\otimes\mathbb{C}^{\chi_k\ast}$ 
defined as
\begin{eqnarray}
\begin{tikzpicture}[line cap=round,line join=round,x=1.0cm,y=1.0cm, scale=0.3, baseline={([yshift=-.6ex]current bounding box.center)}, thick, shift={(0,0)}, scale=0.8]
  \def\tate{3} 
  \def\yoko{0.5} 
  \def\sen{7} 
  \def\gaisen{2} 
  \pgfmathsetmacro\hankei{\tate*1.7} 
  \def\hankeinosa{2} 
  \def\nobishiro{1}
  \def\kankaku{6}
  \draw (0,\tate/2+\tate/2+\tate/2+\nobishiro) -- (0,0-\nobishiro);
  \draw (-\nobishiro,\tate+\tate/2) -- (+\nobishiro,\tate+\tate/2);
  \draw (+\nobishiro,\tate+\tate/2) node [right] {$B_1$};
  \draw (-\nobishiro,\tate) -- (+\nobishiro,\tate);
  \draw (+\nobishiro,\tate) node [right] {$B_2$};
  \fill (0-\nobishiro/2,\tate/2) circle (1mm);
  \fill (0+\nobishiro/2,\tate/2) circle (1mm);
  \fill (0-\nobishiro/2,\tate*0.75) circle (1mm);
  \fill (0+\nobishiro/2,\tate*0.75) circle (1mm);
  \fill (0-\nobishiro/2,\tate*0.25) circle (1mm);
  \fill (0+\nobishiro/2,\tate*0.25) circle (1mm);
  \draw (-\nobishiro,0) -- (+\nobishiro,0);
  \draw (+\nobishiro,0) node [right] {$B_{k-1}$};
  \draw (0+\kankaku,\tate/2+\tate/2+\tate/2+\nobishiro) -- (0+\kankaku,0-\nobishiro);
  \draw (-\nobishiro+\kankaku,\tate*0.75) -- (+\nobishiro+\kankaku,\tate*0.75);
  \draw (+\nobishiro+\kankaku,\tate*0.75) node [right] {$B_{k}^{\dagger}$};
  \draw (0,\tate/2+\tate/2+\tate/2+\nobishiro) to [out=90,in=90] (0+\kankaku,\tate/2+\tate/2+\tate/2+\nobishiro);
  \draw (0,0-\nobishiro) to [out=270,in=270] (0+\kankaku,0-\nobishiro);
\end{tikzpicture}
\;\;\;\cdot\;\;\;
\begin{tikzpicture}[line cap=round,line join=round,x=1.0cm,y=1.0cm, scale=0.3, baseline={([yshift=-.6ex]current bounding box.center)}, thick, shift={(0,0)}, scale=0.7]
  \def\tate{3} 
  \def\yoko{0.5} 
  \def\sen{7} 
  \def\gaisen{2} 
  \pgfmathsetmacro\hankei{\tate*1.7} 
  \def\hankeinosa{2} 
  \def\nobishiro{1}
  \def\kankaku{6}
  \draw (0+\nobishiro,\tate/2+\tate/2+\tate/2) -- (0+\nobishiro,0);
  \draw (-\nobishiro,\tate+\tate/2) -- (+\nobishiro,\tate+\tate/2);
  \draw (-\nobishiro,\tate) -- (+\nobishiro,\tate);
  \fill (0,\tate/2) circle (1mm);
  \fill (0,\tate*0.75) circle (1mm);
  \fill (0,\tate*0.25) circle (1mm);
  \draw (-\nobishiro,0) -- (+\nobishiro,0);
  \draw (\nobishiro,\tate*0.75) -- (+\nobishiro+\nobishiro+\nobishiro,\tate*0.75);
  \draw (\nobishiro,\tate*0.75) node [below right] {$M$};
\end{tikzpicture}
\;\;\;:=\;\;\;
\begin{tikzpicture}[line cap=round,line join=round,x=1.0cm,y=1.0cm, scale=0.3, baseline={([yshift=-.6ex]current bounding box.center)}, thick, shift={(0,0)}, scale=0.7]
  \def\tate{3} 
  \def\yoko{0.5} 
  \def\sen{7} 
  \def\gaisen{2} 
  \pgfmathsetmacro\hankei{\tate*1.7} 
  \def\hankeinosa{2} 
  \def\nobishiro{1}
  \def\kankaku{6}
  \draw (0,\tate/2+\tate/2+\tate/2+\nobishiro) -- (0,0-\nobishiro);
  \draw (-\nobishiro,\tate+\tate/2) -- (+\kankaku/2,\tate+\tate/2);
  \draw (-\nobishiro,\tate) -- (+\kankaku/2,\tate);
  \fill (0+\nobishiro,\tate/2) circle (1mm);
  \fill (0+\nobishiro,\tate*0.75) circle (1mm);
  \fill (0+\nobishiro,\tate*0.25) circle (1mm);
  \draw (-\nobishiro,0) -- (+\kankaku/2,0);
  \draw (0+\kankaku,\tate/2+\tate/2+\tate/2+\nobishiro) -- (0+\kankaku,0-\nobishiro);
  \draw (+\kankaku/2,\tate*0.75) -- (+\nobishiro+\kankaku,\tate*0.75);
  \draw (0,\tate/2+\tate/2+\tate/2+\nobishiro) to [out=90,in=90] (0+\kankaku,\tate/2+\tate/2+\tate/2+\nobishiro);
  \draw (0,0-\nobishiro) to [out=270,in=270] (0+\kankaku,0-\nobishiro);
  \draw (0+\kankaku/2,\tate/2+\tate/2+\tate/2) -- (0+\kankaku/2,0);
\end{tikzpicture}
\end{eqnarray}
is $\{0,1\}$ and the eigenvalue $1$ is nondegenerate.
Here, $B^{ij\dagger}$ represents  
the hermitian conjugation of
$(B^{ij})_{ab}$
with respect to the virtual legs $a,b$,
$(B^{ij\dagger})_{ab}=
(B^{ij})^*_{ba}$. 
\\
(ii) the spectrum of the $(k-1,1)$-transfer matrix
\begin{eqnarray}
    T^{(k-1,1)}[B_{1},...,B_{k-1};B_{k}]:=\frac{1}{c\mathsf{D}}\sum_{j_1,...,j_k} B_{k}^{j_{1}, j_{k}\dagger}\otimes B_{1}^{j_1 j_2}\otimes \cdots\otimes
  B_{k-1}^{j_{k-1} j_{k}} 
\end{eqnarray}
with respect to the left action on $M\in\mathbb{C}^{\chi_1\cdots \chi_{k-1}}\otimes\mathbb{C}^{\chi_k\ast}$ defined as
\begin{eqnarray}
\begin{tikzpicture}[line cap=round,line join=round,x=1.0cm,y=1.0cm, scale=0.3, baseline={([yshift=-.6ex]current bounding box.center)}, thick, shift={(0,0)}, scale=0.8]
  \def\tate{3} 
  \def\yoko{0.5} 
  \def\sen{7} 
  \def\gaisen{2} 
  \pgfmathsetmacro\hankei{\tate*1.7} 
  \def\hankeinosa{2} 
  \def\nobishiro{1}
  \def\kankaku{-6}
  \draw (0,\tate/2+\tate/2+\tate/2+\nobishiro) -- (0,0-\nobishiro);
  \draw (-\nobishiro,\tate+\tate/2) -- (+\nobishiro,\tate+\tate/2);
  \draw (+\nobishiro,\tate+\tate/2) node [right] {$B_{1}$};
  \draw (-\nobishiro,\tate) -- (+\nobishiro,\tate);
  \draw (+\nobishiro,\tate) node [right] {$B_{2}$};
  \fill (0-\nobishiro/2,\tate/2) circle (1mm);
  \fill (0+\nobishiro/2,\tate/2) circle (1mm);
  \fill (0-\nobishiro/2,\tate*0.75) circle (1mm);
  \fill (0+\nobishiro/2,\tate*0.75) circle (1mm);
  \fill (0-\nobishiro/2,\tate*0.25) circle (1mm);
  \fill (0+\nobishiro/2,\tate*0.25) circle (1mm);
  \draw (-\nobishiro,0) -- (+\nobishiro,0);
  \draw (+\nobishiro,0) node [right] {$B_{k-1}$};
  \draw (0+\kankaku,\tate/2+\tate/2+\tate/2+\nobishiro) -- (0+\kankaku,0-\nobishiro);
  \draw (-\nobishiro+\kankaku,\tate*0.75) -- (+\nobishiro+\kankaku,\tate*0.75);
  \draw (+\nobishiro+\kankaku,\tate*0.75) node [right] {$B_{k}^{\dagger}$};
  \draw (0,\tate/2+\tate/2+\tate/2+\nobishiro) to [out=90,in=90] (0+\kankaku,\tate/2+\tate/2+\tate/2+\nobishiro);
  \draw (0,0-\nobishiro) to [out=270,in=270] (0+\kankaku,0-\nobishiro);
\end{tikzpicture}
\;\;\;\cdot\;\;\;
\begin{tikzpicture}[line cap=round,line join=round,x=1.0cm,y=1.0cm, scale=0.3, baseline={([yshift=-.6ex]current bounding box.center)}, thick, shift={(0,0)}, scale=0.7]
  \def\tate{3} 
  \def\yoko{0.5} 
  \def\sen{7} 
  \def\gaisen{2} 
  \pgfmathsetmacro\hankei{\tate*1.7} 
  \def\hankeinosa{2} 
  \def\nobishiro{1}
  \def\kankaku{6}
  \draw (0-\nobishiro,\tate/2+\tate/2+\tate/2) -- (0-\nobishiro,0);
  \draw (-\nobishiro,\tate+\tate/2) -- (+\nobishiro,\tate+\tate/2);
  \draw (-\nobishiro,\tate) -- (+\nobishiro,\tate);
  \fill (0,\tate/2) circle (1mm);
  \fill (0,\tate*0.75) circle (1mm);
  \fill (0,\tate*0.25) circle (1mm);
  \draw (-\nobishiro,0) -- (+\nobishiro,0);
  \draw (-\nobishiro-\nobishiro-\nobishiro,\tate*0.75) -- (-\nobishiro,\tate*0.75);
  \draw (0+\nobishiro/2,\tate/2) node [right] {$M$};
\end{tikzpicture}
\;\;\;:=\;\;\;
\begin{tikzpicture}[line cap=round,line join=round,x=1.0cm,y=1.0cm, scale=0.3, baseline={([yshift=-.6ex]current bounding box.center)}, thick, shift={(0,0)}, scale=0.7]
  \def\tate{3} 
  \def\yoko{0.5} 
  \def\sen{7} 
  \def\gaisen{2} 
  \pgfmathsetmacro\hankei{\tate*1.7} 
  \def\hankeinosa{2} 
  \def\nobishiro{1}
  \def\kankaku{-6}
  \draw (0,\tate/2+\tate/2+\tate/2+\nobishiro) -- (0,0-\nobishiro);
  \draw (\nobishiro,\tate+\tate/2) -- (+\kankaku/2,\tate+\tate/2);
  \draw (\nobishiro,\tate) -- (+\kankaku/2,\tate);
  \fill (0-\nobishiro,\tate/2) circle (1mm);
  \fill (0-\nobishiro,\tate*0.75) circle (1mm);
  \fill (0-\nobishiro,\tate*0.25) circle (1mm);
  \draw (\nobishiro,0) -- (+\kankaku/2,0);
  \draw (0+\kankaku,\tate/2+\tate/2+\tate/2+\nobishiro) -- (0+\kankaku,0-\nobishiro);
  \draw (+\kankaku/2,\tate*0.75) -- (-\nobishiro+\kankaku,\tate*0.75);
  \draw (0,\tate/2+\tate/2+\tate/2+\nobishiro) to [out=90,in=90] (0+\kankaku,\tate/2+\tate/2+\tate/2+\nobishiro);
  \draw (0,0-\nobishiro) to [out=270,in=270] (0+\kankaku,0-\nobishiro);
  \draw (0+\kankaku/2,\tate/2+\tate/2+\tate/2) -- (0+\kankaku/2,0);
\end{tikzpicture}
\end{eqnarray}
is $\{0,1\}$ and the eigenvalue $1$ is nondegenerate.

\end{proposition}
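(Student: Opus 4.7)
The plan is to reduce Proposition \ref{thm:fp_unitary} directly to Proposition \ref{thm:fp} by using that $B_k$ is an MPU. The MPU property means the adjoint MPO $\mathcal{O}_L[\{B_k\}]^{\dagger}$ is represented by the tensors $\tilde{B}_k^{ij} := B_k^{j i\dagger}$ (swap the two physical indices and apply the virtual-leg dagger of the statement), and furthermore $\mathcal{O}_L[\{B_k\}]\,\mathcal{O}_L[\{B_k\}]^{\dagger}=\mathbb{1}=\mathcal{O}_L[\{B_k\}]^{\dagger}\,\mathcal{O}_L[\{B_k\}]$ on the physical Hilbert space $(\mathbb{C}^{\mathsf{D}})^{\otimes L}$.

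For part (i), I would compose the hypothesis on the right by $\mathcal{O}_L[\{B_k\}]^{\dagger}$ to obtain
\begin{eqnarray*}
\mathcal{O}_{L}\!\left[\left\{\textstyle\sum_{j_2,\ldots,j_{k}}B_{1}^{j_1 j_2}\otimes\cdots\otimes B_{k-1}^{j_{k-1} j_{k}}\otimes \tilde{B}_k^{j_k j_1}\right\}\right] = c^L\,\mathcal{O}_{L}[\{\delta^{j_1 j_1}\}],
\end{eqnarray*}
which is exactly the hypothesis of Proposition \ref{thm:fp} for the $k$ MPO tensors $B_1,\ldots,B_{k-1},\tilde{B}_k$. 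I would verify that $\tilde{B}_k$ is a normal MPO tensor (the adjoint operation on MPOs preserves normality, since its transfer matrix is the complex conjugate of that of $B_k$ up to a similarity and thus has the same spectral radius/multiplicity structure). Applying Proposition \ref{thm:fp} then yields that the $k$-transfer matrix built from $B_1,\ldots,B_{k-1},\tilde{B}_k$ has spectrum $\{0,1\}$ with eigenvalue $1$ nondegenerate. A direct unpacking of the indices shows that this $k$-transfer matrix is identical, entry by entry, to $T^{(k-1,1)}[B_1,\ldots,B_{k-1};B_k]$ as drawn in the statement.

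For part (ii), I would instead compose on the \emph{left} using $\mathcal{O}_L[\{B_k\}]^{\dagger}\mathcal{O}_L[\{B_k\}]=\mathbb{1}$, producing
\begin{eqnarray*}
\mathcal{O}_{L}\!\left[\left\{\textstyle\sum_{j_2,\ldots,j_{k}}\tilde{B}_k^{j_k j_1}\otimes B_{1}^{j_1 j_2}\otimes\cdots\otimes B_{k-1}^{j_{k-1} j_{k}}\right\}\right] = c^L\,\mathcal{O}_{L}[\{\delta\}],
\end{eqnarray*}
and again Proposition \ref{thm:fp} yields the claimed spectrum for the left-action transfer matrix.

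The main obstacle I anticipate is not conceptual but a careful bookkeeping step: pinning down, in the paper's conventions, that the adjoint MPO is generated exactly by $B_k^{ji\dagger}$ (physical indices swapped, virtual dagger as defined), so that the $k$-transfer matrix of Proposition \ref{thm:fp} lines up with the pictorial definition of $T^{(k-1,1)}$ in the diagrams. This amounts to tracking a cyclic relabeling of virtual indices inside $\mathrm{Tr}[\tilde B_k^{i_1 j_1}\cdots \tilde B_k^{i_L j_L}]=\mathrm{Tr}[B_k^{j_1 i_1}\cdots B_k^{j_L i_L}]^{\ast}$. Once this identification is made, the two parts are essentially two instances of Proposition \ref{thm:fp}, and no new trace argument is needed beyond the $\sum_i\xi_i^L=1$ step already established there.
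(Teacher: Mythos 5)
The paper itself states this proposition without proof; the intended argument is evidently the same trace computation as in Prop.~\ref{thm:fp}: one finds
$\mathrm{Tr}[(T^{(k-1,1)})^{L}]=\frac{1}{c^{L}\mathsf{D}^{L}}\sum_{\{j_1\},\{j_k\}}\big|\mathrm{Tr}[\textstyle\prod_m B_k^{j_1^{(m)}j_k^{(m)}}]\big|^{2}=\frac{1}{\mathsf{D}^{L}}\mathrm{Tr}\big[\mathcal{O}_{L}[\{B_k\}]\,\mathcal{O}_{L}[\{B_k\}]^{\dagger}\big]=1$,
where the hypothesis collapses the first $k-1$ layers onto $B_k$ and unitarity supplies the last equality; then $\sum_i\xi_i^{L}=1$ for all $L$ forces the claimed spectrum. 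Your idea of instead composing with the inverse MPO and invoking Prop.~\ref{thm:fp} wholesale is a legitimate alternative and would spare you from redoing the trace argument.

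However, as written your reduction has a concrete gap in exactly the place you flagged. The adjoint MPO is \emph{not} generated by $\tilde{B}_k^{ij}:=B_k^{ji\dagger}$ in the paper's conventions: the paper states $\mathcal{O}_{L}[\{B^{ij}\}]^{\dagger}=\mathcal{O}_{L}[\{B^{ji\ast}\}]$, i.e.\ entrywise conjugation with the physical indices swapped but \emph{no} transposition of the virtual legs. Your tensor $B_k^{ji\dagger}=(B_k^{ji\ast})^{T}$ generates the spatially reflected adjoint, $R\,\mathcal{O}_{L}[\{B_k\}]^{\dagger}R$: indeed $\mathrm{Tr}[B^{j_1i_1\dagger}\cdots B^{j_Li_L\dagger}]=\overline{\mathrm{Tr}[B^{j_Li_L}\cdots B^{j_1i_1}]}$, and undoing the order reversal is a reflection of sites $2,\dots,L$, not a cyclic relabeling, so your displayed trace identity fails unless the MPU happens to be reflection symmetric. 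Consequently the composed MPO you feed into Prop.~\ref{thm:fp} is $c^{L}\,\mathcal{O}_{L}[\{B_k\}]\cdot R\,\mathcal{O}_{L}[\{B_k\}]^{\dagger}R$, which is not proportional to the identity MPO in general, and the hypothesis of Prop.~\ref{thm:fp} is not verified. The fix is to take $\tilde{B}_k^{ij}=B_k^{ji\ast}$: then $\mathcal{O}_{L}[B_1\cdots B_{k-1}]\,\mathcal{O}_{L}[\{\tilde{B}_k\}]=c^{L}\mathbb{1}$ genuinely holds, $\tilde{B}_k$ is normal, and the resulting $k$-transfer matrix $\frac{1}{c\mathsf{D}}\sum B_1^{j_1j_2}\otimes\cdots\otimes B_{k-1}^{j_{k-1}j_k}\otimes B_k^{j_1j_k\ast}$ is precisely the vectorization of the superoperator $M\mapsto(\prod B)\,M\,B_k^{\dagger}$ that the diagram defines as $T^{(k-1,1)}$ (right multiplication by $B^{\dagger}$ corresponds to tensoring with $\bar{B}$, not with $B^{\dagger}$). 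With that correction the two parts do reduce to two instances of Prop.~\ref{thm:fp} as you intend.
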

Since 
$T^{(1,1)}[B_1;B_2]$ 
is nothing but the usual mixed transfer matrix \cite{P-GWSV08} for the normal MPU
matrices $\{B_1^{ij}\}$ and $\{B_2^{ij}\}$, the fixed point of $T^{(1,1)}[B_1;B_2]$ is invertible. 



\subsubsection{The Definition of the Quadruple Inner Product}


With these preliminaries, 
our definition of the quadruple inner 
product of four quantum states
can be summarized as follows 
in terms of the fixed points
of 2- and 3-transfer matrices.
Let $\ket{\psi_{\alpha}}$, $\ket{\psi_{\beta}}$, $\ket{\psi_{\gamma}}$ and $\ket{\psi_{\delta}}$ be physically identical states and assume that they admit the semi-injective PEPS representation. 
For each pair of these states, 
we obtain the set of transition MPO tensors 
$\{B^{ij}_{\alpha\beta}\}$. 
As we mentioned in Sec.\ \ref{sec:assumption}, 
we assume that the MPO $\{\mathcal{O}_L[\{B^{ij}_{\alpha\beta}\}]\}$
are MPU.
By using this assumption, 
$\mathcal{O}_{L}[\{B^{ij}_{\alpha\beta}\}]^{-1}
=\mathcal{O}_{L}[\{B^{ij}_{\alpha\beta}\}]^{\dagger}
=\mathcal{O}_{L}[\{B^{ji\ast}_{\alpha\beta}\}]$. 

For each pair of these states, 
$\ket{\psi_\alpha}$ and $\ket{\psi_\beta}$, say, 
we have two transition MPU generated by $\{B^{ij}_{\alpha\beta}\}$ and $\{B^{ij}_{\beta\alpha}\}$. By the fundamental theorem for semi-injective PEPS, we can show that 
\begin{eqnarray}\label{eq:2_transfer_condition}
    \mathcal{O}_{L}[\{\sum_{k}B_{\alpha\beta}^{ik}\otimes B_{\beta\alpha}^{kj}\}]
  =(c_{2})^L\mathcal{O}_{L}[\{\delta^{ij}\}],
\end{eqnarray}
for some $c_{2}\in\mathbb{C}^{\times}$. This equation implies that
\begin{eqnarray}
    \mathcal{O}_{L}[\{B^{ij}_{\beta\alpha}\}]=(c_{\alpha\beta})^L\mathcal{O}_{L}[\{B^{ij}_{\alpha\beta}\}]^{-1}=(c_{\alpha\beta})^L\mathcal{O}_{L}[\{B^{ji\ast}_{\alpha\beta}\}],
\end{eqnarray}
that is, $\{B^{ij}_{\beta\alpha}\}$ and $\{B^{ji\ast}_{\alpha\beta}\}$ are gauge equivalent. Therefore, we can take the gauge 
in which the MPO matrices satisfy 
\begin{eqnarray}
    B^{ij}_{\beta\alpha}=B^{ji\ast}_{\alpha\beta}.
\end{eqnarray}
Since Eq.\ \eqref{eq:2_transfer_condition} holds, we can define the $2$-transfer matrix
\begin{eqnarray}
    T^{(2)}_{\alpha\beta}:=T^{(2)}[B_{\alpha\beta},B_{\beta\alpha}],
\end{eqnarray}
and apply Prop.\ \ref{thm:fp}. Let $\Lambda^{R(2)}_{\alpha\beta}$ (resp. $\Lambda^{L(2)}_{\alpha\beta}$) denote the right (resp. left) fixed point of $T^{(2)}_{\alpha\beta}$.

For each triplet of these states, 
say $\ket{\psi_\alpha}$, $\ket{\psi_\beta}$ and $\ket{\psi_\gamma}$,
we have three transition MPU
generated by $\{B^{ij}_{\alpha\beta}\}$, $\{B^{ij}_{\beta\gamma}\}$ and $\{B^{ij}_{\gamma\alpha}\}$. By the fundamental theorem for semi-injective PEPS, we can show that 
\begin{eqnarray}
    \mathcal{O}_{L}[\{\sum_{k}B_{\alpha\beta}^{ik}\otimes B_{\beta\gamma}^{kj}\}]
  =c_{3}^L\mathcal{O}_{L}[\{B_{\alpha\gamma}^{ij}\}],
\end{eqnarray}
for some $c_{3}\in\mathbb{C}^{\times}$. Thus we can define the $(2,1)$-transfer matrix
\begin{eqnarray}
    T^{R(2,1)}_{\alpha\beta\gamma}:=T^{(2,1)}[B_{\alpha\beta},B_{\beta\gamma};\Lambda_{\alpha\gamma}^{R}B_{\alpha\gamma}(\Lambda_{\alpha\gamma}^{R})^{-1}],
\end{eqnarray}
and apply Prop.\ \ref{thm:fp_unitary}. Here, $\Lambda_{\alpha\gamma}^{R}$ is 
the unique right fixed point of the transfer matrix $
T^{(1,1)}[B_{\alpha\gamma};B_{\alpha\gamma}]
$.
Note that although the phase of $\Lambda_{\alpha\gamma}^{R}$ is indeterminate from the fix point equation, $T^{R(2,1)}_{\alpha\beta\gamma}$ does not depend on the choice of the phase factor. 
Let $\Lambda^{R(3)}_{\alpha\beta\gamma}$ denote the right fixed point of $T^{R(2,1)}_{\alpha\beta\gamma}$. Similarly,
we can define the left $(2,1)$-transfer matrix
\begin{eqnarray}
    T^{L(2,1)}_{\alpha\beta\gamma}:=T^{(2,1)}[B_{\alpha\beta},B_{\beta\gamma};(\Lambda_{\alpha\gamma}^{L})^{-1}B_{\alpha\gamma}\Lambda_{\alpha\gamma}^{L}],
\end{eqnarray}
and apply the Prop.\ \ref{thm:fp_unitary}. Here, $\Lambda_{\alpha\gamma}^{L}$ is a unique left fixed point of the transfer matrix 
$
T^{(1,1)}[B_{\alpha\gamma};B_{\alpha\gamma}]
$.
Note that although the phase of $\Lambda_{\alpha\gamma}^{L}$ is indeterminate from the fix point equation, $T^{L(2,1)}_{\alpha\beta\gamma}$ does not depend on the choice of the phase factor. 
Let $\Lambda^{L(3)}_{\alpha\beta\gamma}$ denote the left fixed point of $T^{L(2,1)}_{\alpha\beta\gamma}$.




Now, we have the fixed point tensors $\{\Lambda^{R(2)}_{\alpha\beta},\Lambda^{L(2)}_{\alpha\beta}\}_{\alpha\beta}$ 
for each pair, and
$\{\Lambda^{R(3)}_{\alpha\beta\gamma},\Lambda^{L(3)}_{\alpha\beta\gamma}\}_{\alpha\beta\gamma}$
for each triplet. Since these tensors are constructed as 
the fixed points of the transfer matrices,
there is an ambiguity as to how the phases are assigned.
In particular, the phases of $\Lambda_{\alpha\beta}^{L(2)}$
and $\Lambda_{\alpha\beta}^{R(2)}$
(as well as those of $\Lambda_{\alpha\beta\gamma}^{L(3)}$ and $\Lambda_{\alpha\beta\gamma}^{R(3)}$) can be determined independently. 
However, we note the following proposition:
  \footnote{In the following, we will denote the fixed point tensor as a white box.} 
\begin{proposition}\label{prop3}\;\\ 
\\
We define the inner product of $\Lambda_{\alpha\beta}^{L(2)}$ and $\Lambda_{\alpha\beta}^{R(2)}$ as
\begin{eqnarray}
    (\Lambda_{\alpha\beta}^{L(2)},\Lambda_{\alpha\beta}^{R(2)}):=
    \begin{tikzpicture}[line cap=round,line join=round,x=1.0cm,y=1.0cm, scale=0.3, baseline={([yshift=-.6ex]current bounding box.center)}, thick, shift={(0,0)}, scale=0.7]
  \def\tate{3} 
  \def\yoko{0.5} 
  \def\sen{7} 
  \draw (0,0) rectangle ++(\yoko,\tate);
  \node[anchor=east] at (0,\tate/2) {$\Lambda_{\alpha\beta}^{L(2)}$};
  \draw (\yoko+\sen,0) rectangle ++(\yoko,\tate);
  \node[anchor=west] at (\yoko+\sen+\yoko,\tate/2) 
  {$\Lambda_{\alpha\beta}^{R(2)}$};
  \draw (\yoko,\tate) -- (\yoko+\sen,\tate);
  \draw (\yoko,0) -- (\yoko+\sen,0);
\end{tikzpicture}
\end{eqnarray}
 and the inner product of $\Lambda_{\alpha\beta\gamma}^{L(3)}$ and $\Lambda_{\alpha\beta\gamma}^{R(3)}$ as
  \begin{eqnarray}\label{eq:normalization_3leg}
    (\Lambda_{\alpha\beta\gamma}^{L(3)},\Lambda_{\alpha\beta\gamma}^{R(3)})
    :=
    \begin{tikzpicture}[line cap=round,line join=round,x=1.0cm,y=1.0cm, scale=0.3, baseline={([yshift=-.4ex]current bounding box.center)}, thick, shift={(0,0)}, scale=0.7]
  \def\tate{3} 
  \def\yoko{0.5} 
  \def\sen{7} 
  \def\gaisen{2} 
  \draw (0,0) rectangle ++(\yoko,\tate);
  \node[anchor=east] at (0,\tate) {$\Lambda_{\alpha\beta\gamma}^{L(2,1)}$};
  \draw (\yoko+\sen,0) rectangle ++(\yoko,\tate);
  \node[anchor=west] at (\yoko+\sen+\yoko,\tate)  {$\Lambda_{\alpha\beta\gamma}^{R(2,1)}$};
  \draw (\yoko,\tate) -- (\yoko+\sen,\tate);
  \draw (\yoko,0) -- (\yoko+\sen,0);
  \draw (\yoko+\sen+\yoko,\tate/2) -- (\yoko+\sen+\yoko+\gaisen,\tate/2);
  \draw (-\gaisen,\tate/2) -- (0,\tate/2);
  \draw (-\gaisen,-\tate/2) -- (\yoko+\sen+\yoko+\gaisen,-\tate/2);
  \draw (-\gaisen,\tate/2) arc (90:270:\tate/2);
  \draw (\yoko+\sen+\yoko+\gaisen,\tate/2) arc (90:-90:\tate/2);
\end{tikzpicture}\, .
\end{eqnarray}
  Then, they are non-zero complex numbers.
\end{proposition}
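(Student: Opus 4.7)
The plan is to recognize each of the two pairings as the standard duality pairing between the left and right eigenvectors of the relevant transfer matrix, and then invoke a general linear-algebra fact: if a linear operator $T$ has a simple (algebraically nondegenerate) eigenvalue $\lambda$, then for any left eigenvector $L$ and right eigenvector $R$ attached to $\lambda$, the pairing $L\cdot R$ is nonzero. Indeed, simplicity of $\lambda$ means the spectral projector onto $\ker(T-\lambda)$ is one-dimensional and can be written as $P=RL/(L\cdot R)$; if $L\cdot R$ vanished then no such projector could exist and $\lambda$ would fail to be a simple eigenvalue, contradicting the hypothesis. (Equivalently: decompose $V=\mathrm{span}(R)\oplus W$ with $W$ the $T$-invariant complement coming from the Jordan decomposition, and note that $L$ must take a nonzero value on $R$ since it vanishes on $W$ and is itself nonzero.)

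For the $2$-leg pairing, I would first observe that the diagram defining $(\Lambda^{L(2)}_{\alpha\beta},\Lambda^{R(2)}_{\alpha\beta})$ is exactly the natural contraction that implements the duality pairing on the space $\mathbb{C}^{\chi}\otimes\mathbb{C}^{\chi}$ on which the $2$-transfer matrix $T^{(2)}_{\alpha\beta}$ acts: $\Lambda^{R(2)}_{\alpha\beta}$ is viewed as a vector, $\Lambda^{L(2)}_{\alpha\beta}$ as a covector, and the diagram sums $\Lambda^{L(2)}_{\alpha\beta}$ against $\Lambda^{R(2)}_{\alpha\beta}$ along the matched virtual legs. By Eq.\ \eqref{eq:2_transfer_condition} and Proposition \ref{thm:fp} applied to $(\{B^{ij}_{\alpha\beta}\},\{B^{ij}_{\beta\alpha}\})$, the operator $T^{(2)}_{\alpha\beta}$ has spectrum $\{0,1\}$ with $1$ nondegenerate, so the linear-algebra fact above immediately gives $(\Lambda^{L(2)}_{\alpha\beta},\Lambda^{R(2)}_{\alpha\beta})\neq 0$.

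For the $3$-leg pairing the strategy is the same, but one must first parse the diagram in Eq.\ \eqref{eq:normalization_3leg}: the two horizontal lines joining the $\Lambda^{L(3)}$ and $\Lambda^{R(3)}$ boxes across the top and the bottom play the role of the ``$B_{\alpha\beta}$'' and ``$B_{\beta\gamma}$'' virtual strands, while the loop formed by the outer arcs and the middle horizontal line implements the contraction of the ``$B_{\alpha\gamma}^{\dagger}$'' strand (including the $\Lambda^{R}_{\alpha\gamma}$ / $\Lambda^{L}_{\alpha\gamma}$ conjugation twists built into $T^{R(2,1)}_{\alpha\beta\gamma}$ and $T^{L(2,1)}_{\alpha\beta\gamma}$). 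I would verify, by inspecting how the left and right actions of $T^{(2,1)}$ are defined in Proposition \ref{thm:fp_unitary}, that this diagram coincides with the canonical bilinear pairing between the domains on which the left and right actions live. Granted that identification, Proposition \ref{thm:fp_unitary}, combined with the MPU assumption on the transition MPO (which guarantees the fundamental factorization relating $B_{\alpha\beta}\cdot B_{\beta\gamma}$ to $B_{\alpha\gamma}$), gives that $T^{L(2,1)}_{\alpha\beta\gamma}$ and $T^{R(2,1)}_{\alpha\beta\gamma}$ have spectrum $\{0,1\}$ with $1$ nondegenerate. Applying the same simple-eigenvalue argument concludes that $(\Lambda^{L(3)}_{\alpha\beta\gamma},\Lambda^{R(3)}_{\alpha\beta\gamma})\neq 0$.

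The main obstacle will be the second step of the $3$-leg case: translating the tetrahedron-like diagram of Eq.\ \eqref{eq:normalization_3leg} into the honest pairing $L\cdot R$ between the left and right eigenspaces of the $(2,1)$-transfer matrices. Because $T^{L(2,1)}$ and $T^{R(2,1)}$ are defined on spaces that include the dual factor $\mathbb{C}^{\chi_k\ast}$ and involve the auxiliary tensors $\Lambda^{L}_{\alpha\gamma},\Lambda^{R}_{\alpha\gamma}$ conjugating $B_{\alpha\gamma}$, care is needed to check that the loop at the outer boundary of the diagram realizes the correct contraction and that the $\Lambda^{L/R}_{\alpha\gamma}$ factors cancel cleanly (they should, since $T^{R(2,1)}$ was designed to be independent of the phase of $\Lambda^{R}_{\alpha\gamma}$, and similarly for $L$). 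Once this bookkeeping is done, non-vanishing is an immediate consequence of the simple-eigenvalue linear-algebra lemma applied via Propositions \ref{thm:fp} and \ref{thm:fp_unitary}.
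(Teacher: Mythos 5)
Your argument for the two-leg pairing is correct and is genuinely different from the paper's: you invoke the abstract spectral fact that a simple eigenvalue forces the left--right eigenvector pairing to be nonzero, whereas the paper instead constructs $\Lambda^{L(2)}_{\alpha\beta}$ and $\Lambda^{R(2)}_{\alpha\beta}$ explicitly as concatenations of reduction tensors with powers of $T^{(2)}_{\alpha\beta}$ (Lemma \ref{thm:fp_as_reduction}) and then evaluates the contraction directly using the defining identity of a reduction, which pins the pairing to an explicit nonzero value. Your route is shorter and more conceptual for this case, since $\Lambda^{L(2)}_{\alpha\beta}$ and $\Lambda^{R(2)}_{\alpha\beta}$ really are the left and right fixed points of the \emph{same} operator with a nondegenerate eigenvalue $1$; the paper's route buys an explicit normalization that is reused later.

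For the three-leg pairing, however, there is a genuine gap that your final paragraph flags but does not close. The tensors $\Lambda^{L(3)}_{\alpha\beta\gamma}$ and $\Lambda^{R(3)}_{\alpha\beta\gamma}$ are \emph{not} the left and right fixed points of a single operator: $\Lambda^{R(3)}_{\alpha\beta\gamma}$ is the right fixed point of $T^{R(2,1)}_{\alpha\beta\gamma}$, whose third strand is $\Lambda^{R}_{\alpha\gamma}B_{\alpha\gamma}(\Lambda^{R}_{\alpha\gamma})^{-1}$, while $\Lambda^{L(3)}_{\alpha\beta\gamma}$ is the left fixed point of $T^{L(2,1)}_{\alpha\beta\gamma}$, whose third strand is $(\Lambda^{L}_{\alpha\gamma})^{-1}B_{\alpha\gamma}\Lambda^{L}_{\alpha\gamma}$. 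These two operators are only conjugate, via an invertible map $S$ built from $\Lambda^{R}_{\alpha\gamma}\Lambda^{L}_{\alpha\gamma}$ acting on the third leg; a left fixed point of $T^{L(2,1)}$ becomes a left fixed point of $T^{R(2,1)}$ only after composing with $S^{-1}$. The simple-eigenvalue lemma therefore guarantees $(\Lambda^{L(3)}S^{-1})\cdot\Lambda^{R(3)}\neq 0$, but the diagram in Eq.\ \eqref{eq:normalization_3leg} computes the bare contraction $\Lambda^{L(3)}\cdot\Lambda^{R(3)}$, and $S\,\Lambda^{R(3)}$ is not an eigenvector of either operator, so nonvanishing does not follow. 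Closing this gap essentially forces you back to the paper's argument: by Lemma \ref{thm:fp_as_reduction} both tensors arise from a single reduction pair $(V_{\alpha\beta\gamma},W_{\alpha\beta\gamma})$ from $\{\sum_j B^{ij}_{\alpha\beta}\otimes B^{jk}_{\beta\gamma}\}$ to $\{B^{ik}_{\alpha\gamma}\}$, so that the closed diagram reduces, via the broken zipper identity, to $V_{\alpha\beta\gamma}(\cdots)W_{\alpha\beta\gamma}$ traced against the $B_{\alpha\gamma}$ strand, which the reduction identity evaluates to a manifestly nonzero number.
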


Using Prop.\ \ref{prop3},
once we chose
the phase of the right fixed points, 
we 
can determine the phase of the left fixed point such that 
the inner product is $1$.
We will show this proposition in Sec.\ \ref{sec:proof}

For 
the quartet of
states $\ket{\psi_\alpha}$, $\ket{\psi_\beta}$, $\ket{\psi_\gamma}$ and $\ket{\psi_\delta}$, we can also show that 
\begin{eqnarray}\label{eq:4_transfer_condition}
    \mathcal{O}_{L}[\{\sum_{k,l,m}B_{\alpha\beta}^{ik}\otimes B_{\beta\gamma}^{kl}\otimes B_{\gamma\delta}^{lm}\otimes B_{\delta\alpha}^{mj}\}]
  =c_{4}^L\mathcal{O}_{L}[\{\delta^{ij}\}]
\end{eqnarray}
for some complex number $c_4\in\mathbb{C}^{\times}$. Thus we can define a $4$-transfer matrix 
\begin{eqnarray}
    T^{(4)}_{\alpha\beta\gamma\delta}:=T^{(4)}[B_{\alpha\beta}, B_{\beta\gamma}, B_{\gamma\delta},B_{\delta\alpha}].
\end{eqnarray}
Finally, by using these tensors, we define the quadruple inner product as follows:
\begin{definition}\;\\
\\
For physically identical four states $\ket{\psi_{\alpha}}$, $\ket{\psi_{\beta}}$, 
$\ket{\psi_{\gamma}}$ and $\ket{\psi_{\delta}}$,
(i) we define $N_{\alpha\beta\gamma\delta}^{1}\in\mathbb{N}$ as a minimal natural number such that   
\begin{eqnarray}
    \left(\;\;\;\begin{tikzpicture}[line cap=round,line join=round,x=1.0cm,y=1.0cm, scale=0.3, baseline={([yshift=-.6ex]current bounding box.center)}, thick, shift={(0,0)}, scale=0.7]
  \def\tate{3} 
  \def\yoko{0.5} 
  \def\sen{7} 
  \def\gaisen{2} 
  \pgfmathsetmacro\hankei{\tate*1.7} 
  \def\hankeinosa{2} 
  \def\nobishiro{1}
  \draw (0,\tate+\tate/2+\tate/2+\nobishiro) -- (0,0-\nobishiro);
  \draw (-\nobishiro,\tate+\tate) -- (+\nobishiro,\tate+\tate);
  \draw (-\nobishiro,\tate) -- (+\nobishiro,\tate);
  \draw (-\nobishiro,\tate/2) -- (+\nobishiro,\tate/2);
  \draw (-\nobishiro,0) -- (+\nobishiro,0);
\end{tikzpicture}
\;\;\;-\;\;\;
    \begin{tikzpicture}[line cap=round,line join=round,x=1.0cm,y=1.0cm, scale=0.3, baseline={([yshift=-.6ex]current bounding box.center)}, thick, shift={(0,0)}, scale=0.7]
  \def\tate{3} 
  \def\yoko{0.5} 
  \def\sen{0} 
  \def\gaisen{2} 
  \pgfmathsetmacro\hankei{\tate*1.7} 
  \def\hankeinosa{2} 
  \def\nobishiro{1}
  \def\kankaku{3}
  \draw (0,0) rectangle ++(\yoko,\tate);
  \draw (-\gaisen,\tate) -- (0,\tate);
  \draw (-\gaisen-\gaisen-\yoko-\gaisen-\yoko,0) -- (0,0);
  \draw (-\gaisen-\yoko,\tate/2) rectangle ++(\yoko,\tate);
  \draw (-\gaisen-\gaisen-\yoko,\tate+\tate/2) -- (-\gaisen-\yoko,\tate+\tate/2);
  \draw (-\gaisen-\gaisen-\yoko-\gaisen-\yoko,\tate/2) -- (-\gaisen-\yoko,\tate/2);
  \draw (-\gaisen-\yoko-\gaisen-\yoko,\tate/2+\tate/2) rectangle ++(\yoko,\tate);
  \draw (-\gaisen-\gaisen-\yoko-\gaisen-\yoko,\tate+\tate/2+\tate/2) -- (-\gaisen-\yoko-\gaisen-\yoko,\tate+\tate/2+\tate/2);
  \draw (-\gaisen-\gaisen-\yoko-\gaisen-\yoko,\tate/2+\tate/2) -- (-\gaisen-\yoko-\gaisen-\yoko,\tate/2+\tate/2);
  \draw (0+\kankaku,0) rectangle ++(\yoko,\tate);
  \draw (\yoko+\kankaku,\tate) -- (\yoko+\gaisen+\kankaku,\tate);
  \draw (\yoko+\kankaku,0) -- (\yoko+\gaisen+\gaisen+\yoko+\gaisen+\yoko+\kankaku,0);
  \draw (\yoko+\gaisen+\kankaku,\tate/2) rectangle ++(\yoko,\tate);
  \draw (\yoko+\gaisen+\yoko+\kankaku,\tate+\tate/2) -- (\yoko+\gaisen+\yoko+\gaisen+\kankaku,\tate+\tate/2);
  \draw (\yoko+\gaisen+\yoko+\kankaku,\tate/2) -- (\yoko+\gaisen+\gaisen+\yoko+\gaisen+\yoko+\kankaku,\tate/2);
  \draw (\yoko+\gaisen+\yoko+\gaisen+\kankaku,\tate/2+\tate/2) rectangle ++(\yoko,\tate);
  \draw (\yoko+\gaisen+\yoko+\gaisen+\yoko+\kankaku,\tate+\tate/2+\tate/2) -- (\yoko+\gaisen+\gaisen+\yoko+\gaisen+\yoko+\kankaku,\tate+\tate/2+\tate/2);
  \draw (\yoko+\gaisen+\yoko+\gaisen+\yoko+\kankaku,\tate/2+\tate/2) -- (\yoko+\gaisen+\gaisen+\yoko+\gaisen+\yoko+\kankaku,\tate/2+\tate/2);
  \draw [dotted] (0+\yoko/2+\kankaku/2,\tate+\tate+\nobishiro) -- (0+\yoko/2+\kankaku/2,-\nobishiro);
  \draw (0+\yoko/2+\kankaku/2,\tate+\tate) node [right]{$\delta^{ij}$};
\end{tikzpicture}\;\;\;\right)^{N_{\alpha\beta\gamma\delta}^{1}}=0, 
\end{eqnarray}
and define $N_{\alpha\beta\gamma\delta}^{2}\in\mathbb{N}$ as a minimal natural number such that   
\begin{eqnarray}
    \left(\;\;\;\begin{tikzpicture}[line cap=round,line join=round,x=1.0cm,y=1.0cm, scale=0.3, baseline={([yshift=-.6ex]current bounding box.center)}, thick, shift={(0,0)}]
  \def\tate{3} 
  \def\yoko{0.5} 
  \def\sen{7} 
  \def\gaisen{2} 
  \pgfmathsetmacro\hankei{\tate*1.7} 
  \def\hankeinosa{2} 
  \def\nobishiro{1}
  \draw (0,\tate+\tate/2+\tate/2+\nobishiro) -- (0,0-\nobishiro);
  \draw (-\nobishiro,\tate+\tate) -- (+\nobishiro,\tate+\tate);
  \draw (-\nobishiro,\tate) -- (+\nobishiro,\tate);
  \draw (-\nobishiro,\tate/2) -- (+\nobishiro,\tate/2);
  \draw (-\nobishiro,0) -- (+\nobishiro,0);
\end{tikzpicture}
\;\;\;-\;\;\;
    \begin{tikzpicture}[line cap=round,line join=round,x=1.0cm,y=1.0cm, scale=0.3, baseline={([yshift=-.6ex]current bounding box.center)}, thick, shift={(0,0)}]
  \def\tate{3} 
  \def\yoko{0.5} 
  \def\sen{0} 
  \def\gaisen{2} 
  \pgfmathsetmacro\hankei{\tate*1.7} 
  \def\hankeinosa{2} 
  \def\nobishiro{1}
  \def\kankaku{3}
  \draw (0,0) rectangle ++(\yoko,\tate+\tate/2);
  \draw (-\gaisen,\tate+\tate/2) -- (0,\tate+\tate/2);
  \draw (-\gaisen-\gaisen-\yoko-\gaisen-\yoko,0) -- (0,0);
  \draw (-\gaisen-\yoko,\tate/2+\tate/2) rectangle ++(\yoko,\tate);
  \draw (-\gaisen-\gaisen-\yoko-\yoko-\gaisen,\tate+\tate/2+\tate/2) -- (-\gaisen-\yoko,\tate+\tate/2+\tate/2);
  \draw (-\gaisen-\gaisen-\yoko-\gaisen-\yoko+\yoko+\gaisen,\tate/2+\tate/2) -- (-\gaisen-\yoko,\tate/2+\tate/2);
  \draw (-\gaisen-\yoko-\gaisen-\yoko,\tate/2) rectangle ++(\yoko,\tate);
  \draw (-\gaisen-\gaisen-\yoko-\gaisen-\yoko,\tate+\tate/2) -- (-\gaisen-\yoko-\gaisen-\yoko,\tate+\tate/2);
  \draw (-\gaisen-\gaisen-\yoko-\gaisen-\yoko,\tate/2) -- (-\gaisen-\yoko-\gaisen-\yoko,\tate/2);
  \draw (0+\kankaku,0) rectangle ++(\yoko,\tate+\tate/2);
  \draw (\yoko+\kankaku,\tate+\tate/2) -- (\yoko+\gaisen+\kankaku,\tate+\tate/2);
  \draw (\yoko+\kankaku,0) -- (\yoko+\gaisen+\gaisen+\yoko+\gaisen+\yoko+\kankaku,0);
  \draw (\yoko+\gaisen+\kankaku,\tate/2+\tate/2) rectangle ++(\yoko,\tate);
  \draw (\yoko+\gaisen+\yoko+\kankaku,\tate+\tate/2+\tate/2) -- (\yoko+\gaisen+\yoko+\gaisen+\kankaku+\gaisen+\yoko,\tate+\tate/2+\tate/2);
  \draw (\yoko+\gaisen+\yoko+\kankaku,\tate/2+\tate/2) -- (\yoko+\gaisen+\gaisen+\yoko+\kankaku,\tate/2+\tate/2);
  \draw (\yoko+\gaisen+\yoko+\gaisen+\kankaku,\tate/2) rectangle ++(\yoko,\tate);
  \draw (\yoko+\gaisen+\yoko+\gaisen+\yoko+\kankaku,\tate+\tate/2) -- (\yoko+\gaisen+\gaisen+\yoko+\gaisen+\yoko+\kankaku,\tate+\tate/2);
  \draw (\yoko+\gaisen+\yoko+\gaisen+\yoko+\kankaku,\tate/2) -- (\yoko+\gaisen+\gaisen+\yoko+\gaisen+\yoko+\kankaku,\tate/2);
  \draw [dotted] (0+\yoko/2+\kankaku/2,\tate+\tate+\nobishiro) -- (0+\yoko/2+\kankaku/2,-\nobishiro);
  \draw (0+\yoko/2+\kankaku/2,\tate+\tate) node [right]{$\delta^{ij}$};
\end{tikzpicture}\;\;\;\right)^{N_{\alpha\beta\gamma\delta}^{2}}=0.
\end{eqnarray}
(ii) we define the quadruple inner product of them as
\begin{eqnarray}
\label{eq:quad_inner_product}
  c_{\alpha\beta\gamma\delta}=\begin{tikzpicture}[line cap=round,line join=round,x=1.0cm,y=1.0cm, scale=0.3, baseline={([yshift=-.6ex]current bounding box.center)}, thick, shift={(0,0)}]
  \def\tate{3} 
  \def\yoko{0.5} 
  \def\sen{7} 
  \def\gaisen{2} 
  \pgfmathsetmacro\hankei{\tate*1.7} 
  \def\hankeinosa{2} 
  \def\nobishiro{1}
  \draw (0,0) rectangle ++(\yoko,\tate+\tate/2);
  \draw (\yoko,\tate+\tate/2) -- (\yoko+\gaisen,\tate+\tate/2);
  \draw (\yoko,0) -- (\yoko+\gaisen+\gaisen+\yoko+\gaisen+\yoko+\gaisen*2,0);
  \draw (0,0) node [left]{$\Lambda_{\alpha\delta}^{L(2)}$};
  \draw (\yoko+\gaisen,\tate/2+\tate/2) rectangle ++(\yoko,\tate);
  \draw (\yoko+\gaisen+\yoko,\tate/2+\tate/2+\tate) -- (\yoko+\gaisen+\gaisen+\yoko+\gaisen+\yoko+\gaisen*2,\tate/2+\tate/2+\tate);
  \draw (\yoko+\gaisen+\yoko,\tate/2+\tate/2) -- (\yoko+\gaisen+\yoko+\gaisen,\tate/2+\tate/2);
  \draw (\yoko+\gaisen+\yoko,\tate/2+\tate/2+\tate) node [left]{$\Lambda_{\alpha\beta\delta}^{L(3)}$};
  \draw (\yoko+\gaisen+\yoko+\gaisen,\tate/2) rectangle ++(\yoko,\tate);
  \draw (\yoko+\gaisen+\yoko+\gaisen+\yoko,\tate+\tate/2) -- (\yoko+\gaisen+\yoko+\gaisen+\yoko+\yoko+\yoko,\tate+\tate/2);
  \draw (\yoko+\gaisen+\yoko+\gaisen+\yoko,\tate/2) -- (\yoko+\gaisen+\gaisen+\yoko+\gaisen+\yoko+\gaisen*2,\tate/2);
  \draw (\gaisen+\yoko+\gaisen+\yoko,\tate/2) node [left]{$\Lambda_{\beta\gamma\delta}^{L(3)}$};
  \draw (\yoko+\gaisen+\yoko+\gaisen+\yoko+3*\gaisen*0.25,\tate+\tate/2+\tate/2+\nobishiro) -- (\yoko+\gaisen+\yoko+\gaisen+\yoko+3*\gaisen*0.25,0-\nobishiro);
  \fill (\yoko+\gaisen+\yoko+\gaisen+\yoko+3*\gaisen*0.25,\tate+\tate/2+\tate/2+\nobishiro) circle (2mm); 
  \fill (\yoko+\gaisen+\yoko+\gaisen+\yoko+3*\gaisen*0.25,0-\nobishiro) circle (2mm);
  \draw (\yoko+\gaisen+\yoko+\gaisen+\yoko+3*\gaisen*0.75,\tate+\tate/2+\tate/2+\nobishiro) -- (\yoko+\gaisen+\yoko+\gaisen+\yoko+3*\gaisen*0.75,0-\nobishiro);
  \fill (\yoko+\gaisen+\yoko+\gaisen+\yoko+3*\gaisen*0.75,\tate+\tate/2+\tate/2+\nobishiro) circle (2mm); 
  \fill (\yoko+\gaisen+\yoko+\gaisen+\yoko+3*\gaisen*0.75,0-\nobishiro) circle (2mm);
  \fill (\yoko+\gaisen+\yoko+\gaisen+\yoko+3*\gaisen*0.375,\tate+\tate/2+\tate/2+\nobishiro/2) circle (1mm); 
  \fill (\yoko+\gaisen+\yoko+\gaisen+\yoko+3*\gaisen*0.5,\tate+\tate/2+\tate/2+\nobishiro/2) circle (1mm); 
  \fill (\yoko+\gaisen+\yoko+\gaisen+\yoko+3*\gaisen*0.625,\tate+\tate/2+\tate/2+\nobishiro/2) circle (1mm); 
  \fill (\yoko+\gaisen+\yoko+\gaisen+\yoko+3*\gaisen*0.375,-\nobishiro/2) circle (1mm); 
  \fill (\yoko+\gaisen+\yoko+\gaisen+\yoko+3*\gaisen*0.5,-\nobishiro/2) circle (1mm); 
  \fill (\yoko+\gaisen+\yoko+\gaisen+\yoko+3*\gaisen*0.625,-\nobishiro/2) circle (1mm); 
  \draw (\yoko+\gaisen+\gaisen+\yoko+\gaisen+\yoko+\gaisen*2+\gaisen+\yoko+\gaisen+\yoko,0) rectangle ++(\yoko,\tate);
  \draw (\yoko+\gaisen+\gaisen+\yoko+\gaisen+\yoko+\gaisen*2+\gaisen+\yoko+\yoko,\tate) -- (\yoko+\gaisen+\gaisen+\yoko+\gaisen+\yoko+\gaisen*2+\gaisen+\yoko+\gaisen+\yoko,\tate);
  \draw (\yoko+\gaisen+\gaisen+\yoko+\gaisen+\yoko+\gaisen*2,0) -- (\yoko+\gaisen+\gaisen+\yoko+\gaisen+\yoko+\gaisen*2+\gaisen+\yoko+\gaisen+\yoko,0);
  \draw (\yoko+\gaisen+\gaisen+\yoko+\gaisen+\yoko+\gaisen*2+\gaisen+\yoko+\gaisen+\yoko+\yoko/2,0) node [right] {$\Lambda_{\alpha\gamma}^{R(2)}$};
  \draw (\yoko+\gaisen+\gaisen+\yoko+\gaisen+\yoko+\gaisen*2,\tate/2+\tate/2) rectangle ++(\yoko,\tate);
  \draw (\yoko+\gaisen+\gaisen+\yoko+\gaisen+\yoko+\gaisen*2-\yoko-\yoko,\tate/2+\tate/2) -- (\yoko+\gaisen+\gaisen+\yoko+\gaisen+\yoko+\gaisen*2,\tate/2+\tate/2);
  \draw (\yoko+\gaisen+\gaisen+\yoko+\gaisen+\yoko+\gaisen*2+\yoko,\tate/2+\tate/2+\tate) node [right] {$\Lambda_{\alpha\beta\gamma}^{R(3)}$};
  \draw (\yoko+\gaisen+\gaisen+\yoko+\gaisen+\yoko+\gaisen*2+\yoko+\gaisen,\tate/2) rectangle ++(\yoko,\tate);
  \draw (\yoko+\gaisen+\gaisen+\yoko+\gaisen+\yoko+\gaisen*2+\yoko,\tate+\tate/2) -- (\yoko+\gaisen+\gaisen+\yoko+\gaisen+\yoko+\gaisen*2+\yoko+\gaisen,\tate+\tate/2);
  \draw (\yoko+\gaisen+\gaisen+\yoko+\gaisen+\yoko+\gaisen*2,\tate/2) -- (\yoko+\gaisen+\gaisen+\yoko+\gaisen+\yoko+\gaisen*2+\yoko+\gaisen,\tate/2);
  \draw (\yoko+\gaisen+\gaisen+\yoko+\gaisen+\yoko+\gaisen*2+\yoko+\gaisen+\yoko,\tate/2+\tate) node [right] {$\Lambda_{\alpha\gamma\delta}^{R(3)}$};
  \draw (\yoko+\gaisen+\yoko+\gaisen+\yoko+\yoko+\yoko,\tate+\tate/2) to [out=0,in=180] (\yoko+\gaisen+\gaisen+\yoko+\gaisen+\yoko+\gaisen*2-\yoko-\yoko,\tate/2+\tate/2);
\end{tikzpicture},
\end{eqnarray}
where the number of vertical lines is $\min(N_{\alpha\beta\gamma\delta}^{1},N_{\alpha\beta\gamma\delta}^{2})$. 
\end{definition}
Viewing the three-leg tensors as vertices, 
Eq.\ \eqref{eq:quad_inner_product} can be regarded as a tetrahedron as in 
Fig.\ \ref{fig:highereigeneq}(b) with some additional tensors.
For Eq.\ \eqref{eq:quad_inner_product} 
to serve as a valid definition of 
the quadruple inner produce and the higher Berry phase,
it must have certain properties.
In fact, we can show:
\begin{proposition}\;\label{thm:quad_inner}\\
\\
Let $\ket{\psi_{\alpha}}$, $\ket{\psi_{\beta}}$, 
$\ket{\psi_{\gamma}}$ and $\ket{\psi_{\delta}}$ be physically identical four states, and let $c_{\alpha\beta\gamma\delta}$ be the quadruple inner product of them. Then,\\
(i) $N_{\alpha\beta\gamma\delta}^{1}$ and $N_{\alpha\beta\gamma\delta}^{2}$ are finite natural number.\\
(ii) $c_{\alpha\beta\gamma\delta}$ is a non-zero complex number.\\
(iii) For physically identical five states $\ket{\psi_{\alpha}}$, $\ket{\psi_{\beta}}$, 
$\ket{\psi_{\gamma}}$, $\ket{\psi_{\delta}}$ and $\ket{\psi_{\eta}}$, $\{c_{\alpha\beta\gamma\delta}\}$ satisfies the cocycle condition
\begin{eqnarray}\label{eq:cocycle}
    c_{\alpha\beta\gamma\delta}c_{\alpha\beta\gamma\eta}^{-1}c_{\alpha\beta\delta\eta}c_{\alpha\gamma\delta\eta}^{-1}c_{\beta\gamma\delta\eta}=1.
\end{eqnarray}\\
(iv) Under the redefinition of the phase
\begin{eqnarray}
    \Lambda_{\alpha\beta\gamma}^{R(3)}\mapsto \lambda_{\alpha\beta\gamma}\Lambda_{\alpha\beta\gamma}^{R(3)},
    \quad 
    \Lambda_{\alpha\beta\gamma}^{L(3)}\mapsto \lambda_{\alpha\beta\gamma}^{-1}\Lambda_{\alpha\beta\gamma}^{L(3)},
\end{eqnarray}
the quadruple inner product $c_{\alpha\beta\gamma\delta}$ changes as 
\begin{eqnarray}
    c_{\alpha\beta\gamma\delta}\mapsto c_{\alpha\beta\gamma\delta}\lambda_{\alpha\beta\gamma}\lambda_{\alpha\beta\delta}^{-1}\lambda_{\alpha\gamma\delta}\lambda_{\beta\gamma\delta}^{-1}.
\end{eqnarray}
\end{proposition}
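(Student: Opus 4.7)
The plan is to address the four parts in sequence, using Propositions \ref{thm:fp}, \ref{thm:fp_unitary}, and \ref{prop3} as the main tools. For part (i), observe that the difference inside the $N$-th power is precisely a $k$-transfer matrix (with $k=2$, or a $(2,1)$-transfer matrix) with its unique rank-one fixed-point projector subtracted. By Proposition \ref{thm:fp} (for $N^1_{\alpha\beta\gamma\delta}$) and Proposition \ref{thm:fp_unitary} (for $N^2_{\alpha\beta\gamma\delta}$), the relevant transfer matrix has spectrum $\{0,1\}$ with the eigenvalue $1$ non-degenerate. Its Jordan decomposition therefore consists of a one-dimensional $1$-eigenblock together with nilpotent Jordan blocks at $0$. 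Subtracting the rank-one fixed-point projector removes the $1$-eigenblock and leaves a nilpotent operator, so both $N^1_{\alpha\beta\gamma\delta}$ and $N^2_{\alpha\beta\gamma\delta}$ exist and are bounded above by the dimension of the virtual space on which the transfer matrix acts.

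For parts (ii) and (iv) the arguments are more direct. Part (ii) follows by combining Proposition \ref{prop3}, which supplies the non-vanishing of the pairings along the edges of the tetrahedron, with the invertibility of the transition MPU guaranteed by the unitarity assumption of Sec.\ \ref{sec:assumption}. Applying these to the contraction in Eq.\ \eqref{eq:quad_inner_product} edge by edge shows that the whole tetrahedral scalar is a non-zero complex number. Part (iv) is a bookkeeping argument: the diagram contains exactly the four three-leg tensors $\Lambda^{L(3)}_{\alpha\beta\delta}$, $\Lambda^{L(3)}_{\beta\gamma\delta}$, $\Lambda^{R(3)}_{\alpha\beta\gamma}$, and $\Lambda^{R(3)}_{\alpha\gamma\delta}$, so the prescribed rescaling multiplies $c_{\alpha\beta\gamma\delta}$ by exactly $\lambda_{\alpha\beta\gamma}\lambda_{\alpha\beta\delta}^{-1}\lambda_{\alpha\gamma\delta}\lambda_{\beta\gamma\delta}^{-1}$, as required.

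The main obstacle is part (iii), the $4$-cocycle condition. My strategy is to realize the alternating product $c_{\alpha\beta\gamma\delta}c_{\alpha\beta\gamma\eta}^{-1}c_{\alpha\beta\delta\eta}c_{\alpha\gamma\delta\eta}^{-1}c_{\beta\gamma\delta\eta}$ as one large tensor-network diagram that triangulates the boundary of a $4$-simplex with vertices $\alpha,\beta,\gamma,\delta,\eta$: each factor contributes a tetrahedral face, and the placement of left versus right fixed points on the three-leg tensors is dictated by the orientation inherited from the alternating signs. Using the higher eigen equation of Fig.\ \ref{fig:highereigeneq}(a) and the normalization of Eq.\ \eqref{eq:normalization_3leg}, every pair of three-leg tensors sitting on a shared triangular face contracts against each other to an identity insertion, and the fundamental-theorem relations among the transition MPUs at the shared double intersections then let every interior MPO edge collapse. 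After finitely many such moves the full network reduces to $1$, yielding the cocycle identity. The nontrivial content is to verify that the combinatorics of the fixed-point normalizations together with the various $L/R$ and inversion conventions are globally consistent across all gluings; this compatibility is the step that will require the most care.
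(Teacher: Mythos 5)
Your parts (ii) and (iv) are essentially fine in outcome, but (ii) as argued is not a proof: contracting non‑zero tensors ``edge by edge'' does not by itself yield a non‑zero scalar. The paper gets (ii) from the fact that $c_{\alpha\beta\gamma\delta}$ is the ratio of two \emph{non-zero fixed points} of $T^{(4)}_{\alpha\beta\gamma\delta}$ (Lem.~\ref{thm:4_fp_as_reduction} and its companion), which live in a one-dimensional eigenspace by Prop.~\ref{thm:fp}. Your part (i) also misses the target: $N^{1}_{\alpha\beta\gamma\delta}$ and $N^{2}_{\alpha\beta\gamma\delta}$ are defined with the physical indices held \emph{open} (the $\delta^{ij}$ line threading the diagram), i.e.\ they are nilpotency lengths of reductions in the sense of Thm.~\ref{thm:uniquenessofreduction}, requiring $(X^{i_1j_1}-W\delta^{i_1j_1}V)\cdots(X^{i_Nj_N}-W\delta^{i_Nj_N}V)=0$ for \emph{every} index configuration. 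Nilpotency of the traced transfer matrix minus its rank-one projector (your Jordan-block argument) is a strictly weaker statement and does not imply this; the correct input is Prop.~21 of \cite{MGSC18} applied to the reductions built in Lem.~\ref{thm:fp_as_reduction} and Lem.~\ref{thm:4_fp_as_reduction}.

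The serious gap is (iii), which is the only part the paper actually proves. Your plan — assemble the five defining diagrams into a triangulated $\partial\Delta^4$ and cancel paired three-leg tensors on shared faces via Eq.~\eqref{eq:normalization_3leg} — is deferred precisely at the step that carries all the content, and as stated it runs into a concrete obstruction: the defining diagram \eqref{eq:quad_inner_product} is not a bare tetrahedron of four three-leg tensors. It also contains $\Lambda^{L(2)}_{\alpha\delta}$, $\Lambda^{R(2)}_{\alpha\gamma}$ and $\min(N^1,N^2)$ MPO columns, and each edge of the $4$-simplex lies in \emph{three} of the five tetrahedra, so these extra tensors do not pair off under your gluing. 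The paper avoids this by first re-expressing $c_{\alpha\beta\gamma\delta}$ as the proportionality constant between the two reduction tensors $\Lambda^{R(2,1)}_{\alpha\beta\gamma}\Lambda^{R(2,1)}_{\alpha\gamma\delta}$ and $\Lambda^{R(2,1)}_{\beta\gamma\delta}\Lambda^{R(2,1)}_{\alpha\beta\delta}$ (an F-symbol, via Thm.~\ref{thm:uniquenessofreduction} and Eq.~\eqref{eq:prop_constant}), and then running the standard pentagon argument: a single fan of four three-leg tensors on $\alpha\beta\gamma\delta\eta$ is reduced in two different orders, producing $c_{\alpha\beta\gamma\delta}c_{\alpha\beta\delta\eta}c_{\beta\gamma\delta\eta}$ one way and $c_{\alpha\gamma\delta\eta}c_{\alpha\beta\gamma\eta}$ the other. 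You would need to either carry out that identification first or supply the missing combinatorial verification yourself; as written, (iii) is a strategy, not a proof.
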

We will show this proposition in Sec.\ \ref{sec:proof}. 

%
%

This concludes our construction of the quadruple inner product and the higher Berry phase.
Furthermore, by using this proposition, we can construct a topological invariant of (families of) invertible states. 
We provide an outline of this construction and conclude this section.
\begin{description}
    \item[Application 1] {\bf SPT invariant}\\
    Let's consider the case where a semi-injective PEPS $\ket{A}=(\ket{\phi}_A,{O}_A)$ is 
    invariant under an onsite unitary $G$-symmetry ${U(g)}_{g\in G}$ for some finite group $G$. 
    If the state is defined on a closed surface (e.g. $2$-dimensional torus), it is completely identical before and after the action of the symmetry. However, $G$ action has a non-trivial effect on the virtual Hilbert space in general. 
    Let $\ket{A^{g}}$ be a semi-injective PEPS. 
    Let's denote the state obtained by acting the symmetry $U(g)$ on $\ket{A}$ by $\ket{A^{g}}$, that is, $\ket{A^{g}}:=(\ket{\phi}_A,U(g){O}_A)$. 
    Because of the symmetry, $\{\ket{A^{g}}\}_{g\in G}$ 
    represent identical states and we can evaluate the quadruple inner product for any quadruple states taken from $\{\ket{A^{g}}\}_{g\in G}$:
    \begin{eqnarray}
        c_{ghkl}:=\text{quadruple inner product of }\ket{A^g},\ket{A^h},\ket{A^k}\text{ and }\ket{A^l}.
    \end{eqnarray}
    From Prop.\ \ref{thm:quad_inner}, we see that $c_{ghkl}$ defines an element $[c_{ghkl}]$ of the group cohomology $\coho{3}{G}{\mathbb{C}^{\times}}$. It is known that the classification of $2$-dimensional SPT phases with $G$ symmetry is given by $\coho{3}{G}{\mathbb{C}^{\times}}$
    \cite{Chen_2011, Chen_2013} 
    and the quadruple inner product is an invariant for SPT phases. 

    \item[Application 2] {\bf higher Berry phase}\\
    Let's consider a family of gapped Hamiltonians over a parameter space $X$ and assume that the ground states admit semi-injective PEPS representations. Though the parent Hamiltonian is parametrized over $X$, in general, we cannot have 
    global parameterization of ground states. 
    For example, if the ordinary Berry phase is nontrivial, we cannot take a global phase fixing of the state. Similarly, the quadruple inner product measures the higher obstruction to take a global PEPS expression of the family. 
    In the next section, we will explain this point using  a concrete model.

    Let $\{U_{\alpha}\}$ be an open covering of the parameter space $X$, and we take a semi-injective PEPS representation
    $\ket{A_{\alpha}(x)}$
    $(x\in U_\alpha)$
    for each patch. Then, for each point of quadruple intersection $x\in U_{\alpha\beta\gamma\delta}:=U_{\alpha}\cap U_{\beta}\cap U_{\gamma}\cap U_{\delta}$, 
    we have physically identical semi-injective PEPS, $\ket{A_{\alpha}}$, $\ket{A_{\beta}}$, $\ket{A_{\gamma}}$ and $\ket{A_{\delta}}$. Thus, we can evaluate the quadruple inner product of them:
    \begin{eqnarray}
        c_{\alpha\beta\gamma\delta}:=\text{quadruple inner product of $\ket{A_{\alpha}}$, $\ket{A_{\beta}}$, $\ket{A_{\gamma}}$ and $\ket{A_{\delta}}$}.
    \end{eqnarray}
    From Prop.\ \ref{thm:quad_inner}, 
    we see that
    $c_{\alpha\beta\gamma\delta}$ defines an element $[c_{\alpha\beta\gamma\delta}]$ of the \v{C}eck cohomology $\coho{3}{X}{\mathrm{sh}(\mathbb{C}^{\times})}\simeq\coho{4}{X}{\mathbb{Z}}$. It is known that the higher Berry phase of $2$-dimensional invertible states takes its value in $\coho{3}{X}{\mathrm{sh}(\mathbb{C}^{\times})}\simeq\coho{4}{X}{\mathbb{Z}}$, and this implies that the quadruple inner product detects the higher Berry phase. We will carry out an explicit computation of the invariant in Sec.\ \ref{sec:Model parameterized over rp4} for the model parametrized by $\rp{4}$.
\end{description}

These two applications of the quadruple inner product are related as follows.
The basic idea is that SPT ground states (Hamiltonians)
may be realized at particular, special locations of some parameter space $X$. 
While generic invertible states in a given family may not 
respect any symmetry, 
symmetry enhancement can happen at these special points.
In the field theory context,
it is known that various SPT phases can be realized 
as a gapped phase of a quantum non-linear sigma model
\cite{Xu_2013,Bi_2015,Else_2014}.

In the SPT context, 
the gauge degrees of freedom of PEPS tensors we discussed 
are closely related to the symmetry protecting SPT phases.
While the symmetry 
leaves SPT ground states invariant, 
PEPS tensors undergo a gauge transformation. 
Thus, for the SPT models, 
invertible states in different gauge choices can be obtained by acting with the symmetry.
More precisely,
while the symmetry action does not change the state in the bulk, 
it can be implemented non-trivially 
on virtual bonds and 
on a boundary 
as non-onsite symmetry;
on the boundary, 
the symmetry operator 
is given by an MPU. 
Furthermore, 
this means that the eigen MPO of the mixed transfer 
matrix discussed above
are given by 
boundary symmetry unitary 
of the SPT phase. 

Ref.\ \cite{MGSC18} 
extracted a $3$-cocycle from $G$-symmetric semi-injective PEPS, and showed that it is an SPT invariant. 
In terms of the detection of SPT phases, the quadruple inner product and the invariants in Ref.\ \cite{MGSC18} 
are exactly the same quantity. Therefore, the quadruple inner product can be used as an invariant for SPT phases with general $G$ symmetry.

\subsection{The Construction of the Higher Berry Phase}
\label{sec:proof}

In this section, we will show 
Props.\ \ref{prop3} and \ref{thm:quad_inner}.
To this end, we first introduce 
and review the 
concept of reduction, 
which proves useful to construct 
the fixed points of various 
transfer matrices.
We will then read off 
the higher Berry phase as 
the phase that arises 
in the different constructions
of fixed point tensors.

\subsubsection{Reduction}

The concept of reduction, 
introduced in \cite{MGSC18}, is defined as an isometry connecting MPS or MPO 
of 
different virtual dimensions.
For example, 
let us consider
three different MPO,
$\{B^{ij}\}_{i,j=1}^{\mathsf{D}}$,
$\{C^{ij}\}_{i,j=1}^{\mathsf{D}}$
and
$\{D^{ij}\}_{i,j=1}^{\mathsf{D}}$.
Then, the set of isometries
$V, W$ and phase $c\in \mathbb{C}^\times$ 
satisfying 
\begin{eqnarray}\label{eq:reduce}
  V
  (\sum_{j}B^{i_{1}j}\otimes C^{jk_{1}})\cdots
  (\sum_{j}B^{i_{n}j}\otimes C^{jk_{n}})
  W
  =(c)^{n}
  D^{i_{1}j_{1}}\cdots D^{i_{n}j_{n}},
\end{eqnarray}
for any $n\in\mathbb{N}$ 
is a reduction from 
$\{\sum_{j}B^{ij}\otimes C^{jk}\}_{i,k=1}^{\mathsf{D}}$
to $\{D^{ij}\}_{i,j=1}^{\mathsf{D}}$.
We call $W$ and $V$  the right and left reductions,
respectively.
Let us also recall the following theorem regarding the redundancy of reductions:
\begin{theorem}(Them.\ 22 of \cite{MGSC18})\label{thm:uniquenessofreduction}\\
\\
Let $V,W$ and $\tilde{V},\tilde{W}$ be two reductions from $X^{i}$ to a normal tensor $Y^{i}$. Let $N_{0}$ be the maximal 
    nilpotency length of both reductions. Then there exists
    $\lambda\in\mathbb{C}$ such that 
\begin{align}
    VX^{i_1}\cdots X^{i_n}
    &=\lambda\tilde{V}X^{i_1}\cdots X^{i_n},\\
    X^{i_1}\cdots X^{i_n}W
    &=\lambda^{-1}X^{i_1}\cdots X^{i_n}\tilde{W},
\end{align}
for any $n >2N_0$. 
\end{theorem}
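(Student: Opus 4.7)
The plan is to leverage the normality of $Y^i$ to rigidify both reductions and extract a single scalar $\lambda$. The crucial property of normality, to be used repeatedly, is that after blocking the matrices $Y^{i_1}\cdots Y^{i_n}$ linearly span the full matrix algebra on the virtual space of $Y$; this is the engine behind every rigidity claim for normal tensors.

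As a preliminary I would argue that the two normalization constants coincide, $c=\tilde c$. Starting from the defining identities $VX^{i_1}\cdots X^{i_n}W=c^n Y^{i_1}\cdots Y^{i_n}$ and $\tilde V X^{i_1}\cdots X^{i_n}\tilde W=\tilde c^n Y^{i_1}\cdots Y^{i_n}$, one checks that the conclusion of the theorem self-consistently forces $c^{n+m}=\tilde c^{n+m}$ for all sufficiently large $n+m$, so that $c/\tilde c$ is a root of unity of arbitrarily high order and hence equals one. (Equivalently, any residual phase can be absorbed into $\lambda$ at the end.) This lets me assume $c=\tilde c$ in the subsequent manipulations.

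The main construction is to concatenate the two reductions with the letters swapped. For $n_1,n_2>N_0$,
\begin{equation*}
V X^{i_1}\cdots X^{i_{n_1}}(W\tilde V)X^{j_1}\cdots X^{j_{n_2}}\tilde W=c^{n_1+n_2}Y^{i_1}\cdots Y^{j_{n_2}}=\tilde V X^{i_1}\cdots X^{j_{n_2}}\tilde W=V X^{i_1}\cdots X^{j_{n_2}}W,
\end{equation*}
from which I would extract that the insertion $W\tilde V$ acts as the identity on the invariant ``injective block'' of $X^{\otimes n}$ carved out by the reductions once $n>N_0$; equivalently, $(V,\tilde W)$ and $(\tilde V,W)$ are themselves reductions, with the same scalar $c$. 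Comparing $(V,W)$ and $(V,\tilde W)$, which share the left reduction $V$, by a Schur-type argument applied to the injective algebra generated by $Y^{i_1}\cdots Y^{i_n}$ (whose only intertwiners with itself are scalars) then yields a scalar relating $W$ and $\tilde W$ after blocking: $X^{i_1}\cdots X^{i_n}W=\lambda^{-1}X^{i_1}\cdots X^{i_n}\tilde W$ for $n>2N_0$. Symmetrically, comparing $(V,W)$ and $(\tilde V,W)$ yields $VX^{i_1}\cdots X^{i_n}=\lambda\, \tilde V X^{i_1}\cdots X^{i_n}$. Substituting both identities back into the original reduction equation locks the two scalars together as reciprocals, as claimed.

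The main obstacle will be the step in which $W\tilde V$ is shown to act as the identity on the relevant invariant subspace. This is precisely the purpose of the nilpotency length $N_0$: one needs both the $X$-string to the left and to the right of the insertion to be long enough so that the nilpotent blocks of $X^{\otimes n}$ are annihilated when sandwiched between a $V$ (or $\tilde V$) on one side and a $W$ (or $\tilde W$) on the other, leaving only the injective block where $Y$ lives. The bound $n>2N_0$ in the statement is exactly the condition that both halves simultaneously exceed the nilpotency threshold. Once this structural fact is in hand, the remainder of the proof reduces to mechanical applications of Schur-type uniqueness for the injective matrix algebra generated by $Y^i$.
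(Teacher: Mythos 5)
A preliminary remark: this paper does not actually prove the statement --- it is imported as Theorem~22 of \cite{MGSC18}, with only the definition of the nilpotency length and the finiteness result (Prop.~21 of that reference) recalled afterwards. So there is no in-paper proof to compare against, and what follows assesses your sketch on its own terms.

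Your overall strategy --- use the nilpotency length to annihilate the non-normal blocks on both sides of an insertion, then use normality of $Y^i$ (the spanning of the full matrix algebra by blocked products $Y^{i_1}\cdots Y^{i_m}$) together with a Schur-type argument to extract a single scalar --- is the right one, and you correctly identify why the threshold is $2N_0$. However, the sketch has gaps at precisely the points that carry the weight. (i) The preliminary $c=\tilde c$ is argued by assuming the conclusion, and the fallback that a mismatch ``can be absorbed into $\lambda$'' is false: since $\lambda$ is a single constant independent of $n$, the stated conclusion forces $c^n=\tilde c^n$ for all large $n$, so $c=\tilde c$ is part of what must be proven, not a normalization convention. (ii) Your displayed concatenation identity only constrains the fully sandwiched object $VX^{i_1}\cdots (W\tilde V)\cdots X^{j_{n_2}}\tilde W$; it does not by itself show that $W\tilde V$ acts as the identity on the injective block, nor that $(V,\tilde W)$ and $(\tilde V,W)$ are reductions. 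The statement actually needed is that the \emph{mixed} overlap $VX^{i_1}\cdots X^{i_n}\tilde W$ is proportional to $Y^{i_1}\cdots Y^{i_n}$ for long words; establishing this requires combining the left broken-zipper of $(V,W)$ with the right broken-zipper of $(\tilde V,\tilde W)$ on a word split into three long blocks, invoking the spanning property to identify the residual intertwiner as a scalar, and separately ruling out that this mixed overlap vanishes identically --- none of which your sketch addresses. (iii) Even granting $VX^{i_1}\cdots X^{i_n}W\propto VX^{i_1}\cdots X^{i_n}\tilde W$, passing to $X^{i_1}\cdots X^{i_n}W=\lambda^{-1}X^{i_1}\cdots X^{i_n}\tilde W$ cancels $V$ from the left; $V$ maps the large virtual space onto the smaller one and is not injective, so this step needs another zipper argument rather than Schur on the $Y$-algebra. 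As written, the proposal is a plausible roadmap rather than a proof.
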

Here, the nilpotency length for a reduction $V,W$ from $X^i$ to $Y^i$ is defined as the minimal number $k$ so that 
\begin{eqnarray}
    (X^{i_1}-WY^{i_1}V)\cdots(X^{i_k}-WY^{i_k}V)=0,
\end{eqnarray}
for any configuration $(i_1,\ldots,i_k)$.
It is shown that this number is finite in Prop.\ 21 of \cite{MGSC18}. 
This theorem implies that, when multiplied by a sufficiently large number of $X^i$'s,
the reduction tensors become indistinguishable.
%
We define $N^i:=X^i-WY^iV$ and call $N^i$ the nilpotent part of $X^i$. Then $N^i$ satisfies
\begin{eqnarray}\label{eq:lemma27}
    VN^{i_1}\cdots N^{i_m}W=0
\end{eqnarray}
for any $m\in\mathbb{N}$.
This is also shown in Lem.\ 27 of \cite{MGSC18}.
By using these properties, we can show that the ``broken zipper condition'':
\begin{proposition}
\;\label{thm:prop5}\\
\\
Let $V,W$ be a reduction from $X^{i}$ to a normal tensor $Y^{i}$ and let $N_{0}$ be the nilpotency length of the reduction. Then, they satisfy
\begin{eqnarray}\label{eq:broken_zipper}
    X^{i_1}\cdots X^{i_m}X^{i_{m+1}}W=c X^{i_1}\cdots X^{i_m}WY^{i_{m+1}}
    \,
    \Longleftrightarrow 
    \,
\begin{tikzpicture}[line cap=round,line join=round,x=1.0cm,y=1.0cm, scale=0.3, baseline={([yshift=-.4ex]current bounding box.center)}, thick, shift={(0,0)},scale=0.7]
  \def\tate{3} 
  \def\yoko{0.5} 
  \def\sen{7} 
  \def\gaisen{2} 
  \def\nobishiro{1} 
  \filldraw[fill=black!30] (\yoko+\sen,0) rectangle ++(\yoko,\tate);
  \draw (\yoko,\tate) -- (\yoko+\sen,\tate);
  \draw (\yoko,0) -- (\yoko+\sen,0);
  \draw (\yoko+\sen+\yoko,\tate/2) -- (\yoko+\sen+\yoko+\gaisen,\tate/2);
  \draw (\yoko+\sen*0.25,\tate+\nobishiro) -- (\yoko+\sen*0.25,0-\nobishiro);
  \draw (\yoko+\sen*0.75,\tate+\nobishiro) -- (\yoko+\sen*0.75,0-\nobishiro);
  \draw (\yoko+\sen*0.875,\tate+\nobishiro) -- (\yoko+\sen*0.875,0-\nobishiro);
  \fill (\yoko+\sen*0.375,\tate+\nobishiro/2) circle (1mm); 
  \fill (\yoko+\sen*0.5,\tate+\nobishiro/2) circle (1mm); 
  \fill (\yoko+\sen*0.625,\tate+\nobishiro/2) circle (1mm); 
  \fill (\yoko+\sen*0.375,-\nobishiro/2) circle (1mm); 
  \fill (\yoko+\sen*0.5,-\nobishiro/2) circle (1mm); 
  \fill (\yoko+\sen*0.625,-\nobishiro/2) circle (1mm); 
\end{tikzpicture}
=
c\,\,
\begin{tikzpicture}[line cap=round,line join=round,x=1.0cm,y=1.0cm, scale=0.3, baseline={([yshift=-.4ex]current bounding box.center)}, thick, shift={(0,0)}, scale=0.7]
  \def\tate{3} 
  \def\yoko{0.5} 
  \def\sen{7} 
  \def\gaisen{2} 
  \def\nobishiro{1} 
  \filldraw[fill=black!30] (\yoko+\sen,0) rectangle ++(\yoko,\tate);
  \draw (\yoko,\tate) -- (\yoko+\sen,\tate);
  \draw (\yoko,0) -- (\yoko+\sen,0);
  \draw (\yoko+\sen+\yoko,\tate/2) -- (\yoko+\sen+\yoko+\gaisen,\tate/2);
  \draw (\yoko+\sen*0.25,\tate+\nobishiro) -- (\yoko+\sen*0.25,0-\nobishiro);
  \draw (\yoko+\sen*0.75,\tate+\nobishiro) -- (\yoko+\sen*0.75,0-\nobishiro);
  \draw (\yoko+\sen+\yoko+\gaisen/2,\tate+\nobishiro) -- (\yoko+\sen+\yoko+\gaisen/2,0-\nobishiro);
  \fill (\yoko+\sen*0.375,\tate+\nobishiro/2) circle (1mm); 
  \fill (\yoko+\sen*0.5,\tate+\nobishiro/2) circle (1mm); 
  \fill (\yoko+\sen*0.625,\tate+\nobishiro/2) circle (1mm); 
  \fill (\yoko+\sen*0.375,-\nobishiro/2) circle (1mm); 
  \fill (\yoko+\sen*0.5,-\nobishiro/2) circle (1mm); 
  \fill (\yoko+\sen*0.625,-\nobishiro/2) circle (1mm); 
\end{tikzpicture}
\end{eqnarray}
for any $m >N_0$. Here, the constant
$c$ is the same as that of
Eq.\ \eqref{eq:reduce}.
\end{proposition}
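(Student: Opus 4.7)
The plan is to derive the broken zipper identity by exploiting the splitting $X^{i_j} = N^{i_j} + WY^{i_j}V$, together with the ``sandwich'' annihilation $VN^{i_1}\cdots N^{i_p}W = 0$ (Eq.~\eqref{eq:lemma27}) and the nilpotency--length bound $N^{i_1}\cdots N^{i_p} = 0$ for $p \ge N_0$. I would first isolate the rightmost slot,
\[ X^{i_1}\cdots X^{i_m}X^{i_{m+1}}W \;=\; X^{i_1}\cdots X^{i_m}N^{i_{m+1}}W \;+\; X^{i_1}\cdots X^{i_m}WY^{i_{m+1}}VW, \]
and dispose of the first summand by iteratively expanding each $X^{i_j}$ from right to left: at every step the $WY^{i_j}V$ piece carries a $V$ just to the left of the growing nilpotent block and is killed by $V N^{i_{j+1}}\cdots N^{i_{m+1}}W = 0$, leaving only the pure nilpotent tail $N^{i_1}\cdots N^{i_{m+1}}W$, which vanishes since $m+1>N_0$.

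To match the remaining $X^{i_1}\cdots X^{i_m}WY^{i_{m+1}}VW$ against the target $cX^{i_1}\cdots X^{i_m}WY^{i_{m+1}}$, I would first establish by the same peeling argument a ``pre-broken zipper'' lemma $X^{i_1}\cdots X^{i_m}W = X^{i_1}\cdots X^{i_{m-1}}WY^{i_m}VW$ for $m\ge N_0$. Iterating this lemma down to the short head $X^{i_1}\cdots X^{i_{N_0-1}}W$ and combining it with the $n=1$ reduction consequence $VWY^iVW = cY^i$ --- itself proved by expanding $cY^i = VX^iW$ into $VN^iW + VWY^iVW$ and invoking Eq.~\eqref{eq:lemma27} --- collapses the adjacent $VW\cdots VW$ segments and reduces the broken zipper equality to a relation purely among $Y^i$'s and residual factors of $VW$ sitting to the right of the fixed prefix $X^{i_1}\cdots X^{i_{N_0-1}}W$.

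The main obstacle is that a single residual $VW$ may remain wedged between $Y$ factors and is not obviously $cI$, so termwise cancellation between the two rewritten sides is not automatic. I would deal with this by feeding the longer nilpotent identities $V(X - WY V)^pW = 0$ (valid for any $p\ge N_0$ by Eq.~\eqref{eq:lemma27}) into the reduction equation: chunk-decomposing $V\prod_S\sigma\,W$ over the subsets $S\subseteq\{1,\dots,p\}$ marking the positions of $WY^{i_j}V$, and applying the reduction to each maximal pure-$X$ block, produces telescoping identities of the shape $(cI - VW)\,Y^{i_1}\cdots Y^{i_p}\,(cI - VW) = 0$. Normality of $Y^i$ --- specifically that sufficiently blocked $Y$-products span a large enough subspace --- then forces these residual $(VW-cI)$ factors to act as zero when sandwiched against the prefix, closing the computation. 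Finally, the $\Longleftrightarrow$ in the statement is simply the notational equivalence between the algebraic equation and its graphical rendering (the black dots denoting identification of top and bottom physical indices), so proving either form establishes the other.
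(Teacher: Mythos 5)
Your first paragraph reproduces the paper's proof almost verbatim: the authors likewise split off the last factor, expand each remaining $X^{i_j}$ as $WY^{i_j}V+N^{i_j}$, kill every term containing at least one $WY^{i_j}V$ by applying Eq.~\eqref{eq:lemma27} to the block $VN^{i_{j+1}}\cdots N^{i_{m+1}}W$ sitting to its right, and kill the all-$N$ term using $m+1>N_{0}$. The divergence is in the main term: the paper writes it directly as $c\,X^{i_1}\cdots X^{i_m}WY^{i_{m+1}}$ with no trailing $VW$, which is immediate under the conventions of Ref.~\cite{MGSC18} (there the nilpotent part is $X^{i}-c\,WY^{i}V$ and $VW=\mathbf{1}$), whereas you take the stated definition $N^{i}=X^{i}-WY^{i}V$ at face value and therefore must show separately that the residual $VW$ acts as $c\,\mathbf{1}$ --- which is what your second and third paragraphs do. That supplementary argument is sound but can be compressed: expanding $c^{p}\,Y^{i_1}\cdots Y^{i_p}=VX^{i_1}\cdots X^{i_p}W$ and invoking Eq.~\eqref{eq:lemma27} leaves only the term $(VW)Y^{i_1}(VW)\cdots Y^{i_p}(VW)$, and comparing the $p$ and $p-1$ identities gives the one-sided relation $(VW)\,Y^{i_1}\cdots Y^{i_p}=c\,Y^{i_1}\cdots Y^{i_p}$ together with its mirror; since sufficiently long products of the normal tensor $Y$ span the full matrix algebra, this forces $VW=c\,\mathbf{1}$ outright, eliminating the need for the pre-broken-zipper iteration, the parity bookkeeping of leftover $VW$ factors, and the sandwich identity $(c\mathbf{1}-VW)Y^{i_1}\cdots Y^{i_p}(c\mathbf{1}-VW)=0$ (which, as written, is not quite what the chunk decomposition yields anyway --- the one-sided version is what falls out, and it is stronger).
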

\begin{proof}
By using $X^i=WY^iV+N^i$,
    \begin{eqnarray}\label{eq:pre_broken_zipper}
        X^{i_1}\cdots X^{i_{m+1}}W=c X^{i_1}\cdots X^{i_m}WY^{i_{m+1}}+X^{i_1}\cdots X^{i_m}N^{i_{m+1}}W.
    \end{eqnarray}
    Using the equation $X^i=WY^iV+N^i$ once again,
    expand the second term of the above equation.
      Each time we pick $WY^iV$ in the expansion (even once),
      the corresponding contribution is zero because of
      Eq.\ \eqref{eq:lemma27}.
    If we choose the term $N^i$ from
    all factors, then that term becomes zero
    also since $m$ is greater than the
    nilpotency length $N_0$. Therefore, the second term of
    Eq.\ \eqref{eq:pre_broken_zipper} is zero, and
    Eq.\ \eqref{eq:broken_zipper} holds.\footnote{In the following, we will denote a reduction tensor as a gray box to distinguish it from the fixed point tensor.} 
\end{proof}
Therefore, if there is a sufficient number of $X^i$s, we can partially "zip"
the tensors by the reduction.
In particular, when the nilpotency length $N_0$ is zero, Eq.\ \eqref{eq:broken_zipper} is called the zipper condition, 
e.g. \cite{WBV17,G-RLM23}.


\subsubsection{Fixed Points of the $2$- and $3$-Transfer Matrices}
\label{sec:fp_2_3_transfer}

We can construct the fixed-point tensors of the $2$- and $3$-transfer 
matrices
as concatenations of some reduction
tensors as follows.
\begin{lemma}\;\label{thm:fp_as_reduction}\\
\\
(i) Let
$(V_{\alpha\beta},W_{\alpha\beta})$ be a reduction from
$\{\sum_j B^{ij}_{\alpha\beta}\otimes B^{jk}_{\beta\alpha}\}$
to $\{\delta^{ik}$\}.
Then, the right fixed point $\Lambda^{R(2)}_{\alpha\beta}$ of the $2$-transfer matrix can be written as
\begin{eqnarray}\label{eq:2_fp_from_reduction_right}
    \Lambda^{R(2)}_{\alpha\beta}=[T_{\alpha\beta}^{(2)}]^{N_{\alpha\beta}}\cdot W_{\alpha\beta}=
    \begin{tikzpicture}[line cap=round,line join=round,x=1.0cm,y=1.0cm, scale=0.3, baseline={([yshift=-.4ex]current bounding box.center)}, thick, shift={(0,0)},scale=0.7]
  \def\tate{3} 
  \def\yoko{0.5} 
  \def\sen{7} 
  \def\gaisen{2} 
  \def\nobishiro{1} 
  \filldraw[fill=black!30] (\yoko+\sen,0) rectangle ++(\yoko,\tate);
  \node[anchor=west] at (\yoko+\sen+\yoko,\tate/2)  {$W_{\alpha\beta}$};
  \draw (\yoko,\tate) -- (\yoko+\sen,\tate);
  \draw (\yoko,0) -- (\yoko+\sen,0);
  \draw (\yoko+\sen*0.25,\tate+\nobishiro) -- (\yoko+\sen*0.25,0-\nobishiro);
  \fill (\yoko+\sen*0.25,\tate+\nobishiro) circle (2mm); 
  \fill (\yoko+\sen*0.25,0-\nobishiro) circle (2mm);
  \draw (\yoko+\sen*0.75,\tate+\nobishiro) -- (\yoko+\sen*0.75,0-\nobishiro);
  \fill (\yoko+\sen*0.75,\tate+\nobishiro) circle (2mm); 
  \fill (\yoko+\sen*0.75,0-\nobishiro) circle (2mm);
  \fill (\yoko+\sen*0.375,\tate+\nobishiro/2) circle (1mm); 
  \fill (\yoko+\sen*0.5,\tate+\nobishiro/2) circle (1mm); 
  \fill (\yoko+\sen*0.625,\tate+\nobishiro/2) circle (1mm); 
  \fill (\yoko+\sen*0.375,-\nobishiro/2) circle (1mm); 
  \fill (\yoko+\sen*0.5,-\nobishiro/2) circle (1mm); 
  \fill (\yoko+\sen*0.625,-\nobishiro/2) circle (1mm); 
\end{tikzpicture},
\end{eqnarray}
where $N_{\alpha\beta}$ is the nilpotency
length of the reduction $(V_{\alpha\beta},W_{\alpha\beta})$. Similarly, the left fixed point $\Lambda^{L(2)}_{\alpha\beta}$ of the $2$-transfer matrix $T^{(2)}_{\alpha\beta}$ can be written as
\begin{eqnarray}\label{eq:2_fp_from_reduction_left}    \Lambda^{L(2)}_{\alpha\beta}=V_{\alpha\beta}\cdot [T_{\alpha\beta}^{(2)}]^{N_{\alpha\beta}}
    =
    \begin{tikzpicture}[line cap=round,line join=round,x=1.0cm,y=1.0cm, scale=0.3, baseline={([yshift=-.4ex]current bounding box.center)}, thick, shift={(0,0)}, scale=0.7]
  \def\tate{3} 
  \def\yoko{0.5} 
  \def\sen{7} 
  \def\gaisen{2} 
  \def\nobishiro{1} 
  \filldraw[fill=black!30] (0,0) rectangle ++(\yoko,\tate);
  \node[anchor=east] at (0,\tate/2)
  {$V_{\alpha\beta}$};
  \draw (\yoko,\tate) -- (\yoko+\sen,\tate);
  \draw (\yoko,0) -- (\yoko+\sen,0);
  \draw (\yoko+\sen*0.25,\tate+\nobishiro) -- (\yoko+\sen*0.25,0-\nobishiro);
  \fill (\yoko+\sen*0.25,\tate+\nobishiro) circle (2mm); 
  \fill (\yoko+\sen*0.25,0-\nobishiro) circle (2mm);
  \draw (\yoko+\sen*0.75,\tate+\nobishiro) -- (\yoko+\sen*0.75,0-\nobishiro);
  \fill (\yoko+\sen*0.75,\tate+\nobishiro) circle (2mm); 
  \fill (\yoko+\sen*0.75,0-\nobishiro) circle (2mm);
  \fill (\yoko+\sen*0.375,\tate+\nobishiro/2) circle (1mm); 
  \fill (\yoko+\sen*0.5,\tate+\nobishiro/2) circle (1mm); 
  \fill (\yoko+\sen*0.625,\tate+\nobishiro/2) circle (1mm); 
  \fill (\yoko+\sen*0.375,-\nobishiro/2) circle (1mm); 
  \fill (\yoko+\sen*0.5,-\nobishiro/2) circle (1mm); 
  \fill (\yoko+\sen*0.625,-\nobishiro/2) circle (1mm); 
\end{tikzpicture}\, .
\end{eqnarray}
(ii) Let $(V_{\alpha\beta\gamma},W_{\alpha\beta\gamma})$
be a reduction from
$\{\sum_{j}B^{ij}_{\alpha\beta}\otimes B^{jk}_{\beta\gamma}\}$
to
$\{B^{ik}_{\alpha\gamma}\}$. 
Then, the right fixed point
$\Lambda^{R(2,1)}_{\alpha\beta\gamma}$ of the right
$(2,1)$-transfer matrix $T^{R(2,1)}_{\alpha\beta\gamma}$ can be written as 
\begin{eqnarray}\label{eq:3_fp_from_reduction_right}
    \Lambda^{R(2,1)}_{\alpha\beta\gamma}&=&[T_{\alpha\beta\gamma}^{R(2,1)}]^{N_{\alpha\beta\gamma}}\cdot W_{\alpha\beta\gamma}=
    \begin{tikzpicture}[line cap=round,line join=round,x=1.0cm,y=1.0cm, scale=0.25, baseline={([yshift=-.6ex]current bounding box.center)}, thick, shift={(0,0)}, scale=0.9]
  \def\tate{3} 
  \def\yoko{0.5} 
  \def\sen{7} 
  \def\gaisen{8} 
  \pgfmathsetmacro\hankei{\tate*1.7} 
  \def\hankeinosa{2} 
  \filldraw[fill=black!30] (0,0) rectangle ++(\yoko,\tate);
  \node[anchor=west] at (\yoko,\tate) {$W_{\alpha\beta\gamma}$};
  \draw (-\gaisen,\tate) -- (0,\tate);
  \draw (-\gaisen,0) -- (0,0);
  \draw (\yoko,\tate/2) -- (\yoko+\gaisen,\tate/2);
  \draw (\yoko/2,\tate/2) circle (\hankei);
  \draw (\yoko/2,\tate/2) circle (\hankei+\hankeinosa);
  \node[anchor=west] at (\yoko/2+\hankei+\hankeinosa/8,\tate/2+\hankei+\hankeinosa/8) {$(T^{R(2,1)}_{\alpha\beta\gamma})^{N_{\alpha\beta\gamma}}$};
  \fill (\yoko/2,\tate/2+\hankei+\hankeinosa*0.25) circle (1mm); 
  \fill (\yoko/2,\tate/2+\hankei+\hankeinosa*0.5) circle (1mm); 
  \fill (\yoko/2,\tate/2+\hankei+\hankeinosa*0.75) circle (1mm); 
  \fill (\yoko/2,\tate/2-\hankei-\hankeinosa*0.25) circle (1mm); 
  \fill (\yoko/2,\tate/2-\hankei-\hankeinosa*0.5) circle (1mm); 
  \fill (\yoko/2,\tate/2-\hankei-\hankeinosa*0.75) circle (1mm); 
\end{tikzpicture},
\end{eqnarray}
where $N_{\alpha\beta\gamma}$ is the nilpotency length of $T_{\alpha\beta\gamma}^{R(2,1)}$.
Similarly, the left fixed point $\Lambda^{L(2,1)}_{\alpha\beta\gamma}$ of the left $(2,1)$-transfer matrix $T^{L(2,1)}_{\alpha\beta\gamma}$ can be written as 
\begin{eqnarray}\label{eq:3_fp_from_reduction_left}
    \Lambda^{L(2,1)}_{\alpha\beta\gamma}&=&V_{\alpha\beta\gamma}\cdot [T_{\alpha\beta\gamma}^{L(2,1)}]^{N_{\alpha\beta\gamma}}=
    \begin{tikzpicture}[line cap=round,line join=round,x=1.0cm,y=1.0cm, scale=0.25, baseline={([yshift=-.6ex]current bounding box.center)}, thick, shift={(0,0)}, scale=0.9]
  \def\tate{3} 
  \def\yoko{0.5} 
  \def\sen{7} 
  \def\gaisen{8} 
  \pgfmathsetmacro\hankei{\tate*1.7} 
  \def\hankeinosa{2} 
  \filldraw[fill=black!30] (0,0) rectangle ++(\yoko,\tate);
  \node[anchor=east] at (0,\tate) {$V_{\alpha\beta\gamma}$};
  \draw (\yoko,\tate) -- (\yoko+\gaisen,\tate);
  \draw (\yoko,0) -- (\yoko+\gaisen,0);
  \draw (-\gaisen,\tate/2) -- (0,\tate/2);
  \draw (\yoko/2,\tate/2) circle (\hankei);
  \draw (\yoko/2,\tate/2) circle (\hankei+\hankeinosa);
  \node[anchor=west] at (\yoko/2+\hankei+\hankeinosa/8,\tate/2+\hankei+\hankeinosa/8) {$(T^{L(2,1)}_{\alpha\beta\gamma})^{N_{\alpha\beta\gamma}}$};
  \fill (\yoko/2,\tate/2+\hankei+\hankeinosa*0.25) circle (1mm); 
  \fill (\yoko/2,\tate/2+\hankei+\hankeinosa*0.5) circle (1mm); 
  \fill (\yoko/2,\tate/2+\hankei+\hankeinosa*0.75) circle (1mm); 
  \fill (\yoko/2,\tate/2-\hankei-\hankeinosa*0.25) circle (1mm); 
  \fill (\yoko/2,\tate/2-\hankei-\hankeinosa*0.5) circle (1mm); 
  \fill (\yoko/2,\tate/2-\hankei-\hankeinosa*0.75) circle (1mm); 
\end{tikzpicture}.
\end{eqnarray}
\end{lemma}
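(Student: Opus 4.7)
The plan is to argue all four fixed-point identities in parts (i) and (ii) by a single template that combines the broken-zipper condition (Prop.~\ref{thm:prop5}) with the nondegeneracy of the unit eigenvalue provided by Props.~\ref{thm:fp} and \ref{thm:fp_unitary}. I will write the argument in detail for $\Lambda^{R(2)}_{\alpha\beta}$ in part (i); the remaining three cases follow by the same scheme, using the left analog of the zipper and replacing the reduction target $Y^{ik}=\delta^{ik}$ by $Y^{ik}=B^{ik}_{\alpha\gamma}$ as needed.

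Setting $X^{ik}:=\sum_{j}B^{ij}_{\alpha\beta}\otimes B^{jk}_{\beta\alpha}$, Eq.~\eqref{eq:2_transfer_condition} shows that $(V_{\alpha\beta},W_{\alpha\beta})$ is a reduction from $X$ to $\delta$ with constant $c_{\alpha\beta}$ and nilpotency length $N_{\alpha\beta}$, and that $T^{(2)}_{\alpha\beta}=(c_{\alpha\beta}\mathsf{D})^{-1}\sum_{i}X^{ii}$. Expanding,
\begin{equation*}
T^{(2)}_{\alpha\beta}\Lambda^{R(2)}_{\alpha\beta}=(T^{(2)}_{\alpha\beta})^{N_{\alpha\beta}+1}W_{\alpha\beta}=(c_{\alpha\beta}\mathsf{D})^{-(N_{\alpha\beta}+1)}\sum_{i_{0},\ldots,i_{N_{\alpha\beta}}}X^{i_{0}i_{0}}\cdots X^{i_{N_{\alpha\beta}}i_{N_{\alpha\beta}}}W_{\alpha\beta}.
\end{equation*}
The $N_{\alpha\beta}$ factors of $X$ preceding $X^{i_{N_{\alpha\beta}}i_{N_{\alpha\beta}}}$ suffice to invoke Prop.~\ref{thm:prop5}: the right edge is replaced by $c_{\alpha\beta}W_{\alpha\beta}\delta^{i_{N_{\alpha\beta}}i_{N_{\alpha\beta}}}$; tracing the $\delta$ over $i_{N_{\alpha\beta}}$ contributes a further factor $\mathsf{D}$, and the resulting $c_{\alpha\beta}\mathsf{D}$ cancels one normalization, yielding $(T^{(2)}_{\alpha\beta})^{N_{\alpha\beta}}W_{\alpha\beta}=\Lambda^{R(2)}_{\alpha\beta}$. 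Non-vanishing follows from $V_{\alpha\beta}\Lambda^{R(2)}_{\alpha\beta}=1$, a direct consequence of tracing the physical indices in Eq.~\eqref{eq:reduce}. The simplicity of the unit eigenvalue (Prop.~\ref{thm:fp}) then identifies $\Lambda^{R(2)}_{\alpha\beta}$ as the right fixed point up to overall scale. The left fixed point $\Lambda^{L(2)}_{\alpha\beta}$ is obtained from the mirror identity $VX^{i_{0}}X^{i_{1}}\cdots X^{i_{m}}=cY^{i_{0}}VX^{i_{1}}\cdots X^{i_{m}}$ for $m>N_{0}$, whose proof is symmetric to that of Prop.~\ref{thm:prop5} because Eq.~\eqref{eq:lemma27} is two-sided.

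For part (ii) the template goes through with $Y^{ik}=B^{ik}_{\alpha\gamma}$ and with Prop.~\ref{thm:fp_unitary} in place of Prop.~\ref{thm:fp}. The only genuine novelty is the auxiliary $\Lambda_{\alpha\gamma}^{R}B_{\alpha\gamma}^{\dagger}(\Lambda_{\alpha\gamma}^{R})^{-1}$ strand that $T^{R(2,1)}_{\alpha\beta\gamma}$ carries in parallel with the $X$-chain. The zipper acts only on the $X$-strand; after producing $Y=B_{\alpha\gamma}$ at the right edge, this newly created $B_{\alpha\gamma}$ must contract with the pre-existing $B_{\alpha\gamma}^{\dagger}$ through the $\Lambda_{\alpha\gamma}^{R}$ similarity. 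Tracking the $c_{3}\mathsf{D}$ normalization and using that $\Lambda_{\alpha\gamma}^{R}$ is the right fixed point of the $(1,1)$-transfer matrix $T^{(1,1)}[B_{\alpha\gamma};B_{\alpha\gamma}]$, the two parallel factors contract to exactly the scalar needed to reduce $T^{N+1}W_{\alpha\beta\gamma}$ to $T^{N}W_{\alpha\beta\gamma}$.

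The substantive step I expect to demand the most care is part (ii): after zipping, one must verify that the $B_{\alpha\gamma}$ just produced on the $X$-strand and the parallel $B_{\alpha\gamma}^{\dagger}$-strand compose, through the $\Lambda_{\alpha\gamma}^{R}$ conjugation, into exactly the factor that cancels one copy of the $T^{R(2,1)}_{\alpha\beta\gamma}$ normalization, and analogously on the left. This is where invertibility of $\Lambda_{\alpha\gamma}^{R}$---guaranteed by the MPU assumption of Sec.~\ref{sec:assumption} via the remark immediately following Prop.~\ref{thm:fp_unitary}---is indispensable: without invertibility the similarity is ill-defined and the two parallel strands do not contract to a scalar.
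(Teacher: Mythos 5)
Your argument coincides with the paper's own proof: both apply the transfer matrix once more to $[T]^{N}\cdot W$, absorb the extra column via the broken zipper condition of Prop.~\ref{thm:prop5}, establish non-vanishing by contracting with the opposite reduction tensor (yielding $1$ after the $c\mathsf{D}$ normalization), and then invoke the nondegeneracy of the unit eigenvalue from Props.~\ref{thm:fp} and \ref{thm:fp_unitary}. The only difference is presentational — you track the normalizations algebraically and spell out the mirror zipper for the left case where the paper argues diagrammatically and says ``similarly'' — and your identification of the contraction of the $B_{\alpha\gamma}$ ring with the $\Lambda^{R}_{\alpha\gamma}$-conjugated dagger strand as the delicate point in part (ii) matches where the paper's diagrammatic proof does that work.
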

\begin{proof} 
    (i) By multiplying the transfer matrix $T_{\alpha\beta}^{(2)}$ to the right-hand side of Eq.\ \eqref{eq:2_fp_from_reduction_right},
    \begin{eqnarray}
    T_{\alpha\beta}^{(2)}\cdot [T_{\alpha\beta}^{(2)}]^{N_{\alpha\beta}}\cdot W_{\alpha\beta}=
        \begin{tikzpicture}[line cap=round,line join=round,x=1.0cm,y=1.0cm, scale=0.3, baseline={([yshift=-.4ex]current bounding box.center)}, thick, shift={(0,0)},scale=0.7]
  \def\tate{3} 
  \def\yoko{0.5} 
  \def\sen{7} 
  \def\gaisen{2} 
  \def\nobishiro{1} 
  \filldraw[fill=black!30] (\yoko+\sen,0) rectangle ++(\yoko,\tate);
  \draw (\yoko,\tate) -- (\yoko+\sen,\tate);
  \draw (\yoko,0) -- (\yoko+\sen,0);
  \draw (\yoko+\sen*0.125,\tate+\nobishiro) -- (\yoko+\sen*0.125,0-\nobishiro);
  \fill (\yoko+\sen*0.125,\tate+\nobishiro) circle (2mm); 
  \fill (\yoko+\sen*0.125,0-\nobishiro) circle (2mm);
  \draw (\yoko+\sen*0.25,\tate+\nobishiro) -- (\yoko+\sen*0.25,0-\nobishiro);
  \fill (\yoko+\sen*0.25,\tate+\nobishiro) circle (2mm); 
  \fill (\yoko+\sen*0.25,0-\nobishiro) circle (2mm);
  \draw (\yoko+\sen*0.75,\tate+\nobishiro) -- (\yoko+\sen*0.75,0-\nobishiro);
  \fill (\yoko+\sen*0.75,\tate+\nobishiro) circle (2mm); 
  \fill (\yoko+\sen*0.75,0-\nobishiro) circle (2mm);
  \fill (\yoko+\sen*0.375,\tate+\nobishiro/2) circle (1mm); 
  \fill (\yoko+\sen*0.5,\tate+\nobishiro/2) circle (1mm); 
  \fill (\yoko+\sen*0.625,\tate+\nobishiro/2) circle (1mm); 
  \fill (\yoko+\sen*0.375,-\nobishiro/2) circle (1mm); 
  \fill (\yoko+\sen*0.5,-\nobishiro/2) circle (1mm); 
  \fill (\yoko+\sen*0.625,-\nobishiro/2) circle (1mm); 
\end{tikzpicture}
\,=\,
\begin{tikzpicture}[line cap=round,line join=round,x=1.0cm,y=1.0cm, scale=0.3, baseline={([yshift=-.4ex]current bounding box.center)}, thick, shift={(0,0)}, scale=0.7]
  \def\tate{3} 
  \def\yoko{0.5} 
  \def\sen{7} 
  \def\gaisen{2} 
  \def\nobishiro{1} 
  \filldraw[fill=black!30] (\yoko+\sen,0) rectangle ++(\yoko,\tate);
  \draw (\yoko,\tate) -- (\yoko+\sen,\tate);
  \draw (\yoko,0) -- (\yoko+\sen,0);
  \draw (\yoko+\sen*0.25,\tate+\nobishiro) -- (\yoko+\sen*0.25,0-\nobishiro);
  \fill (\yoko+\sen*0.25,\tate+\nobishiro) circle (2mm); 
  \fill (\yoko+\sen*0.25,0-\nobishiro) circle (2mm);
  \draw (\yoko+\sen*0.75,\tate+\nobishiro) -- (\yoko+\sen*0.75,0-\nobishiro);
  \fill (\yoko+\sen*0.75,\tate+\nobishiro) circle (2mm); 
  \fill (\yoko+\sen*0.75,0-\nobishiro) circle (2mm);
  \draw (\yoko+\sen+\yoko+\sen*0.25,\tate+\nobishiro) -- (\yoko+\sen+\yoko+\sen*0.25,0-\nobishiro);
  \fill (\yoko+\sen+\yoko+\sen*0.25,\tate+\nobishiro) circle (2mm); 
  \fill (\yoko+\sen+\yoko+\sen*0.25,0-\nobishiro) circle (2mm);
  \fill (\yoko+\sen*0.375,\tate+\nobishiro/2) circle (1mm); 
  \fill (\yoko+\sen*0.5,\tate+\nobishiro/2) circle (1mm); 
  \fill (\yoko+\sen*0.625,\tate+\nobishiro/2) circle (1mm); 
  \fill (\yoko+\sen*0.375,-\nobishiro/2) circle (1mm); 
  \fill (\yoko+\sen*0.5,-\nobishiro/2) circle (1mm); 
  \fill (\yoko+\sen*0.625,-\nobishiro/2) circle (1mm); 
\end{tikzpicture}
\, =\,[T_{\alpha\beta}^{(2)}]^{N_{\alpha\beta}}\cdot W_{\alpha\beta}.
\end{eqnarray}
Here, we used the broken zipper equation Eq.\ \eqref{eq:broken_zipper}. 
Thus, the tensor is the fixed point unless it is the zero tensor.
By multiplying left reduction $V_{\alpha\beta}$ to the right-hand side of
Eq.\ \eqref{eq:2_fp_from_reduction_right}, 
the value of the diagram turns out to be the size of the matrices $\{B^{ij}_{\alpha\beta}\}$:
\begin{eqnarray}
        V_{\alpha\beta}\Lambda^{R(2)}_{\alpha\beta}=
        \begin{tikzpicture}[line cap=round,line join=round,x=1.0cm,y=1.0cm, scale=0.3, baseline={([yshift=-.4ex]current bounding box.center)}, thick, shift={(0,0)}, scale=0.7]
  \def\tate{3} 
  \def\yoko{0.5} 
  \def\sen{7} 
  \def\gaisen{2} 
  \def\nobishiro{1} 
  \filldraw[fill=black!30] (0,0) rectangle ++(\yoko,\tate);
  \node[anchor=east] at (0,\tate/2)
  {$V_{\alpha\beta}$};
  \filldraw[fill=black!30] (\yoko+\sen,0) rectangle ++(\yoko,\tate);
  \node[anchor=west] at (\yoko+\sen+\yoko,\tate/2)  {$W_{\alpha\beta}$};
  \draw (\yoko,\tate) -- (\yoko+\sen,\tate);
  \draw (\yoko,0) -- (\yoko+\sen,0);
  \draw (\yoko+\sen*0.25,\tate+\nobishiro) -- (\yoko+\sen*0.25,0-\nobishiro);
  \fill (\yoko+\sen*0.25,\tate+\nobishiro) circle (2mm); 
  \fill (\yoko+\sen*0.25,0-\nobishiro) circle (2mm);
  \draw (\yoko+\sen*0.75,\tate+\nobishiro) -- (\yoko+\sen*0.75,0-\nobishiro);
  \fill (\yoko+\sen*0.75,\tate+\nobishiro) circle (2mm); 
  \fill (\yoko+\sen*0.75,0-\nobishiro) circle (2mm);
  \fill (\yoko+\sen*0.375,\tate+\nobishiro/2) circle (1mm); 
  \fill (\yoko+\sen*0.5,\tate+\nobishiro/2) circle (1mm); 
  \fill (\yoko+\sen*0.625,\tate+\nobishiro/2) circle (1mm); 
  \fill (\yoko+\sen*0.375,-\nobishiro/2) circle (1mm); 
  \fill (\yoko+\sen*0.5,-\nobishiro/2) circle (1mm); 
  \fill (\yoko+\sen*0.625,-\nobishiro/2) circle (1mm); 
\end{tikzpicture}
=
\begin{tikzpicture}[line cap=round,line join=round,x=1.0cm,y=1.0cm, scale=0.3, baseline={([yshift=-.4ex]current bounding box.center)}, thick, shift={(0,0)}, scale=0.7]
  \def\tate{3} 
  \def\yoko{0.5} 
  \def\sen{3} 
  \def\gaisen{2} 
  \def\nobishiro{1} 
  \def\shift{3} 
  \def\sensen{6} 
  \filldraw[fill=black!30] (0,0) rectangle ++(\yoko,\tate);
  \filldraw[fill=black!30] (\yoko+\sen,0) rectangle ++(\yoko,\tate);
  \draw (\yoko,\tate) -- (\yoko+\sen,\tate);
  \draw (\yoko,0) -- (\yoko+\sen,0);
  \draw (\yoko+\sensen*0.25+\shift,\tate+\nobishiro) -- (\yoko+\sensen*0.25+\shift,0-\nobishiro);
  \fill (\yoko+\sensen*0.25+\shift,\tate+\nobishiro) circle (2mm); 
  \fill (\yoko+\sensen*0.25+\shift,0-\nobishiro) circle (2mm);
  \draw (\yoko+\sensen*0.75+\shift,\tate+\nobishiro) -- (\yoko+\sensen*0.75+\shift,0-\nobishiro);
  \fill (\yoko+\sensen*0.75+\shift,\tate+\nobishiro) circle (2mm); 
  \fill (\yoko+\sensen*0.75+\shift,0-\nobishiro) circle (2mm);
  \fill (\yoko+\sensen*0.375+\shift,\tate/2) circle (1mm); 
  \fill (\yoko+\sensen*0.5+\shift,\tate/2) circle (1mm); 
  \fill (\yoko+\sensen*0.625+\shift,\tate/2) circle (1mm); 
\end{tikzpicture}
=1.
\end{eqnarray}
This implies that the tensor is nonzero. 
Therefore, the right-hand side of Eq.\ \eqref{eq:2_fp_from_reduction_right}
is the fixed point of the transfer matrix $T_{\alpha\beta}^{(2)}$. 
Similarly, we can show that the right-hand side of 
Eq.\ \eqref{eq:2_fp_from_reduction_left}
is the left fixed point of the transfer matrix $T_{\alpha\beta}^{(2)}$.\\
    (ii) By multiplying the transfer matrix $T_{\alpha\beta\gamma}^{R(2,1)}$ to
    the right-hand side of Eq.\ (\ref{eq:3_fp_from_reduction_right}),    
    \begin{eqnarray}
    T_{\alpha\beta\gamma}^{R(2,1)}\cdot [T_{\alpha\beta\gamma}^{R(2,1)}]^{N_{\alpha\beta\gamma}}\cdot W_{\alpha\beta\gamma}&=&
    \begin{tikzpicture}[line cap=round,line join=round,x=1.0cm,y=1.0cm, scale=0.25, baseline={([yshift=-.6ex]current bounding box.center)}, thick, shift={(0,0)}, scale=0.7]
  \def\tate{3} 
  \def\yoko{0.5} 
  \def\sen{7} 
  \def\gaisen{8} 
  \pgfmathsetmacro\hankei{\tate*1.7} 
  \def\hankeinosa{2} 
  \filldraw[fill=black!30] (0,0) rectangle ++(\yoko,\tate);
  \draw (-\gaisen,\tate) -- (0,\tate);
  \draw (-\gaisen,0) -- (0,0);
  \draw (\yoko,\tate/2) -- (\yoko+\gaisen,\tate/2);
  \draw (\yoko/2,\tate/2) circle (\hankei);
  \draw (\yoko/2,\tate/2) circle (\hankei+\hankeinosa);
  \draw (\yoko/2,\tate/2) circle (\hankei+\hankeinosa+0.5);
  \fill (\yoko/2,\tate/2+\hankei+\hankeinosa*0.25) circle (1mm); 
  \fill (\yoko/2,\tate/2+\hankei+\hankeinosa*0.5) circle (1mm); 
  \fill (\yoko/2,\tate/2+\hankei+\hankeinosa*0.75) circle (1mm); 
  \fill (\yoko/2,\tate/2-\hankei-\hankeinosa*0.25) circle (1mm); 
  \fill (\yoko/2,\tate/2-\hankei-\hankeinosa*0.5) circle (1mm); 
  \fill (\yoko/2,\tate/2-\hankei-\hankeinosa*0.75) circle (1mm); 
\end{tikzpicture}
=
\begin{tikzpicture}[line cap=round,line join=round,x=1.0cm,y=1.0cm, scale=0.25, baseline={([yshift=-.6ex]current bounding box.center)}, thick, shift={(0,0)}, scale=0.7]
  \def\tate{3} 
  \def\yoko{0.5} 
  \def\sen{7} 
  \def\gaisen{8} 
  \pgfmathsetmacro\hankei{\tate*1.7} 
  \def\hankeinosa{2} 
  \filldraw[fill=black!30] (0,0) rectangle ++(\yoko,\tate);
  \draw (-\gaisen,\tate) -- (0,\tate);
  \draw (-\gaisen,0) -- (0,0);
  \draw (\yoko,\tate/2) -- (\yoko+\gaisen,\tate/2);
  \draw (\yoko/2,\tate/2) circle (\hankei);
  \draw (\yoko/2,\tate/2) circle (\hankei+\hankeinosa);
  \draw (\yoko/2,\tate/2) circle (\hankei-0.5);
  \fill (\yoko/2,\tate/2+\hankei+\hankeinosa*0.25) circle (1mm); 
  \fill (\yoko/2,\tate/2+\hankei+\hankeinosa*0.5) circle (1mm); 
  \fill (\yoko/2,\tate/2+\hankei+\hankeinosa*0.75) circle (1mm); 
  \fill (\yoko/2,\tate/2-\hankei-\hankeinosa*0.25) circle (1mm); 
  \fill (\yoko/2,\tate/2-\hankei-\hankeinosa*0.5) circle (1mm); 
  \fill (\yoko/2,\tate/2-\hankei-\hankeinosa*0.75) circle (1mm); 
\end{tikzpicture}
\nonumber \\
&=&
\begin{tikzpicture}[line cap=round,line join=round,x=1.0cm,y=1.0cm, scale=0.25, baseline={([yshift=-.6ex]current bounding box.center)}, thick, shift={(0,0)}, scale=0.7]
  \def\tate{3} 
  \def\yoko{0.5} 
  \def\sen{7} 
  \def\gaisen{8} 
  \pgfmathsetmacro\hankei{\tate*1.7} 
  \def\hankeinosa{2} 
  \filldraw[fill=black!30] (0,0) rectangle ++(\yoko,\tate);
  \draw (-\gaisen,\tate) -- (0,\tate);
  \draw (-\gaisen,0) -- (0,0);
  \draw (\yoko,\tate/2) -- (\yoko+\gaisen,\tate/2);
  \draw (\yoko/2,\tate/2) circle (\hankei);
  \draw (\yoko/2,\tate/2) circle (\hankei+\hankeinosa);
  \draw (\yoko/2+\hankei/2,\tate/2) circle (\hankei/3);
  \fill (\yoko/2,\tate/2+\hankei+\hankeinosa*0.25) circle (1mm); 
  \fill (\yoko/2,\tate/2+\hankei+\hankeinosa*0.5) circle (1mm); 
  \fill (\yoko/2,\tate/2+\hankei+\hankeinosa*0.75) circle (1mm); 
  \fill (\yoko/2,\tate/2-\hankei-\hankeinosa*0.25) circle (1mm); 
  \fill (\yoko/2,\tate/2-\hankei-\hankeinosa*0.5) circle (1mm); 
  \fill (\yoko/2,\tate/2-\hankei-\hankeinosa*0.75) circle (1mm); 
\end{tikzpicture}
\nonumber 
=
\begin{tikzpicture}[line cap=round,line join=round,x=1.0cm,y=1.0cm, scale=0.25, baseline={([yshift=-.6ex]current bounding box.center)}, thick, shift={(0,0)}, scale=0.7]
  \def\tate{3} 
  \def\yoko{0.5} 
  \def\sen{7} 
  \def\gaisen{8} 
  \pgfmathsetmacro\hankei{\tate*1.7} 
  \def\hankeinosa{2} 
  \filldraw[fill=black!30] (0,0) rectangle ++(\yoko,\tate);
  \draw (-\gaisen,\tate) -- (0,\tate);
  \draw (-\gaisen,0) -- (0,0);
  \draw (\yoko,\tate/2) -- (\yoko+\gaisen,\tate/2);
  \draw (\yoko/2,\tate/2) circle (\hankei);
  \draw (\yoko/2,\tate/2) circle (\hankei+\hankeinosa);
  \fill (\yoko/2,\tate/2+\hankei+\hankeinosa*0.25) circle (1mm); 
  \fill (\yoko/2,\tate/2+\hankei+\hankeinosa*0.5) circle (1mm); 
  \fill (\yoko/2,\tate/2+\hankei+\hankeinosa*0.75) circle (1mm); 
  \fill (\yoko/2,\tate/2-\hankei-\hankeinosa*0.25) circle (1mm); 
  \fill (\yoko/2,\tate/2-\hankei-\hankeinosa*0.5) circle (1mm); 
  \fill (\yoko/2,\tate/2-\hankei-\hankeinosa*0.75) circle (1mm); 
\end{tikzpicture}\\
&=&[T_{\alpha\beta\gamma}^{R(2,1)}]^{N_{\alpha\beta\gamma}}\cdot W_{\alpha\beta\gamma}.
\end{eqnarray}
Here, we used the broken zipper equation Eq.\ \eqref{eq:broken_zipper}. 
Thus, the tensor is the fixed point unless it is the zero tensor. By multiplying left reduction $V_{\alpha\beta\gamma}$ to the right-hand side of Eq.\ \eqref{eq:3_fp_from_reduction_right}, 
the value of the diagram turns out to be the identity matrix: 
\begin{eqnarray}
        V_{\alpha\beta\gamma}\Lambda^{R(2,1)}_{\alpha\beta\gamma}=
        \begin{tikzpicture}[line cap=round,line join=round,x=1.0cm,y=1.0cm, scale=0.25, baseline={([yshift=-.6ex]current bounding box.center)}, thick, shift={(0,0)}, scale=0.9]
  \def\tate{3} 
  \def\yoko{0.5} 
  \def\sen{7} 
  \def\gaisen{8} 
  \pgfmathsetmacro\hankei{\tate*1.7} 
  \def\hankeinosa{2} 
  \filldraw[fill=black!30] (-\gaisen-\yoko,0) rectangle ++(\yoko,\tate);
  \node[anchor=east] at (-\yoko-\gaisen,\tate)
  {$V_{\alpha\beta\gamma}$};
  \draw (-\gaisen-\yoko-2,\tate/2) -- (-\gaisen-\yoko,\tate/2);
  \filldraw[fill=black!30] (0,0) rectangle ++(\yoko,\tate);
  \node[anchor=west] at (\yoko,\tate) {$W_{\alpha\beta\gamma}$};
  \draw (-\gaisen,\tate) -- (0,\tate);
  \draw (-\gaisen,0) -- (0,0);
  \draw (\yoko,\tate/2) -- (\yoko+\gaisen,\tate/2);
  \draw (\yoko/2,\tate/2) circle (\hankei);
  \draw (\yoko/2,\tate/2) circle (\hankei+\hankeinosa);
  \fill (\yoko/2,\tate/2+\hankei+\hankeinosa*0.25) circle (1mm); 
  \fill (\yoko/2,\tate/2+\hankei+\hankeinosa*0.5) circle (1mm); 
  \fill (\yoko/2,\tate/2+\hankei+\hankeinosa*0.75) circle (1mm); 
  \fill (\yoko/2,\tate/2-\hankei-\hankeinosa*0.25) circle (1mm); 
  \fill (\yoko/2,\tate/2-\hankei-\hankeinosa*0.5) circle (1mm); 
  \fill (\yoko/2,\tate/2-\hankei-\hankeinosa*0.75) circle (1mm); 
\end{tikzpicture}
=
\begin{tikzpicture}[line cap=round,line join=round,x=1.0cm,y=1.0cm, scale=0.25, baseline={([yshift=-.6ex]current bounding box.center)}, thick, shift={(0,0)}]
  \def\tate{1.5} 
  \def\yoko{0.5} 
  \def\sen{5} 
  \def\gaisen{8} 
  \pgfmathsetmacro\hankei{\tate*1.7} 
  \def\hankeinosa{2} 
  \draw (\yoko-\gaisen,\tate/2) -- (\yoko+\gaisen,\tate/2);
  \draw (\yoko/2,\tate/2) circle (\hankei);
  \draw (\yoko/2,\tate/2) circle (\hankei+\hankeinosa);
  \fill (\yoko/2,\tate/2+\hankei+\hankeinosa*0.25) circle (1mm); 
  \fill (\yoko/2,\tate/2+\hankei+\hankeinosa*0.5) circle (1mm); 
  \fill (\yoko/2,\tate/2+\hankei+\hankeinosa*0.75) circle (1mm); 
  \fill (\yoko/2,\tate/2-\hankei-\hankeinosa*0.25) circle (1mm); 
  \fill (\yoko/2,\tate/2-\hankei-\hankeinosa*0.5) circle (1mm); 
  \fill (\yoko/2,\tate/2-\hankei-\hankeinosa*0.75) circle (1mm); 
\end{tikzpicture}
=
\begin{tikzpicture}[line cap=round,line join=round,x=1.0cm,y=1.0cm, scale=0.25, baseline={([yshift=-.6ex]current bounding box.center)}, thick, shift={(0,0)}]
  \def\tate{1.5} 
  \def\yoko{0.5} 
  \def\sen{5} 
  \def\gaisen{3} 
  \pgfmathsetmacro\hankei{\tate*1.7} 
  \def\hankeinosa{2} 
  \draw (\yoko-\gaisen,\tate/2) -- (\yoko+\gaisen,\tate/2);
\end{tikzpicture}
.
\end{eqnarray}
This implies that the tensor is nonzero. 
Therefore, the right-hand side of 
Eq.\ \eqref{eq:3_fp_from_reduction_right} is the fixed point of the transfer matrix $T_{\alpha\beta\gamma}^{R(2,1)}$. 
Similarly, we can show that the right-hand side of 
Eq.\ \eqref{eq:3_fp_from_reduction_left} is the fixed point of the transfer matrix $T_{\alpha\beta\gamma}^{L(2,1)}$.
\end{proof}
Remark that, as a consequence of Lem.\ \ref{thm:fp_as_reduction}, the fixed point $(\Lambda^{L(2,1)}_{\alpha\beta\gamma},\Lambda^{R(2,1)}_{\alpha\beta\gamma})$ itself is a reduction pair.

\subsubsection{Fixed Points of the $4$-Transfer Matrix}

We can construct the fixed-point tensor of the $4$-transfer matrix as a concatenation of the fixed points of the $2$- and $3$-transfer matrices:
\begin{lemma}\label{thm:4_fp_as_reduction}\;\\
\\
(i) Let $\Lambda^{R(4)}_{\alpha\beta\gamma\delta}$ be a $4$-leg tensor defined by
    \begin{eqnarray}\label{eq:4_fp_right}
        \Lambda^{R(4)}_{\alpha\beta\gamma\delta}=[T^{(4)}_{\alpha\beta\gamma}]^{N_{\alpha\beta\gamma\delta}}\Lambda^{R(2,1)}_{\alpha\beta\gamma}\Lambda^{R(2,1)}_{\alpha\gamma\delta}\Lambda^{R}_{\alpha\delta}=
        \begin{tikzpicture}[line cap=round,line join=round,x=1.0cm,y=1.0cm, scale=0.3, baseline={([yshift=-.6ex]current bounding box.center)}, thick, shift={(0,0)}, scale=0.7]
  \def\tate{3} 
  \def\yoko{0.5} 
  \def\sen{7} 
  \def\gaisen{2} 
  \pgfmathsetmacro\hankei{\tate*1.7} 
  \def\hankeinosa{2} 
  \def\nobishiro{1}
  \draw (0,0) rectangle ++(\yoko,\tate);
  \draw (-\gaisen,\tate) -- (0,\tate);
  \draw (-\gaisen-\gaisen-\yoko-\gaisen-\yoko-\gaisen*2,0) -- (0,0);
  \draw (-\gaisen-\yoko,\tate/2) rectangle ++(\yoko,\tate);
  \draw (-\gaisen-\gaisen-\yoko,\tate+\tate/2) -- (-\gaisen-\yoko,\tate+\tate/2);
  \draw (-\gaisen-\gaisen-\yoko-\gaisen-\yoko-\gaisen*2,\tate/2) -- (-\gaisen-\yoko,\tate/2);
  \draw (-\gaisen-\yoko-\gaisen-\yoko,\tate/2+\tate/2) rectangle ++(\yoko,\tate);
  \draw (-\gaisen-\gaisen-\yoko-\gaisen-\yoko-\gaisen*2,\tate+\tate/2+\tate/2) -- (-\gaisen-\yoko-\gaisen-\yoko,\tate+\tate/2+\tate/2);
  \draw (-\gaisen-\gaisen-\yoko-\gaisen-\yoko-\gaisen*2,\tate/2+\tate/2) -- (-\gaisen-\yoko-\gaisen-\yoko,\tate/2+\tate/2);
  \draw (-\gaisen-\gaisen-\yoko-\gaisen-\yoko-\gaisen*2+3*\gaisen*0.25,\tate+\tate/2+\tate/2+\nobishiro) -- (-\gaisen-\gaisen-\yoko-\gaisen-\yoko-\gaisen*2+3*\gaisen*0.25,0-\nobishiro);
  \fill (-\gaisen-\gaisen-\yoko-\gaisen-\yoko-\gaisen*2+3*\gaisen*0.25,\tate+\tate/2+\tate/2+\nobishiro) circle (2mm); 
  \fill (-\gaisen-\gaisen-\yoko-\gaisen-\yoko-\gaisen*2+3*\gaisen*0.25,0-\nobishiro) circle (2mm);
  \draw (-\gaisen-\gaisen-\yoko-\gaisen-\yoko-\gaisen*2+3*\gaisen*0.75,\tate+\tate/2+\tate/2+\nobishiro) -- (-\gaisen-\gaisen-\yoko-\gaisen-\yoko-\gaisen*2+3*\gaisen*0.75,0-\nobishiro);
  \fill (-\gaisen-\gaisen-\yoko-\gaisen-\yoko-\gaisen*2+3*\gaisen*0.75,\tate+\tate/2+\tate/2+\nobishiro) circle (2mm); 
  \fill (-\gaisen-\gaisen-\yoko-\gaisen-\yoko-\gaisen*2+3*\gaisen*0.75,0-\nobishiro) circle (2mm);
  \fill (-\gaisen-\gaisen-\yoko-\gaisen-\yoko-\gaisen*2+3*\gaisen*0.375,\tate+\tate/2+\tate/2+\nobishiro/2) circle (1mm); 
  \fill (-\gaisen-\gaisen-\yoko-\gaisen-\yoko-\gaisen*2+3*\gaisen*0.5,\tate+\tate/2+\tate/2+\nobishiro/2) circle (1mm); 
  \fill (-\gaisen-\gaisen-\yoko-\gaisen-\yoko-\gaisen*2+3*\gaisen*0.625,\tate+\tate/2+\tate/2+\nobishiro/2) circle (1mm); 
  \fill (-\gaisen-\gaisen-\yoko-\gaisen-\yoko-\gaisen*2+3*\gaisen*0.375,-\nobishiro/2) circle (1mm); 
  \fill (-\gaisen-\gaisen-\yoko-\gaisen-\yoko-\gaisen*2+3*\gaisen*0.5,-\nobishiro/2) circle (1mm); 
  \fill (-\gaisen-\gaisen-\yoko-\gaisen-\yoko-\gaisen*2+3*\gaisen*0.625,-\nobishiro/2) circle (1mm); 
\end{tikzpicture} 
\, ,
\end{eqnarray}
where $N_{\alpha\beta\gamma\delta}$ is the nilpotency length of the reduction. Then, $\Lambda^{R(4)}_{\alpha\beta\gamma\delta}$ is the right fixed point of the $4$-transfer matrix $T^{(4)}_{\alpha\beta\gamma\delta}$. 
    \\
    (ii) Let $\Lambda^{L(4)}_{\alpha\beta\gamma\delta}$ be a $4$-leg tensor defined by
    \begin{eqnarray}\label{eq:4_fp_left}
        \Lambda^{L(4)}_{\alpha\beta\gamma\delta}=\Lambda^{L}_{\alpha\delta}\Lambda^{L(2,1)}_{\alpha\gamma\delta}\Lambda^{L(2,1)}_{\alpha\beta\gamma}[T^{(4)}_{\alpha\beta\gamma}]^{N_{\alpha\beta\gamma\delta}}=
=1.
\end{eqnarray}
    This implies that the tensor is nonzero. 
    Therefore, the right-hand side of Eq.\ \eqref{eq:4_fp_right} 
    is the right fixed point of the transfer matrix $T^{(4)}_{\alpha\beta\gamma\delta}$.\\
    (ii) Similarly, we can show that the right-hand side of Eq.\ \eqref{eq:4_fp_left} 
    is the left fixed point of the transfer matrix $T^{(4)}_{\alpha\beta\gamma\delta}$.
\end{proof}
By using the fixed point tensors, we can construct the fixed point of the $4$-transfer matrix $T_{\alpha\beta\gamma\delta}^{(4)}$ in a different manner:
\begin{lemma}\;\\
\\
(i) Let $\tilde{\Lambda}^{R(4)}_{\alpha\beta\gamma\delta}$ be a $4$-leg tensor defined by
    \begin{eqnarray}\label{eq:4_fp_right}
        \tilde{\Lambda}^{R(4)}_{\alpha\beta\gamma\delta}:=T^{(4)}_{\alpha\beta\gamma}\Lambda^{R(2,1)}_{\beta\gamma\delta}\Lambda^{R(2,1)}_{\alpha\beta\delta}\Lambda^{R}_{\alpha\delta}
        =
        \begin{tikzpicture}[line cap=round,line join=round,x=1.0cm,y=1.0cm, scale=0.3, baseline={([yshift=-.6ex]current bounding box.center)}, thick, shift={(0,0)}, scale=0.7]
  \def\tate{3} 
  \def\yoko{0.5} 
  \def\sen{7} 
  \def\gaisen{2} 
  \pgfmathsetmacro\hankei{\tate*1.7} 
  \def\hankeinosa{2} 
  \def\nobishiro{1}
  \draw (0,0) rectangle ++(\yoko,\tate+\tate/2);
  \draw (-\gaisen,\tate+\tate/2) -- (0,\tate+\tate/2);
  \draw (-\gaisen-\gaisen-\yoko-\gaisen-\yoko-\gaisen*2,0) -- (0,0);
  \draw (-\gaisen-\yoko,\tate/2+\tate/2) rectangle ++(\yoko,\tate);
  \draw (-\gaisen-\gaisen-\yoko,\tate/2+\tate/2) -- (-\gaisen-\yoko,\tate/2+\tate/2);
  \draw (-\gaisen-\gaisen-\yoko-\gaisen-\yoko-\gaisen*2,\tate+\tate/2+\tate/2) -- (-\gaisen-\yoko,\tate+\tate/2+\tate/2);
  \draw (-\gaisen-\yoko-\gaisen-\yoko,\tate/2) rectangle ++(\yoko,\tate);
  \draw (-\gaisen-\gaisen-\yoko-\gaisen-\yoko-\gaisen*2,\tate+\tate/2) -- (-\gaisen-\yoko-\gaisen-\yoko,\tate+\tate/2);
  \draw (-\gaisen-\gaisen-\yoko-\gaisen-\yoko-\gaisen*2,\tate/2) -- (-\gaisen-\yoko-\gaisen-\yoko,\tate/2);
  \draw (-\gaisen-\gaisen-\yoko-\gaisen-\yoko-\gaisen*2+3*\gaisen*0.25,\tate+\tate/2+\tate/2+\nobishiro) -- (-\gaisen-\gaisen-\yoko-\gaisen-\yoko-\gaisen*2+3*\gaisen*0.25,0-\nobishiro);
  \fill (-\gaisen-\gaisen-\yoko-\gaisen-\yoko-\gaisen*2+3*\gaisen*0.25,\tate+\tate/2+\tate/2+\nobishiro) circle (2mm); 
  \fill (-\gaisen-\gaisen-\yoko-\gaisen-\yoko-\gaisen*2+3*\gaisen*0.25,0-\nobishiro) circle (2mm);
  \draw (-\gaisen-\gaisen-\yoko-\gaisen-\yoko-\gaisen*2+3*\gaisen*0.75,\tate+\tate/2+\tate/2+\nobishiro) -- (-\gaisen-\gaisen-\yoko-\gaisen-\yoko-\gaisen*2+3*\gaisen*0.75,0-\nobishiro);
  \fill (-\gaisen-\gaisen-\yoko-\gaisen-\yoko-\gaisen*2+3*\gaisen*0.75,\tate+\tate/2+\tate/2+\nobishiro) circle (2mm); 
  \fill (-\gaisen-\gaisen-\yoko-\gaisen-\yoko-\gaisen*2+3*\gaisen*0.75,0-\nobishiro) circle (2mm);
  \fill (-\gaisen-\gaisen-\yoko-\gaisen-\yoko-\gaisen*2+3*\gaisen*0.375,\tate+\tate/2+\tate/2+\nobishiro/2) circle (1mm); 
  \fill (-\gaisen-\gaisen-\yoko-\gaisen-\yoko-\gaisen*2+3*\gaisen*0.5,\tate+\tate/2+\tate/2+\nobishiro/2) circle (1mm); 
  \fill (-\gaisen-\gaisen-\yoko-\gaisen-\yoko-\gaisen*2+3*\gaisen*0.625,\tate+\tate/2+\tate/2+\nobishiro/2) circle (1mm); 
  \fill (-\gaisen-\gaisen-\yoko-\gaisen-\yoko-\gaisen*2+3*\gaisen*0.375,-\nobishiro/2) circle (1mm); 
  \fill (-\gaisen-\gaisen-\yoko-\gaisen-\yoko-\gaisen*2+3*\gaisen*0.5,-\nobishiro/2) circle (1mm); 
  \fill (-\gaisen-\gaisen-\yoko-\gaisen-\yoko-\gaisen*2+3*\gaisen*0.625,-\nobishiro/2) circle (1mm); 
\end{tikzpicture}\, .
    \end{eqnarray}
    Then, $\tilde{\Lambda}^{R(4)}_{\alpha\beta\gamma\delta}$ is the right fixed point of the $4$-transfer matrix $T^{(4)}_{\alpha\beta\gamma\delta}$. 
    \\
    (ii) Let $\tilde{\Lambda}^{L(4)}_{\alpha\beta\gamma\delta}$ be a $4$-leg tensor defined by
    \begin{eqnarray}\label{eq:4_fp_left}
        \tilde{\Lambda}^{L(4)}_{\alpha\beta\gamma\delta}=\Lambda^{L}_{\alpha\delta}\Lambda^{L(2,1)}_{\alpha\gamma\delta}\Lambda^{L(2,1)}_{\alpha\beta\gamma}T^{(4)}_{\alpha\beta\gamma}
        =
        \begin{tikzpicture}[line cap=round,line join=round,x=1.0cm,y=1.0cm, scale=0.3, baseline={([yshift=-.6ex]current bounding box.center)}, thick, shift={(0,0)}, scale=0.7]
  \def\tate{3} 
  \def\yoko{0.5} 
  \def\sen{7} 
  \def\gaisen{2} 
  \pgfmathsetmacro\hankei{\tate*1.7} 
  \def\hankeinosa{2} 
  \def\nobishiro{1}
  \draw (0,0) rectangle ++(\yoko,\tate+\tate/2);
  \draw (\yoko,\tate+\tate/2) -- (\yoko+\gaisen,\tate+\tate/2);
  \draw (\yoko,0) -- (\yoko+\gaisen+\gaisen+\yoko+\gaisen+\yoko+\gaisen*2,0);
  \draw (\yoko+\gaisen,\tate/2+\tate/2) rectangle ++(\yoko,\tate);
  \draw (\yoko+\gaisen+\yoko,\tate/2+\tate/2+\tate) -- (\yoko+\gaisen+\gaisen+\yoko+\gaisen+\yoko+\gaisen*2,\tate/2+\tate/2+\tate);
  \draw (\yoko+\gaisen+\yoko,\tate/2+\tate/2) -- (\yoko+\gaisen+\yoko+\gaisen,\tate/2+\tate/2);
  \draw (\yoko+\gaisen+\yoko+\gaisen,\tate/2) rectangle ++(\yoko,\tate);
  \draw (\yoko+\gaisen+\yoko+\gaisen+\yoko,\tate+\tate/2) -- (\yoko+\gaisen+\gaisen+\yoko+\gaisen+\yoko+\gaisen*2,\tate+\tate/2);
  \draw (\yoko+\gaisen+\yoko+\gaisen+\yoko,\tate/2) -- (\yoko+\gaisen+\gaisen+\yoko+\gaisen+\yoko+\gaisen*2,\tate/2);
  \draw (\yoko+\gaisen+\yoko+\gaisen+\yoko+3*\gaisen*0.25,\tate+\tate/2+\tate/2+\nobishiro) -- (\yoko+\gaisen+\yoko+\gaisen+\yoko+3*\gaisen*0.25,0-\nobishiro);
  \fill (\yoko+\gaisen+\yoko+\gaisen+\yoko+3*\gaisen*0.25,\tate+\tate/2+\tate/2+\nobishiro) circle (2mm); 
  \fill (\yoko+\gaisen+\yoko+\gaisen+\yoko+3*\gaisen*0.25,0-\nobishiro) circle (2mm);
  \draw (\yoko+\gaisen+\yoko+\gaisen+\yoko+3*\gaisen*0.75,\tate+\tate/2+\tate/2+\nobishiro) -- (\yoko+\gaisen+\yoko+\gaisen+\yoko+3*\gaisen*0.75,0-\nobishiro);
  \fill (\yoko+\gaisen+\yoko+\gaisen+\yoko+3*\gaisen*0.75,\tate+\tate/2+\tate/2+\nobishiro) circle (2mm); 
  \fill (\yoko+\gaisen+\yoko+\gaisen+\yoko+3*\gaisen*0.75,0-\nobishiro) circle (2mm);
  \fill (\yoko+\gaisen+\yoko+\gaisen+\yoko+3*\gaisen*0.375,\tate+\tate/2+\tate/2+\nobishiro/2) circle (1mm); 
  \fill (\yoko+\gaisen+\yoko+\gaisen+\yoko+3*\gaisen*0.5,\tate+\tate/2+\tate/2+\nobishiro/2) circle (1mm); 
  \fill (\yoko+\gaisen+\yoko+\gaisen+\yoko+3*\gaisen*0.625,\tate+\tate/2+\tate/2+\nobishiro/2) circle (1mm); 
  \fill (\yoko+\gaisen+\yoko+\gaisen+\yoko+3*\gaisen*0.375,-\nobishiro/2) circle (1mm); 
  \fill (\yoko+\gaisen+\yoko+\gaisen+\yoko+3*\gaisen*0.5,-\nobishiro/2) circle (1mm); 
  \fill (\yoko+\gaisen+\yoko+\gaisen+\yoko+3*\gaisen*0.625,-\nobishiro/2) circle (1mm); 
\end{tikzpicture}\, .
    \end{eqnarray}
     Then, $\tilde{\Lambda}^{L(4)}_{\alpha\beta\gamma\delta}$ is the left fixed point of the $4$-transfer matrix $T^{(4)}_{\alpha\beta\gamma\delta}$. 
\end{lemma}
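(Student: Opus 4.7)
The plan is to mirror the strategy used in the proof of Lem.\ \ref{thm:4_fp_as_reduction}, but with the opposite triangulation of the quadrilateral $\alpha\beta\gamma\delta$: instead of contracting along the diagonal $\alpha\gamma$, the tensors $\Lambda^{R(2,1)}_{\beta\gamma\delta}$ and $\Lambda^{R(2,1)}_{\alpha\beta\delta}$ naturally contract along the diagonal $\beta\delta$. Concretely, by Lem.\ \ref{thm:fp_as_reduction} (ii), $\Lambda^{R(2,1)}_{\beta\gamma\delta}$ is a right reduction from $\sum_{j}B^{ij}_{\beta\gamma}\otimes B^{jk}_{\gamma\delta}$ to $B^{ik}_{\beta\delta}$; likewise $\Lambda^{R(2,1)}_{\alpha\beta\delta}$ is a right reduction from $\sum_{j}B^{ij}_{\alpha\beta}\otimes B^{jk}_{\beta\delta}$ to $B^{ik}_{\alpha\delta}$; and $\Lambda^{R(2)}_{\alpha\delta}$ is a right reduction from $\sum_{j}B^{ij}_{\alpha\delta}\otimes B^{jk}_{\delta\alpha}$ to $\delta^{ik}$.

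For part (i), I would apply $T^{(4)}_{\alpha\beta\gamma\delta}$ from the left to $\tilde{\Lambda}^{R(4)}_{\alpha\beta\gamma\delta}$, producing an extra column of transition MPO tensors $B_{\alpha\beta}\otimes B_{\beta\gamma}\otimes B_{\gamma\delta}\otimes B_{\delta\alpha}$ next to the stack. Using the broken-zipper condition Eq.\ \eqref{eq:broken_zipper} of Prop.\ \ref{thm:prop5} applied to the topmost reduction $\Lambda^{R(2,1)}_{\beta\gamma\delta}$, one extra $B_{\beta\gamma}\otimes B_{\gamma\delta}$ factor can be pushed through the reduction and replaced by a single $B_{\beta\delta}$ factor together with an additional transfer matrix above. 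Iterating this until only $N_{\beta\gamma\delta}$ transfer matrices remain above $\Lambda^{R(2,1)}_{\beta\gamma\delta}$, then doing the same at the middle reduction $\Lambda^{R(2,1)}_{\alpha\beta\delta}$ (which eats $B_{\alpha\beta}\otimes B_{\beta\delta}\to B_{\alpha\delta}$), and finally at $\Lambda^{R(2)}_{\alpha\delta}$ (which eats $B_{\alpha\delta}\otimes B_{\delta\alpha}\to \delta$), the extra column is completely absorbed and one recovers the original diagram $\tilde{\Lambda}^{R(4)}_{\alpha\beta\gamma\delta}$, establishing the fixed-point equation up to the scalar absorbed at each step (normalised to $1$ by the definition of $T^{(4)}_{\alpha\beta\gamma\delta}$).

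To show $\tilde{\Lambda}^{R(4)}_{\alpha\beta\gamma\delta}\neq 0$, I would contract it with the corresponding left reductions $V_{\alpha\beta\delta}$, $V_{\beta\gamma\delta}$, $V_{\alpha\delta}$. Zipping through each reduction in turn using Prop.\ \ref{thm:prop5}, the free $B$-tensors collapse pairwise with their inverses and the whole diagram reduces to a product of traces of identities, which is a positive integer; in particular it is nonzero. Combined with the fixed-point equation, this proves $\tilde{\Lambda}^{R(4)}_{\alpha\beta\gamma\delta}$ is indeed a right fixed point of $T^{(4)}_{\alpha\beta\gamma\delta}$.

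Part (ii) is completely parallel, using the left-action versions of Prop.\ \ref{thm:fp_unitary} and the left reductions of the intermediate tensors. The main obstacle I anticipate is purely bookkeeping: keeping careful track of the order in which broken-zipper moves are applied, so that one never tries to zip a reduction before enough copies of the transfer matrix have been pushed against it to satisfy the nilpotency-length threshold $N$. Once that order is fixed (top $\to$ middle $\to$ bottom for the right case, bottom $\to$ middle $\to$ top for the left case), the argument is a routine diagrammatic manipulation.
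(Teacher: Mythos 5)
Your proposal is correct and follows essentially the same route as the paper, which proves this lemma simply by noting that the argument of Lem.~\ref{thm:4_fp_as_reduction} carries over verbatim: apply $T^{(4)}_{\alpha\beta\gamma\delta}$, absorb the extra column by iterating the broken zipper condition of Prop.~\ref{thm:prop5} through the three reduction tensors in order (here along the $\beta\delta$ diagonal rather than the $\alpha\gamma$ one), and rule out the zero tensor by contracting with the corresponding left reductions. Your identification of the reduction data for each of $\Lambda^{R(2,1)}_{\beta\gamma\delta}$, $\Lambda^{R(2,1)}_{\alpha\beta\delta}$ and $\Lambda^{R(2)}_{\alpha\delta}$, and of the required zipping order, matches what the paper's earlier proof does.
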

The proof is the same as that of Lem.\ \ref{thm:4_fp_as_reduction}. 

Since the fixed point of the $4$-transfer matrix $T^{(4)}_{\alpha\beta\gamma\delta}$ is unique, $\Lambda^{R(4)}_{\alpha\beta\gamma\delta}$ and $\tilde{\Lambda}^{R(4)}_{\alpha\beta\gamma\delta}$ are proportional to each other. 
We denote the proportionality constant as $c_{\alpha\beta\gamma\delta}$:
\begin{eqnarray}\label{eq:prop_constant}
    \Lambda^{R(4)}_{\alpha\beta\gamma\delta}=c_{\alpha\beta\gamma\delta}\tilde{\Lambda}^{R(4)}_{\alpha\beta\gamma\delta}
    \quad
    \Longleftrightarrow
    \quad
\begin{tikzpicture}[line cap=round,line join=round,x=1.0cm,y=1.0cm, scale=0.3, baseline={([yshift=-.6ex]current bounding box.center)}, thick, shift={(0,0)}, scale=0.7]
  \def\tate{3} 
  \def\yoko{0.5} 
  \def\sen{7} 
  \def\gaisen{2} 
  \pgfmathsetmacro\hankei{\tate*1.7} 
  \def\hankeinosa{2} 
  \def\nobishiro{1}
  \draw (0,0) rectangle ++(\yoko,\tate);
  \draw (-\gaisen,\tate) -- (0,\tate);
  \draw (-\gaisen-\gaisen-\yoko-\gaisen-\yoko-\gaisen*2,0) -- (0,0);
  \draw (-\gaisen-\yoko,\tate/2) rectangle ++(\yoko,\tate);
  \draw (-\gaisen-\gaisen-\yoko,\tate+\tate/2) -- (-\gaisen-\yoko,\tate+\tate/2);
  \draw (-\gaisen-\gaisen-\yoko-\gaisen-\yoko-\gaisen*2,\tate/2) -- (-\gaisen-\yoko,\tate/2);
  \draw (-\gaisen-\yoko-\gaisen-\yoko,\tate/2+\tate/2) rectangle ++(\yoko,\tate);
  \draw (-\gaisen-\gaisen-\yoko-\gaisen-\yoko-\gaisen*2,\tate+\tate/2+\tate/2) -- (-\gaisen-\yoko-\gaisen-\yoko,\tate+\tate/2+\tate/2);
  \draw (-\gaisen-\gaisen-\yoko-\gaisen-\yoko-\gaisen*2,\tate/2+\tate/2) -- (-\gaisen-\yoko-\gaisen-\yoko,\tate/2+\tate/2);
  \draw (-\gaisen-\gaisen-\yoko-\gaisen-\yoko-\gaisen*2+3*\gaisen*0.25,\tate+\tate/2+\tate/2+\nobishiro) -- (-\gaisen-\gaisen-\yoko-\gaisen-\yoko-\gaisen*2+3*\gaisen*0.25,0-\nobishiro);
  \fill (-\gaisen-\gaisen-\yoko-\gaisen-\yoko-\gaisen*2+3*\gaisen*0.25,\tate+\tate/2+\tate/2+\nobishiro) circle (2mm); 
  \fill (-\gaisen-\gaisen-\yoko-\gaisen-\yoko-\gaisen*2+3*\gaisen*0.25,0-\nobishiro) circle (2mm);
  \draw (-\gaisen-\gaisen-\yoko-\gaisen-\yoko-\gaisen*2+3*\gaisen*0.75,\tate+\tate/2+\tate/2+\nobishiro) -- (-\gaisen-\gaisen-\yoko-\gaisen-\yoko-\gaisen*2+3*\gaisen*0.75,0-\nobishiro);
  \fill (-\gaisen-\gaisen-\yoko-\gaisen-\yoko-\gaisen*2+3*\gaisen*0.75,\tate+\tate/2+\tate/2+\nobishiro) circle (2mm); 
  \fill (-\gaisen-\gaisen-\yoko-\gaisen-\yoko-\gaisen*2+3*\gaisen*0.75,0-\nobishiro) circle (2mm);
  \fill (-\gaisen-\gaisen-\yoko-\gaisen-\yoko-\gaisen*2+3*\gaisen*0.375,\tate+\tate/2+\tate/2+\nobishiro/2) circle (1mm); 
  \fill (-\gaisen-\gaisen-\yoko-\gaisen-\yoko-\gaisen*2+3*\gaisen*0.5,\tate+\tate/2+\tate/2+\nobishiro/2) circle (1mm); 
  \fill (-\gaisen-\gaisen-\yoko-\gaisen-\yoko-\gaisen*2+3*\gaisen*0.625,\tate+\tate/2+\tate/2+\nobishiro/2) circle (1mm); 
  \fill (-\gaisen-\gaisen-\yoko-\gaisen-\yoko-\gaisen*2+3*\gaisen*0.375,-\nobishiro/2) circle (1mm); 
  \fill (-\gaisen-\gaisen-\yoko-\gaisen-\yoko-\gaisen*2+3*\gaisen*0.5,-\nobishiro/2) circle (1mm); 
  \fill (-\gaisen-\gaisen-\yoko-\gaisen-\yoko-\gaisen*2+3*\gaisen*0.625,-\nobishiro/2) circle (1mm); 
\end{tikzpicture}
    =
    c_{\alpha\beta\gamma\delta}\times
    \begin{tikzpicture}[line cap=round,line join=round,x=1.0cm,y=1.0cm, scale=0.3, baseline={([yshift=-.6ex]current bounding box.center)}, thick, shift={(0,0)}, scale=0.7]
  \def\tate{3} 
  \def\yoko{0.5} 
  \def\sen{7} 
  \def\gaisen{2} 
  \pgfmathsetmacro\hankei{\tate*1.7} 
  \def\hankeinosa{2} 
  \def\nobishiro{1}
  \draw (0,0) rectangle ++(\yoko,\tate+\tate/2);
  \draw (-\gaisen,\tate+\tate/2) -- (0,\tate+\tate/2);
  \draw (-\gaisen-\gaisen-\yoko-\gaisen-\yoko-\gaisen*2,0) -- (0,0);
  \draw (-\gaisen-\yoko,\tate/2+\tate/2) rectangle ++(\yoko,\tate);
  \draw (-\gaisen-\gaisen-\yoko,\tate/2+\tate/2) -- (-\gaisen-\yoko,\tate/2+\tate/2);
  \draw (-\gaisen-\gaisen-\yoko-\gaisen-\yoko-\gaisen*2,\tate+\tate/2+\tate/2) -- (-\gaisen-\yoko,\tate+\tate/2+\tate/2);
  \draw (-\gaisen-\yoko-\gaisen-\yoko,\tate/2) rectangle ++(\yoko,\tate);
  \draw (-\gaisen-\gaisen-\yoko-\gaisen-\yoko-\gaisen*2,\tate+\tate/2) -- (-\gaisen-\yoko-\gaisen-\yoko,\tate+\tate/2);
  \draw (-\gaisen-\gaisen-\yoko-\gaisen-\yoko-\gaisen*2,\tate/2) -- (-\gaisen-\yoko-\gaisen-\yoko,\tate/2);
  \draw (-\gaisen-\gaisen-\yoko-\gaisen-\yoko-\gaisen*2+3*\gaisen*0.25,\tate+\tate/2+\tate/2+\nobishiro) -- (-\gaisen-\gaisen-\yoko-\gaisen-\yoko-\gaisen*2+3*\gaisen*0.25,0-\nobishiro);
  \fill (-\gaisen-\gaisen-\yoko-\gaisen-\yoko-\gaisen*2+3*\gaisen*0.25,\tate+\tate/2+\tate/2+\nobishiro) circle (2mm); 
  \fill (-\gaisen-\gaisen-\yoko-\gaisen-\yoko-\gaisen*2+3*\gaisen*0.25,0-\nobishiro) circle (2mm);
  \draw (-\gaisen-\gaisen-\yoko-\gaisen-\yoko-\gaisen*2+3*\gaisen*0.75,\tate+\tate/2+\tate/2+\nobishiro) -- (-\gaisen-\gaisen-\yoko-\gaisen-\yoko-\gaisen*2+3*\gaisen*0.75,0-\nobishiro);
  \fill (-\gaisen-\gaisen-\yoko-\gaisen-\yoko-\gaisen*2+3*\gaisen*0.75,\tate+\tate/2+\tate/2+\nobishiro) circle (2mm); 
  \fill (-\gaisen-\gaisen-\yoko-\gaisen-\yoko-\gaisen*2+3*\gaisen*0.75,0-\nobishiro) circle (2mm);
  \fill (-\gaisen-\gaisen-\yoko-\gaisen-\yoko-\gaisen*2+3*\gaisen*0.375,\tate+\tate/2+\tate/2+\nobishiro/2) circle (1mm); 
  \fill (-\gaisen-\gaisen-\yoko-\gaisen-\yoko-\gaisen*2+3*\gaisen*0.5,\tate+\tate/2+\tate/2+\nobishiro/2) circle (1mm); 
  \fill (-\gaisen-\gaisen-\yoko-\gaisen-\yoko-\gaisen*2+3*\gaisen*0.625,\tate+\tate/2+\tate/2+\nobishiro/2) circle (1mm); 
  \fill (-\gaisen-\gaisen-\yoko-\gaisen-\yoko-\gaisen*2+3*\gaisen*0.375,-\nobishiro/2) circle (1mm); 
  \fill (-\gaisen-\gaisen-\yoko-\gaisen-\yoko-\gaisen*2+3*\gaisen*0.5,-\nobishiro/2) circle (1mm); 
  \fill (-\gaisen-\gaisen-\yoko-\gaisen-\yoko-\gaisen*2+3*\gaisen*0.625,-\nobishiro/2) circle (1mm); 
\end{tikzpicture}
\, .
\end{eqnarray}
By multiplying the left fixed point $\tilde{\Lambda}^{L(4)}_{\alpha\beta\gamma\delta}$ from the left,
\begin{eqnarray}
    \begin{tikzpicture}[line cap=round,line join=round,x=1.0cm,y=1.0cm, scale=0.3, baseline={([yshift=-.6ex]current bounding box.center)}, thick, shift={(0,0)}, scale=0.7]
  \def\tate{3} 
  \def\yoko{0.5} 
  \def\sen{7} 
  \def\gaisen{2} 
  \pgfmathsetmacro\hankei{\tate*1.7} 
  \def\hankeinosa{2} 
  \def\nobishiro{1}
  \draw (0,0) rectangle ++(\yoko,\tate+\tate/2);
  \draw (\yoko,\tate+\tate/2) -- (\yoko+\gaisen,\tate+\tate/2);
  \draw (\yoko,0) -- (\yoko+\gaisen+\gaisen+\yoko+\gaisen+\yoko+\gaisen*2,0);
  \draw (\yoko+\gaisen,\tate/2+\tate/2) rectangle ++(\yoko,\tate);
  \draw (\yoko+\gaisen+\yoko,\tate/2+\tate/2+\tate) -- (\yoko+\gaisen+\gaisen+\yoko+\gaisen+\yoko+\gaisen*2,\tate/2+\tate/2+\tate);
  \draw (\yoko+\gaisen+\yoko,\tate/2+\tate/2) -- (\yoko+\gaisen+\yoko+\gaisen,\tate/2+\tate/2);
  \draw (\yoko+\gaisen+\yoko+\gaisen,\tate/2) rectangle ++(\yoko,\tate);
  \draw (\yoko+\gaisen+\yoko+\gaisen+\yoko,\tate+\tate/2) -- (\yoko+\gaisen+\yoko+\gaisen+\yoko+\yoko+\yoko,\tate+\tate/2);
  \draw (\yoko+\gaisen+\yoko+\gaisen+\yoko,\tate/2) -- (\yoko+\gaisen+\gaisen+\yoko+\gaisen+\yoko+\gaisen*2,\tate/2);
  \draw (\yoko+\gaisen+\yoko+\gaisen+\yoko+3*\gaisen*0.25,\tate+\tate/2+\tate/2+\nobishiro) -- (\yoko+\gaisen+\yoko+\gaisen+\yoko+3*\gaisen*0.25,0-\nobishiro);
  \fill (\yoko+\gaisen+\yoko+\gaisen+\yoko+3*\gaisen*0.25,\tate+\tate/2+\tate/2+\nobishiro) circle (2mm); 
  \fill (\yoko+\gaisen+\yoko+\gaisen+\yoko+3*\gaisen*0.25,0-\nobishiro) circle (2mm);
  \draw (\yoko+\gaisen+\yoko+\gaisen+\yoko+3*\gaisen*0.75,\tate+\tate/2+\tate/2+\nobishiro) -- (\yoko+\gaisen+\yoko+\gaisen+\yoko+3*\gaisen*0.75,0-\nobishiro);
  \fill (\yoko+\gaisen+\yoko+\gaisen+\yoko+3*\gaisen*0.75,\tate+\tate/2+\tate/2+\nobishiro) circle (2mm); 
  \fill (\yoko+\gaisen+\yoko+\gaisen+\yoko+3*\gaisen*0.75,0-\nobishiro) circle (2mm);
  \fill (\yoko+\gaisen+\yoko+\gaisen+\yoko+3*\gaisen*0.375,\tate+\tate/2+\tate/2+\nobishiro/2) circle (1mm); 
  \fill (\yoko+\gaisen+\yoko+\gaisen+\yoko+3*\gaisen*0.5,\tate+\tate/2+\tate/2+\nobishiro/2) circle (1mm); 
  \fill (\yoko+\gaisen+\yoko+\gaisen+\yoko+3*\gaisen*0.625,\tate+\tate/2+\tate/2+\nobishiro/2) circle (1mm); 
  \fill (\yoko+\gaisen+\yoko+\gaisen+\yoko+3*\gaisen*0.375,-\nobishiro/2) circle (1mm); 
  \fill (\yoko+\gaisen+\yoko+\gaisen+\yoko+3*\gaisen*0.5,-\nobishiro/2) circle (1mm); 
  \fill (\yoko+\gaisen+\yoko+\gaisen+\yoko+3*\gaisen*0.625,-\nobishiro/2) circle (1mm); 
  \draw (\yoko+\gaisen+\gaisen+\yoko+\gaisen+\yoko+\gaisen*2+\gaisen+\yoko+\gaisen+\yoko,0) rectangle ++(\yoko,\tate);
  \draw (\yoko+\gaisen+\gaisen+\yoko+\gaisen+\yoko+\gaisen*2+\gaisen+\yoko+\yoko,\tate) -- (\yoko+\gaisen+\gaisen+\yoko+\gaisen+\yoko+\gaisen*2+\gaisen+\yoko+\gaisen+\yoko,\tate);
  \draw (\yoko+\gaisen+\gaisen+\yoko+\gaisen+\yoko+\gaisen*2,0) -- (\yoko+\gaisen+\gaisen+\yoko+\gaisen+\yoko+\gaisen*2+\gaisen+\yoko+\gaisen+\yoko,0);
  \draw (\yoko+\gaisen+\gaisen+\yoko+\gaisen+\yoko+\gaisen*2,\tate/2+\tate/2) rectangle ++(\yoko,\tate);
  \draw (\yoko+\gaisen+\gaisen+\yoko+\gaisen+\yoko+\gaisen*2-\yoko-\yoko,\tate/2+\tate/2) -- (\yoko+\gaisen+\gaisen+\yoko+\gaisen+\yoko+\gaisen*2,\tate/2+\tate/2);
  \draw (\yoko+\gaisen+\gaisen+\yoko+\gaisen+\yoko+\gaisen*2+\yoko+\gaisen,\tate/2) rectangle ++(\yoko,\tate);
  \draw (\yoko+\gaisen+\gaisen+\yoko+\gaisen+\yoko+\gaisen*2+\yoko,\tate+\tate/2) -- (\yoko+\gaisen+\gaisen+\yoko+\gaisen+\yoko+\gaisen*2+\yoko+\gaisen,\tate+\tate/2);
  \draw (\yoko+\gaisen+\gaisen+\yoko+\gaisen+\yoko+\gaisen*2,\tate/2) -- (\yoko+\gaisen+\gaisen+\yoko+\gaisen+\yoko+\gaisen*2+\yoko+\gaisen,\tate/2);
  \draw (\yoko+\gaisen+\yoko+\gaisen+\yoko+\yoko+\yoko,\tate+\tate/2) to [out=0,in=180] (\yoko+\gaisen+\gaisen+\yoko+\gaisen+\yoko+\gaisen*2-\yoko-\yoko,\tate/2+\tate/2);
\end{tikzpicture}
&=&c_{\alpha\beta\gamma\delta}\times
\begin{tikzpicture}[line cap=round,line join=round,x=1.0cm,y=1.0cm, scale=0.3, baseline={([yshift=-.6ex]current bounding box.center)}, thick, shift={(0,0)}, scale=0.7]
  \def\tate{3} 
  \def\yoko{0.5} 
  \def\sen{7} 
  \def\gaisen{2} 
  \pgfmathsetmacro\hankei{\tate*1.7} 
  \def\hankeinosa{2} 
  \def\nobishiro{1}
  \draw (\yoko+\gaisen+\gaisen+\yoko+\gaisen+\yoko+\gaisen*2+\yoko+\gaisen+\yoko+\gaisen,0) rectangle ++(\yoko,\tate+\tate/2);
  \draw (\yoko+\gaisen+\gaisen+\yoko+\gaisen+\yoko+\gaisen*2+\yoko+\gaisen+\yoko,\tate+\tate/2) -- (\yoko+\gaisen+\gaisen+\yoko+\gaisen+\yoko+\gaisen*2+\yoko+\gaisen+\yoko+\gaisen,\tate+\tate/2);
  \draw (\yoko+\gaisen+\gaisen+\yoko+\gaisen+\yoko+\gaisen*2,0) -- (\yoko+\gaisen+\gaisen+\yoko+\gaisen+\yoko+\gaisen*2+\yoko+\gaisen+\yoko+\gaisen,0);
  \draw (\yoko+\gaisen+\gaisen+\yoko+\gaisen+\yoko+\gaisen*2+\yoko+\gaisen,\tate/2+\tate/2) rectangle ++(\yoko,\tate);
  \draw (\yoko+\gaisen+\gaisen+\yoko+\gaisen+\yoko+\gaisen*2,\tate/2+\tate/2+\tate) -- (\yoko+\gaisen+\gaisen+\yoko+\gaisen+\yoko+\gaisen*2+\yoko+\gaisen,\tate/2+\tate/2+\tate);
  \draw (\yoko+\gaisen+\gaisen+\yoko+\gaisen+\yoko+\gaisen*2+\yoko,\tate/2+\tate/2) -- (\yoko+\gaisen+\gaisen+\yoko+\gaisen+\yoko+\gaisen*2+\yoko+\gaisen,\tate/2+\tate/2);
  \draw (\yoko+\gaisen+\gaisen+\yoko+\gaisen+\yoko+\gaisen*2,\tate/2) rectangle ++(\yoko,\tate);
  \draw (\yoko+\gaisen+\yoko+\gaisen+\yoko+3*\gaisen*0.25,\tate+\tate/2+\tate/2+\nobishiro) -- (\yoko+\gaisen+\yoko+\gaisen+\yoko+3*\gaisen*0.25,0-\nobishiro);
  \fill (\yoko+\gaisen+\yoko+\gaisen+\yoko+3*\gaisen*0.25,\tate+\tate/2+\tate/2+\nobishiro) circle (2mm); 
  \fill (\yoko+\gaisen+\yoko+\gaisen+\yoko+3*\gaisen*0.25,0-\nobishiro) circle (2mm);
  \draw (\yoko+\gaisen+\yoko+\gaisen+\yoko+3*\gaisen*0.75,\tate+\tate/2+\tate/2+\nobishiro) -- (\yoko+\gaisen+\yoko+\gaisen+\yoko+3*\gaisen*0.75,0-\nobishiro);
  \fill (\yoko+\gaisen+\yoko+\gaisen+\yoko+3*\gaisen*0.75,\tate+\tate/2+\tate/2+\nobishiro) circle (2mm); 
  \fill (\yoko+\gaisen+\yoko+\gaisen+\yoko+3*\gaisen*0.75,0-\nobishiro) circle (2mm);
  \fill (\yoko+\gaisen+\yoko+\gaisen+\yoko+3*\gaisen*0.375,\tate+\tate/2+\tate/2+\nobishiro/2) circle (1mm); 
  \fill (\yoko+\gaisen+\yoko+\gaisen+\yoko+3*\gaisen*0.5,\tate+\tate/2+\tate/2+\nobishiro/2) circle (1mm); 
  \fill (\yoko+\gaisen+\yoko+\gaisen+\yoko+3*\gaisen*0.625,\tate+\tate/2+\tate/2+\nobishiro/2) circle (1mm); 
  \fill (\yoko+\gaisen+\yoko+\gaisen+\yoko+3*\gaisen*0.375,-\nobishiro/2) circle (1mm); 
  \fill (\yoko+\gaisen+\yoko+\gaisen+\yoko+3*\gaisen*0.5,-\nobishiro/2) circle (1mm); 
  \fill (\yoko+\gaisen+\yoko+\gaisen+\yoko+3*\gaisen*0.625,-\nobishiro/2) circle (1mm); 
  \draw (0,0) rectangle ++(\yoko,\tate+\tate/2);
  \draw (\yoko,\tate+\tate/2) -- (\yoko+\gaisen,\tate+\tate/2);
  \draw (\yoko,0) -- (\yoko+\gaisen+\gaisen+\yoko+\gaisen+\yoko+\gaisen*2,0);
  \draw (\yoko+\gaisen,\tate/2+\tate/2) rectangle ++(\yoko,\tate);
  \draw (\yoko+\gaisen+\yoko,\tate/2+\tate/2+\tate) -- (\yoko+\gaisen+\gaisen+\yoko+\gaisen+\yoko+\gaisen*2,\tate/2+\tate/2+\tate);
  \draw (\yoko+\gaisen+\yoko,\tate/2+\tate/2) -- (\yoko+\gaisen+\yoko+\gaisen,\tate/2+\tate/2);
  \draw (\yoko+\gaisen+\yoko+\gaisen,\tate/2) rectangle ++(\yoko,\tate);
  \draw (\yoko+\gaisen+\yoko+\gaisen+\yoko,\tate+\tate/2) -- (\yoko+\gaisen+\gaisen+\yoko+\gaisen+\yoko+\gaisen*2,\tate+\tate/2);
  \draw (\yoko+\gaisen+\yoko+\gaisen+\yoko,\tate/2) -- (\yoko+\gaisen+\gaisen+\yoko+\gaisen+\yoko+\gaisen*2,\tate/2);
\end{tikzpicture}\, .
\end{eqnarray}
The diagram included on the right-hand side becomes $1$ due to the normalization condition. 
We thus obtain
Eq.\ \eqref{eq:quad_inner_product}.

\subsubsection{Proof of Props.\ \ref{prop3}
and \ref{thm:quad_inner}}

We are now ready to proceed to the proof of Props.\ \ref{prop3}
and \ref{thm:quad_inner}. First, we show Prop.\ \ref{prop3}. As we pointed out at the end of Sec.\ \ref{sec:fp_2_3_transfer}, $(\Lambda_{\alpha\beta}^{L(2)},\Lambda_{\alpha\beta}^{R(2)})$ and $(\Lambda_{\alpha\beta\gamma}^{L(3)},\Lambda_{\alpha\beta\gamma}^{R(3)})$ are reduction tensors. 
Thus, their inner product,  defined by
\begin{eqnarray}
    (\Lambda_{\alpha\beta}^{L(2)},\Lambda_{\alpha\beta}^{R(2)})=
\, .
\end{eqnarray}
By comparing the two equations, we obtain the cocycle condition 
Eq.\ \eqref{eq:cocycle}.

\section{The CZX Model Protected by $\mathbb{Z}/2\mathbb{Z}$ Symmetry}
\label{Relation to the Group Cohomological Classification}


Having gone through the generalities of the quadruple inner product,  
we will now discuss some applications and examples
outlines at the end of Sec.\ \ref{preliminaries}.
Here, we will use (2+1)d SPT phases
to illustrate the use of the quadruple inner product.

%




As a particular example of SPT phases, let us consider 
the CZX model \cite{CLW11}
-- this is an example of (2+1)d SPT protected by 
$\mathbb{Z}/2\mathbb{Z}$ on-site unitary symmetry.
Let us consider the 
square lattice 
with 
$2^4$-dimensional 
local Hilbert space
(i.e., four copies of spin 1/2)
for each site. Let $p$ denote the plaquette label, and $i$ represent the site label. The local Hamiltonian labeled by $p$ is defined as an operator that acts on the plaquette $p$ and its four neighboring edges. To define this, we label the top, bottom, left, and right edges adjacent to plaquette $p$ as $u(p)$, $d(p)$, $l(p)$, and $r(p)$ respectively. The state of the $2$ sites belonging to an edge is written as $\{ \ket{i,j}_{k(p)} \left| \right. i,j=\uparrow,\downarrow, k=u,d,l,r \}$, where the order inside the ket is arranged in a clockwise manner. For instance, the state where the upper site is $\uparrow$ and the lower site is $\downarrow$ on the edge adjacent to the right of plaquette $p$ is expressed as $\ket{\uparrow\downarrow}_{r(p)}$. The sites belonging to site $i$ will be labeled in a counterclockwise manner starting from the top-left site as $i(1)$, $i(2)$, $i(3)$, and $i(4)$.

The local Hamiltonian 
for a plaquette $p$ is defined by
\begin{align}
    h_{p}&:=-X_{p}^{(4)}\otimes \bigotimes_{k=u,d,l,r}P_{k(p)}^{(2)},
     \\
\mbox{where}\qquad 
&
X_{p}^{(4)}:=
\ket{\uparrow\uparrow\uparrow\uparrow}\bra{\downarrow\downarrow\downarrow\downarrow}_{p}+\ket{\downarrow\downarrow\downarrow\downarrow}\bra{\uparrow\uparrow\uparrow\uparrow}_{p},
\nonumber \\ 
&
P_{k(p)}^{(2)}:=\ket{\uparrow\uparrow}\bra{\uparrow\uparrow}_{k(p)}+\ket{\downarrow\downarrow}\bra{\downarrow\downarrow}_{k(p)} \quad (k=u,d,l,r).
\end{align}
The total Hamiltonian is defined by 
the sum of the local Hamiltonians, 
$H_{\rm CZX}:=\sum_{p}h_{p}$.
This Hamiltonian commutes with the CZX symmetry
$U_{\rm CZX}:=\bigotimes_{i} u_{\mathrm{CZX},i}$,
where
\begin{eqnarray}
    u_{\mathrm{CZX},i}:=\left(\bigotimes_{k=1}^{4}X_{i(k)}\right)\otimes CZ_{i(1)i(2)} \otimes CZ_{i(2)i(3)} \otimes CZ_{i(3)i(4)} \otimes CZ_{i(4)i(1)}.
\end{eqnarray}
Here, $CZ_{ij}:=\sum_{ij}(-1)^{ab}\ket{a_i b_j}\bra{a_ib_j}$ is the CZ-gate operator acting on sites $i,j$. $U_{\rm CZX}$ generates onsite $\zmod{2}$ symmetry.
It is known that the ground state on the closed system is unique and given by 
\begin{eqnarray}
    \ket{G.S.}:=\bigotimes_{p}\frac{1}{\sqrt{2}}(\ket{\uparrow\uparrow\uparrow\uparrow}_{p}+\ket{\downarrow\downarrow\downarrow\downarrow}_{p}).
\end{eqnarray}
This is an example of $(2+1)$d SPT phases 
protected by $\zmod{2}$ symmetry
\cite{CLW11}.
The ground state of the CZX model 
can be expressed in terms of a semi-injective PEPS \cite{MGSC18}. 

As mentioned at the end of Sec.\ \ref{preliminaries},
what is important for our purposes is
the boundary symmetry operators. 
When making a boundary,
there appear free degrees of freedom
on the plaquettes that are cut open by the boundary. 
The boundary symmetry operators
are obtained by 
projecting the (bulk) symmetry operators to the space of the free degrees of freedom on the boundary. 
Explicitly, they are given by 
\begin{align}
  \mathcal{O}_{I}= \mathbf{1},
  \quad 
  \mathcal{O}_Z = \otimes_i {{\it CZ}}_{i, i+1} 
  \otimes_i X_i,
\end{align} 
where $i$ labels boundary sites.
The boundary symmetry operators can be represented as an MPU 
with 
\begin{eqnarray}
  B_{I}^{01} =B_{I}^{10} =0,\quad 
  B_{I}^{00}= B_{I}^{11}=1,
\end{eqnarray}
and
\begin{eqnarray}
  B_{Z}^{01} = \left(\begin{array}{cc}
    1 & 1 \\
    0 & 0
  \end{array}\right),
  \quad 
  B_{Z}^{10} = \left(\begin{array}{cc}
    0 & 0 \\
    1 & -1 
  \end{array}\right),
  \quad 
  B_{Z}^{00}= B_{Z}^{11}=0.
\end{eqnarray}
Remark that these tensors are in the right canonical form. 

Now, let's consider the three-leg tensors $\Lambda_{ZZI}^{R(2,1)}$, $\Lambda_{ZIZ}^{R(2,1)}$ and 
$\Lambda_{IZZ}^{R(2,1)}$. 
As $\Lambda_{ZIZ}^{R(2,1)}$ and $\Lambda_{IZZ}^{R(2,1)}$ intertwine $\mathcal{O}_{Z}$ and $\mathcal{O}_{Z}$, they are the identity matrix. 
$\Lambda_{ZZI}^{R(2,1)}$ is the only nontrivial one and is represented as $4\times1$ tensor. 
We need to compute the composed tensor $B_{ZZ}^{ij}:=\sum_{k}B_{Z}^{ij}\otimes B_{Z}^{kj}$:
\begin{eqnarray}
    B_{ZZ}^{00}&=&\sum_{k}B_{Z}^{0k}\otimes B_{Z}^{k0}=\left(\begin{array}{cccc}
        0&0&0&0\\
        1&-1&1&-1\\
        0&0&0&0\\
        0&0&0&0
    \end{array}\right),\\
    B_{ZZ}^{11}&=&\sum_{k}B_{Z}^{1k}\otimes B_{Z}^{k1}=\left(\begin{array}{cccc}
        0&0&0&0\\
        0&0&0&0\\
        1&1&-1&-1\\
        0&0&0&0
    \end{array}\right),\\
    B_{ZZ}^{01}&=&B_{ZZ}^{10}=0.
\end{eqnarray}
Here, given $2$-dimensional vector spaces $V$ and $W$ with orthonormal bases $v_0, v_1$ and $w_0, w_1$ respectively, the basis for the tensor product space is fixed in the order $v_0 \otimes w_0$, $v_0 \otimes w_1$, $v_1 \otimes w_0$, and $v_1 \otimes w_1$.
As illustrated in \cite{HV17} and App.\ A.4 of \cite{CP-GSV21}, 
the reduction pair ($V_{ZZI},W_{ZZI}$)
from
$B_{ZZ}^{ij}$ to $B_{I}^{ij}$ is given by
\begin{eqnarray}
    V_{ZZI}&=&\frac{1}{\sqrt{2}}\begin{pmatrix}
        0,1,-1,0
    \end{pmatrix}
    \Longleftrightarrow
    (V_{ZZI})_{0}{}^{ab}{}=\frac{(-1)^{b}}{\sqrt{2}}(1-\delta_{ab}),\\
    W_{ZZI}&=&\frac{1}{\sqrt{2}}\begin{pmatrix}
        0\\
        1\\
        -1\\
        0
    \end{pmatrix}
    \Longleftrightarrow
    (W_{ZZI})_{ab}{}^{0}=\frac{(-1)^{b}}{\sqrt{2}}(1-\delta_{ab}),
\end{eqnarray}
where $a,b=0,1$ and $\delta_{ab}$ is the Kronecker delta. 
Therefore, 
the reduced MPU matrices $\tilde{B}_{I}^{ij}
:=(V_{ZZI})^{\dagger}B_{ZZ}^{ij}
W_{ZZI}$ are given by
\begin{eqnarray}
\tilde{B}_{I}^{00}=\tilde{B}_{I}^{00}=-1,\;\tilde{B}_{I}^{01}=\tilde{B}_{I}^{10}=0.
\end{eqnarray}
Thus, the relative phase between $B_{I}^{ij}$ and $\tilde{B}_{I}^{ij}$ 
is $c_{ZZI}=-1$. 
This implies that the phase $c$ in 
Prop.\ \ref{thm:uniquenessofreduction} is $-1$. Therefore, the normalized transfer matrix
\begin{eqnarray}
    T^{R(2,1)}_{ZZI}=-\frac{1}{2}\sum_{ij}B_{ZZ}^{ij}\otimes B_{I}^{ij\dagger}
\end{eqnarray}
has a unique fixed point tensor.
One can check 
that the right reduction tensor itself satisfies the higher eigenequation. 
In fact, when calculating the left-hand side of the higher eigenequation, we find 
\begin{eqnarray}\label{eq:higher_eigen_equation}
    -\frac{1}{2}\sum_{ij}B_{ZZ}^{ij}
    W_{ZZI}B_{I}^{ij}=c_{ZZ}{}^{I}W_{ZZI}. 
\end{eqnarray}
Therefore, we can take $\Lambda^{R(2,1)}_{ZZI}=W_{ZZI}$. Similarly, the only nontrivial left fixed point is $\Lambda^{L(2,1)}_{ZZI}$. One can check that the left reduction $V_{ZZI}$ satisfies the fixed point equation of the left transfer matrix $T^{L(2,1)}_{ZZI}$. Therefore, we can take $\Lambda^{L(2,1)}_{ZZI}=V_{ZZI}$. Note that the normalization condition Eq.(\ref{eq:normalization_3leg}) holds. In this model, the nilpotency length $N_{\alpha\beta\gamma\delta}$ in Prop.\ \ref{thm:4_fp_as_reduction} is zero. Thus, we do not need to insert the vertical lines 
in Eq.\ \eqref{eq:quad_inner_product}.



By using these tensors 
$\Lambda_{ZZI}$, 
$\Lambda_{ZIZ}$ and 
$\Lambda_{IZZ}$, 
let us now compute the quadruple inner product. 
%
As a demonstration, 
the quadruple inner product  
$c_{IZIZ}$ can be computed as 
\begin{eqnarray}
    c_{IZIZ}=\sum_{ab}(\Lambda_{ZZI})_{ab}{}^{0}(\Lambda_{ZZI}^{\dagger})_{ba}{}^{0}=-\frac{1}{\sqrt{2}}\times\frac{1}{\sqrt{2}}-\frac{1}{\sqrt{2}}\times\frac{1}{\sqrt{2}}=-1.
\end{eqnarray}
After a similar calculation, we can readily check the following result:
\begin{eqnarray}
    c_{g_1g_2g_3g_4}=\begin{cases}
        -1&(g_1,g_2,g_3,g_4)=(I,Z,I,Z),(Z,I,Z,I)\\
        1&\text{otherwise}.
    \end{cases}
\end{eqnarray}
This is a nontrivial cocycle in $\cohoU{3}{\zmod{2}}$.

\section{The Model Parameterized Over $\rp{4}$}
\label{sec:Model parameterized over rp4}

In this section, we will discuss a concrete model
(a family of parameterized (2+1)d Hamiltonians 
or (2+1)d invertible states)
characterized by the non-trivial higher Berry phase.
The model is parametrized by 
$X=\rp{4}$.
We will show that the model is 
nontrivial in $\coho{4}{\rp{4}}{\zmod{2}}\simeq\zmod{2}$. 

\subsection{The $\rp{4}$-Parametrized Model}

The CZX model has the $\zmod{2}$ symmetry. 
By interpolating this symmetry, 
we can construct a family of $2$-d systems. 
To this end, we embed the local Hilbert space 
$\mathbb{C}^{2}$ to $\mathbb{C}^{4}$ 
and take the following orthonormal basis:
\begin{eqnarray}\label{eq:embeded basis}
    \ket{+}=\begin{pmatrix}
        1\\
        0\\
        0\\
        0
    \end{pmatrix},
    \quad
    \ket{-}=\begin{pmatrix}
        0\\
        1\\
        0\\
        0
    \end{pmatrix},
    \quad
    \ket{\perp_1}=\begin{pmatrix}
        0\\
        0\\
        1\\
        0
    \end{pmatrix},
    \quad
    \ket{\perp_2}=\begin{pmatrix}
        0\\
        0\\
        0\\
        1
    \end{pmatrix}.
\end{eqnarray}
Let $\{\vec{z}=(z_1,z_2,z_3)\left.\right
|z_{1,2,3}\in \mathbb{C}, \abs{\vec{z}}^2=1\}$ 
be a coordinate of $S^5$ 
and we define $\ket{+(\vec{z})}$ and $\ket{-(\vec{z})}$ by
\begin{eqnarray}
    \ket{+(\vec{z})}=\begin{pmatrix}
        1\\
        0\\
        0\\
        0
    \end{pmatrix},
    \quad
    \ket{-(\vec{z})}=\begin{pmatrix}
        0\\
        z_1\\
        z_2\\
        z_3
    \end{pmatrix}.
\end{eqnarray}
The parameterized basis is obtained 
from the basis \eqref{eq:embeded basis}
by a unitary transformation, 
\begin{align}\label{eq:unitary}
    &
    \big(\ket{+(\vec{z})},\ket{-(\vec{z})},
    \ket{\perp_1(\vec{z})},\ket{\perp_2(\vec{z})}
    \big)
    =
    V(\vec{z})
    \big(\ket{+},\ket{-},\ket{\perp_1},\ket{\perp_2}
    \big),
    \nonumber \\
    &
    \quad 
    \mbox{where}\quad
    V(\vec{z})
    =
    \begin{pmatrix}
        1&&&\\
        &z_1&a&b\\
        &z_2&c&d\\
        &z_3&e&f
    \end{pmatrix}
\end{align}
for suitable $a,b,c,d\in\mathbb{C}$. 
Since $\mathrm{SU}(3)$ is the nontrivial $S^3$ bundle over 
$S^5$, 
taking the global families of 
unitary transformations over $S^5$ continuously 
should not be possible.
However, 
since
we will only use 
the $2$-dimensional subspace of $\mathbb{C}^4$ 
spanned by $\ket{+(\vec{z})},\ket{-(\vec{z})}$ 
to define a Hamiltonian, 
our model will be parametrized by $S^5$ globally. 

Let $\sigma^{x}(\vec{z})$ and $\sigma^{z}(\vec{z})$ be 
parametrized Pauli matrices defined by 
\begin{align}
     \sigma^{x}(\vec{z})
     &=
     \ket{+(\vec{z})}\bra{+(\vec{z})}-\ket{-(\vec{z})}\bra{-(\vec{z})},\\
     \sigma^{z}(\vec{z})
     &=
     \ket{+(\vec{z})}\bra{-(\vec{z})}+\ket{-(\vec{z})}\bra{+(\vec{z})}.
\end{align}
We introduce the polar coordinate:
\begin{align}
 z_1&=\cos{(\theta_1)}+i\sin{(\theta_1)}\cos{(\theta_{2})},\\
 z_2&=\sin{(\theta_{1})}\sin{(\theta_{2})}\cos{(\theta_{3})}+i\sin{(\theta_{1})}\sin{(\theta_{2})}\sin{(\theta_{3})}\cos{(\theta_{4})},\\
 z_3&=\sin{(\theta_{1})}\sin{(\theta_{2})}\sin{(\theta_{3})}\sin{(\theta_{4})}\sin{(\theta_{5})}+i\sin{(\theta_{1})}\sin{(\theta_{2})}\sin{(\theta_{3})}\sin{(\theta_{4})}\cos{(\theta_{5})},
\end{align}
where $0\leq\theta_{1},\theta_{2},\theta_{3},\theta_{4}\leq\pi$ and $0\leq\theta_{5}<2\pi$.
Recall that the CZ operator is defined by ${\rm CZ}_{i,j}:=e^{i\frac{\pi}{4}(1-\sigma^{z}_{i})(1-\sigma^{z}_{j})}$. 
Let us introduce 
\begin{align}
     {\rm CZ}_{i,j}(\vec{z})
     =e^{i\frac{\theta_5}{4}(1-\sigma^{z}_{i}(\vec{z}))(1-\sigma^{z}_{j}(\vec{z}))}
     \quad
     \mbox{and}
     \quad
     U_{\rm CZ}(\vec{z})=\prod_{i,j}{\rm CZ}_{i,j}(\vec{z}).
\end{align}
Remark that $U_{\rm CZ}(\vec{z})$ is a symmetry of the ground state. 
As we saw in Sec.\ 
\ref{Relation to the Group Cohomological Classification},
the CZX model is defined by using only $\sigma^{x}$ and $\sigma^{z}$. 
Thus, we define $H_{\rm CZX}(\vec{z})$ by 
\begin{eqnarray}
    H_{\rm CZX}(\vec{z}):= U_{\rm CZ}(\vec{z})
    \Big(H_{\rm CZX}\bigl.\bigr|_{\sigma^{i}\mapsto\sigma^{i}(\vec{z})}\Big)
    U_{\rm CZ}(\vec{z})^{\dagger}.
\end{eqnarray}
We can check that 
$H_{\rm CZX}(-\vec{z})= 
U_{\rm CZX}(-\vec{z})H_{\rm CZX}(\vec{z})U_{\rm CZX}(-\vec{z})^{\dagger}$
on the closed lattice.
It then follows 
$\ket{\mathrm{G.S.}(-\vec{z})}=U_{\rm CZX}(-\vec{z})\ket{\mathrm{G.S.}(\vec{z})}$. 
$U_{\rm CZX}(-\vec{z})$ is not a symmetry of $H_{\rm CZX}(\vec{z})$, 
but the ground state of $H_{\rm CZX}(\vec{z})$ is invariant: By definition, $U_{\rm CZX}(-\vec{z})=U_{\rm X}(-\vec{z})U_{\rm CZ}(-\vec{z})=U_{\rm X}(\vec{z})U_{\rm CZ}(-\vec{z})$ and it is known that $\ket{\mathrm{G.S.}(\vec{z})}$ is invariant under the action of $U_{\rm X}(\vec{z})$. Thus, all we need to show is that $\ket{\mathrm{G.S.}(\vec{z})}$ is invariant under the action of $U_{\rm CZ}(-\vec{z})$. $U_{\rm CZ}(-\vec{z})$ consists of four CZ operators, and each operator checks whether the spins are aligned. 
More precisely, ${\rm CZ}_{i,j}(\vec{z})$ assigns $-1$ if the two spins are pointing down under their quantization axis, and ${\rm CZ}_{i,j}(-\vec{z})$ assigns $-1$ if the two spins are pointing up. 
In the ground state of the CZX model, 
the spins belonging to the same plaquette, align in the same direction, and neighboring plaquettes share exactly two sites. 
Thus, the actions of ${\rm CZ}_{i,j}(\vec{z})$ cancel pairwise. 
By the same mechanism,
$U_{\rm CZ}(-\vec{z})$ acts identically on the ground state $\ket{\mathrm{G.S.}(\vec{z})}$. 
Therefore, 
up to an overall phase, 
$\ket{\mathrm{G.S.}(-\vec{z})}=\ket{\mathrm{G.S.}(\vec{z})}$.

To summarize, 
$\ket{\mathrm{G.S.}(\vec{z})}$ is parametrized by $S^{5}/\zmod{2}\simeq\rp{5}$. 
Hereafter, we restrict $z_1$ to real values and regard $\ket{\mathrm{G.S.}(\vec{z})}$ as a model parametrized by $\rp{4}$.

\subsection{Transition MPU}\label{sec:transitionMPU}
We now proceed to the calculation of the higher Berry phase.
The first step is to identify the transition MPO.
The transition MPO (MPU) from $\vec{z}$ to $-\vec{z}$ is given by 
\begin{eqnarray}
{\cal O}(\vec{z})
=\mathcal{O}[
\{
\sum_{j,k}V(\vec{z})_{ij}B^{jk}V(\vec{z})^{\dagger}_{kl}
\}],
\end{eqnarray}
where $B^{ij}$ are MPS matrices defined by
\begin{align}
    B^{00}&=B^{11}=0,
    \quad
    B^{01}=\begin{pmatrix}
        1&1\\
        0&0
    \end{pmatrix},\;
    \quad
    B^{10}=\begin{pmatrix}
        0&0\\
        1&-1
    \end{pmatrix}.
\end{align}
Here, $V(\vec{z})$ and $V({\vec{z}})^{\dagger}$ act on the transition MPU as
\begin{eqnarray}
\begin{tikzpicture}[line cap=round,line join=round,x=1.0cm,y=1.0cm, scale=0.3, baseline={([yshift=-.6ex]current bounding box.center)}, thick, shift={(0,0)}, scale=0.7]
  \def\saitosu{4} 
  \pgfmathsetmacro\saitosutasuichi{\saitosu+1}
  \newcounter{n}
  \setcounter{n}{\saitosu} 
  \def\yoko{8*2} 
  \def\tate{7} 
  \draw (0,\tate/2) -- (\yoko,\tate/2);
  \draw (\yoko/\saitosutasuichi,\tate*3/4) node [left] {$V(\vec{z})$};
  \draw (\yoko/\saitosutasuichi,\tate/4) node [left] {$V(\vec{z})^{\dagger}$};
  \foreach \x in {1,...,\then}
      \draw (\x*\yoko/\saitosutasuichi,0) -- (\x*\yoko/\saitosutasuichi,\tate); 
  \foreach \x in {1,...,\then}
      \draw[fill=black] (\x*\yoko/\saitosutasuichi,\tate/2) circle (3mm);
  \foreach \x in {1,...,\then}
      \draw[fill=white] (\x*\yoko/\saitosutasuichi,\tate/4) circle (3mm);
  \foreach \x in {1,...,\then}
      \draw[fill=gray] (\x*\yoko/\saitosutasuichi,\tate*3/4) circle (3mm);
\end{tikzpicture}\,
,
\end{eqnarray}
where black dots represent the MPU tensor $\{B^{ij}\}$ and gray and white dots represent 
$V({\vec{z}})$ and 
$V({\vec{z}})^{\dagger}$,
respectively. Therefore, they are canceled out in the higher eigen equation 
Eq.\ (\ref{eq:higher_eigen_equation}).
Consequently, 
while the MPU tensor depends on $\vec{z}$, 
the 3-leg tensor does not.

\subsection{Integration of the Higher Berry Phase}

We have constructed the $2$-dimensional model parametrized by $\rp{4}$ 
and it is expected that this model is 
nontrivial in $\coho{4}{\rp{4}}{\zmod{2}}\simeq\zmod{2}$. 
In order to measure this nontriviality, 
we integrate the higher Berry phase over $\rp{3}$. 
In general, we expect that 
a family of semi-injective PEPS 
parameterized by 
$X$
gives an element of 
the $4$-th Deligne cohomology $\hat{\mathrm{H}}^{4}\left(X;\mathbb{Z}\right)$ \cite{Gawedzki88}. 
A representative element of $\hat{\mathrm{H}}^{4}\left({\rp{4}};\mathbb{Z}\right)$ is given by 
$\mathbb{C}^{\times}$-valued function $w^{(0)}_{\alpha\beta\gamma\delta}$ on $U_{\alpha\beta\gamma\delta}$, 1-form $w^{(1)}_{\alpha\beta\gamma}$ on $U_{\alpha\beta\gamma}$, 2-form $w^{(2)}_{\alpha\beta}$ on $U_{\alpha\beta}$, and 3-form $w^{(3)}_{\alpha}$ on $U_{\alpha}$ with suitable consistency conditions. We refer to $w^{(1)}$, $w^{(2)}$ and $w^{(3)}$ as higher connections. 
In our context, $w^{(0)}_{\alpha\beta\gamma\delta}$ is the higher Berry phase.
As mentioned in Sec.\ \ref{sec:transitionMPU}, 
the $3$-leg tensor does not depend on $\vec{z}$. 
This means that $c_{\alpha\beta\gamma\delta}$ admits the trivial higher connections, i.e., 
$(w^{(3)}_{\alpha}=0,w^{(2)}_{\alpha\beta}=0,w^{(1)}_{\alpha\beta\gamma}=0,w^{(0)}_{\alpha\beta\gamma\delta}=c_{\alpha\beta\gamma\delta})$ 
defines an element of 
the $4$-th Deligne cohomology $\hat{\mathrm{H}}^{4}\left({\rp{4}};\mathbb{Z}\right)$. 
In particular, the higher Berry phase is quantized to $\zmod{2}$.


Following the integration theory of the Deligne cohomology \cite{GT00,GT01,CJM04,GT09}, we evaluate the cocycle $(w^{(3)}_{\alpha}=0,w^{(2)}_{\alpha\beta}=0,w^{(1)}_{\alpha\beta\gamma}=0,c^{(0)}_{\alpha\beta\gamma\delta})$\footnote{See \cite{OTS23} as an example of the explicit computation of the integration.}. 
First, we need to take a nontrivial homology class of ${\rm H}_{3}(\rp{4};\mathbb{Z})\simeq\zmod{2}$ as an integration surface. This is generated by $\rp{3}\subset\rp{4}$.
Let's 
consider the open covering of $\rp{3}$. 
We will illustrate 
$\rp{3}$ by considering a cube and identifying antipodal points of the surface. 
For this cube, we take balls centered on the face-centered cubic lattice and the body center. 
Due to the identification of antipodal points, the number of such balls is eight. 
Furthermore, we consider balls centered at the midpoint of each edge. 
Again, due to the identification of antipodal points, the number of these balls is six. We adjust the radii of these balls appropriately to cover the entire cube.

To simplify the illustration, we flatten these intersections of two patches into planes and illustrate them as surfaces Fig.\ \ref{fig:open_covering}. 
Fortunately, as already mentioned, the transition MPU in the example 
is constant on the intersections. Therefore, we can flatten them without loss of information.

\begin{figure}[h]
  \centering
  \includegraphics[scale=0.2]{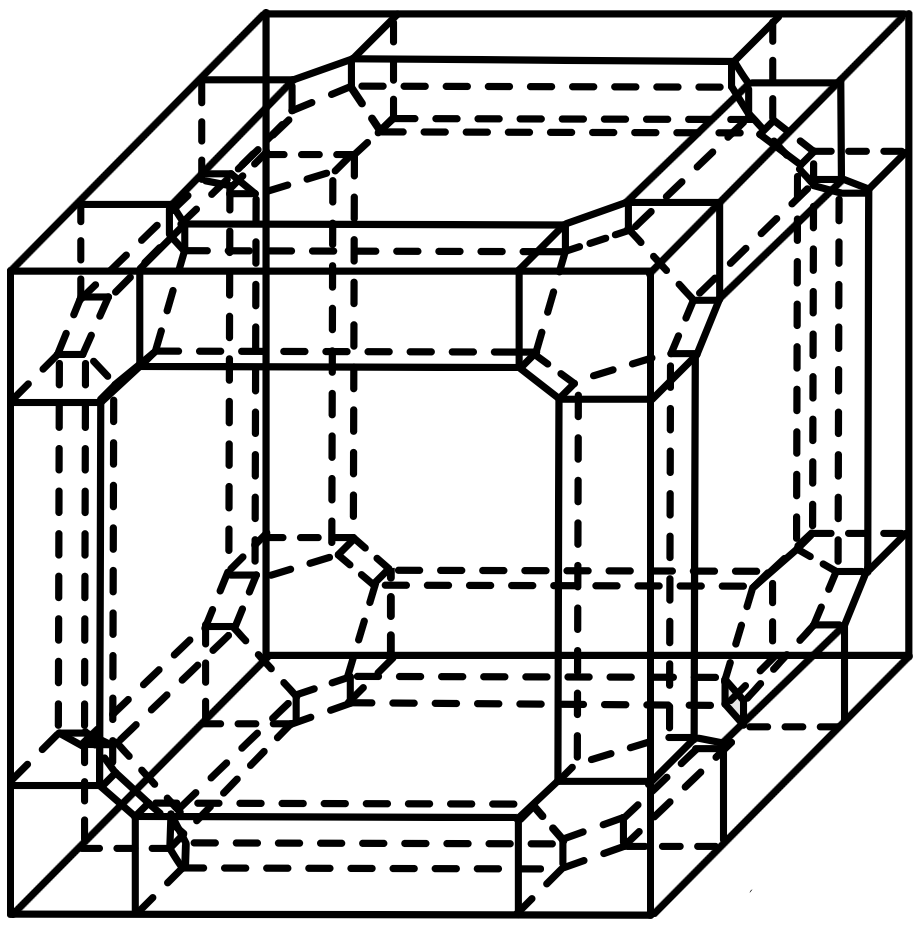}
  \hspace{0.5cm}
  \caption{An open covering of $\rp{3}$.}
  \label{fig:open_covering}
\end{figure}

All edges correspond to the parts where exactly three patches overlap. 
Therefore, a dual lattice provides a triangulation of $\rp{3}$. 
We introduce a branching structure, namely, an orientation that does not include closed loops as follows:
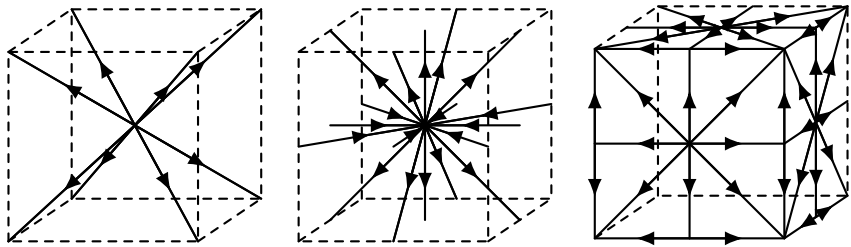
\begin{figure}[h]
\centering
\begin{tikzpicture}[line cap=round,line join=round,x=1.0cm,y=1.0cm, scale=0.3, baseline={([yshift=-.6ex]current bounding box.center)}, thick, shift={(0,0)}, scale=0.7]
  \def\ippen{12} 
  \def\okuyukix{4} 
  \def\okuyukiy{2.7} 
  \draw[dashed] (0,0) rectangle ++(\ippen,\ippen);
  \draw[dashed] (\okuyukix,\okuyukiy) rectangle ++(\ippen,\ippen);
  \draw[dashed] (0,0) -- (\okuyukix,\okuyukiy);
  \draw[dashed] (\ippen,0) -- (\ippen+\okuyukix,\okuyukiy);
  \draw[dashed] (0,\ippen) -- (\okuyukix,\ippen+\okuyukiy);
  \draw[dashed] (\ippen,\ippen) -- (\ippen+\okuyukix,\ippen+\okuyukiy);
  \draw (0,0) -- (\ippen+\okuyukix,\ippen+\okuyukiy);
  \draw (\ippen,0) -- (\okuyukix,\ippen+\okuyukiy);
  \draw (0,\ippen) -- (\ippen+\okuyukix,\okuyukiy);
  \draw (\okuyukix,\okuyukiy) -- (\ippen,\ippen);



  \draw[-{Latex[length=2.5mm]}] (0,0) -- (\ippen*11/14+\okuyukix*11/14,\ippen*11/14+\okuyukiy*11/14);
  \draw[-{Latex[length=2.5mm]}] (\ippen+\okuyukix,\ippen+\okuyukiy) -- (\ippen+\okuyukix-\ippen*11/14-\okuyukix*11/14,\ippen+\okuyukiy-\ippen*11/14-\okuyukiy*11/14);

  \draw[-{Latex[length=2.5mm]}] (\ippen,0) -- (\ippen+\okuyukix*11/14-\ippen*11/14,\ippen*11/14+\okuyukiy*11/14);
  \draw[-{Latex[length=2.5mm]}] (\okuyukix,\ippen+\okuyukiy)  --  (\okuyukix+\ippen*11/14-\okuyukix*11/14,\ippen+\okuyukiy-\ippen*11/14-\okuyukiy*11/14);

  \draw[-{Latex[length=2.5mm]}] (0,\ippen) -- (\ippen*11/14+\okuyukix*11/14,\ippen+\okuyukiy*11/14-\ippen*11/14);
  \draw[-{Latex[length=2.5mm]}] (\ippen+\okuyukix,\okuyukiy) -- (\ippen+\okuyukix-\ippen*11/14-\okuyukix*11/14,\okuyukiy+\ippen*11/14-\okuyukiy*11/14);
  
  \draw[-{Latex[length=2.5mm]}] (\okuyukix,\okuyukiy) -- (\okuyukix+\ippen*11/14-\okuyukix*11/14,\okuyukiy+\ippen*11/14-\okuyukiy*11/14);
  \draw[-{Latex[length=2.5mm]}] (\ippen,\ippen) -- (\ippen+\okuyukix*11/14-\ippen*11/14,\ippen+\okuyukiy*11/14-\ippen*11/14);

\end{tikzpicture}\;\;\;\;
\begin{tikzpicture}[line cap=round,line join=round,x=1.0cm,y=1.0cm, scale=0.3, baseline={([yshift=-.6ex]current bounding box.center)}, thick, shift={(0,0)}, scale=0.7]
  \def\ippen{12} 
  \def\okuyukix{4} 
  \def\okuyukiy{2.7} 
  \draw[dashed] (0,0) rectangle ++(\ippen,\ippen);
  \draw[dashed] (\okuyukix,\okuyukiy) rectangle ++(\ippen,\ippen);
  \draw[dashed] (0,0) -- (\okuyukix,\okuyukiy);
  \draw[dashed] (\ippen,0) -- (\ippen+\okuyukix,\okuyukiy);
  \draw[dashed] (0,\ippen) -- (\okuyukix,\ippen+\okuyukiy);
  \draw[dashed] (\ippen,\ippen) -- (\ippen+\okuyukix,\ippen+\okuyukiy);
  
  \draw (\ippen/2,\ippen/2) -- (\ippen/2+\okuyukix,\ippen/2+\okuyukiy);
  \draw (\ippen/2+\okuyukix/2,\okuyukiy/2) -- (\ippen/2+\okuyukix/2,\ippen+\okuyukiy/2);
  \draw (\okuyukix/2,\ippen/2+\okuyukiy/2) -- (\ippen+\okuyukix/2,\ippen/2+\okuyukiy/2);

  \draw (\ippen/2,0) -- (\ippen/2+\okuyukix,\ippen+\okuyukiy);
  \draw (\ippen/2,0) -- (\ippen/2+\okuyukix,\ippen+\okuyukiy);
  \draw (0,\ippen/2) -- (\ippen+\okuyukix,\ippen/2+\okuyukiy);

  \draw (\ippen/2,\ippen) -- (\ippen/2+\okuyukix,\okuyukiy);
  \draw (\ippen/2+\okuyukix,\okuyukiy) -- (\ippen/2,\ippen);
  \draw (\okuyukix,\ippen/2+\okuyukiy) -- (\ippen,\ippen/2);

  \draw (\ippen+\okuyukix/2,\okuyukiy/2) -- (\okuyukix/2,\ippen+\okuyukiy/2);
  \draw (\okuyukix/2,\ippen+\okuyukiy/2) -- (\ippen+\okuyukix/2,\okuyukiy/2);

  \draw (\okuyukix/2,\okuyukiy/2) -- (\ippen+\okuyukix/2,\ippen+\okuyukiy/2);

  \draw[-{Latex[reversed,length=2.5mm]}] (\ippen/2+\okuyukix/2,\ippen/2+\okuyukiy/2) -- (\ippen/2+\okuyukix/2+\okuyukix/3.5,\ippen/2+\okuyukiy/2+\okuyukiy/3.5);
  \draw[-{Latex[reversed,length=2.5mm]}] (\ippen/2+\okuyukix/2,\ippen/2+\okuyukiy/2) -- (\ippen/2+\okuyukix/2+\okuyukix/3.5+\ippen/3.5,\ippen/2+\okuyukiy/2+\okuyukiy/3.5);
  \draw[-{Latex[reversed,length=2.5mm]}] (\ippen/2+\okuyukix/2,\ippen/2+\okuyukiy/2) -- (\ippen/2+\okuyukix/2+\ippen/3.5,\ippen/2+\okuyukiy/2);
  \draw[-{Latex[reversed,length=2.5mm]}] (\ippen/2+\okuyukix/2,\ippen/2+\okuyukiy/2) -- (\ippen/2+\okuyukix/2-\okuyukix/3.5+\ippen/3.5,\ippen/2+\okuyukiy/2-\okuyukiy/3.5);
  \draw[-{Latex[reversed,length=2.5mm]}] (\ippen/2+\okuyukix/2,\ippen/2+\okuyukiy/2) -- (\ippen/2+\okuyukix/2-\okuyukix/3.5,\ippen/2+\okuyukiy/2-\okuyukiy/3.5);
  \draw[-{Latex[reversed,length=2.5mm]}] (\ippen/2+\okuyukix/2,\ippen/2+\okuyukiy/2) -- (\ippen/2+\okuyukix/2-\okuyukix/3.5-\ippen/3.5,\ippen/2+\okuyukiy/2-\okuyukiy/3.5);
  \draw[-{Latex[reversed,length=2.5mm]}] (\ippen/2+\okuyukix/2,\ippen/2+\okuyukiy/2) -- (\ippen/2+\okuyukix/2-\ippen/3.5,\ippen/2+\okuyukiy/2);
  \draw[-{Latex[reversed,length=2.5mm]}] (\ippen/2+\okuyukix/2,\ippen/2+\okuyukiy/2) -- (\ippen/2+\okuyukix/2-\ippen/3.5+\okuyukix/3.5,\ippen/2+\okuyukiy/2+\okuyukiy/3.5);

  \draw[-{Latex[length=2.5mm]}] (\ippen/2+\okuyukix/2,\ippen/2+\okuyukiy/2) -- (\ippen/2+\okuyukix/2+\okuyukix/3.5,\ippen/2+\okuyukiy/2+\okuyukiy/3.5+\ippen/3.5);
  \draw[-{Latex[length=2.5mm]}] (\ippen/2+\okuyukix/2,\ippen/2+\okuyukiy/2) -- (\ippen/2+\okuyukix/2,\ippen/2+\okuyukiy/2+\okuyukiy/3.5+\ippen/3.5);
  \draw[-{Latex[length=2.5mm]}] (\ippen/2+\okuyukix/2,\ippen/2+\okuyukiy/2) -- (\ippen/2+\okuyukix/2+\okuyukix/3.5,\ippen/2+\okuyukiy/2+\okuyukiy/3.5-\ippen/3.5);
  \draw[-{Latex[length=2.5mm]}] (\ippen/2+\okuyukix/2,\ippen/2+\okuyukiy/2) -- (\ippen/2+\okuyukix/2,\ippen/2+\okuyukiy/2-\okuyukiy/3.5-\ippen/3.5);
  \draw[-{Latex[length=2.5mm]}] (\ippen/2+\okuyukix/2,\ippen/2+\okuyukiy/2) -- (\ippen/2+\okuyukix/2-\okuyukix/3.5,\ippen/2+\okuyukiy/2-\okuyukiy/3.5-\ippen/3.5);
  \draw[-{Latex[length=2.5mm]}] (\ippen/2+\okuyukix/2,\ippen/2+\okuyukiy/2) -- (\ippen/2+\okuyukix/2-\okuyukix/3.5,\ippen/2+\okuyukiy/2-\okuyukiy/3.5+\ippen/3.5);

  \draw[-{Latex[length=2.5mm]}] (\ippen/2+\okuyukix/2,\ippen/2+\okuyukiy/2) -- (\ippen/2+\ippen/3.5+\okuyukix/2,\ippen/2+\ippen/3.5+\okuyukiy/2);
  \draw[-{Latex[length=2.5mm]}] (\ippen/2+\okuyukix/2,\ippen/2+\okuyukiy/2) -- (\ippen/2+\ippen/3.5+\okuyukix/2,\ippen/2-\ippen/3.5+\okuyukiy/2);
  \draw[-{Latex[length=2.5mm]}] (\ippen/2+\okuyukix/2,\ippen/2+\okuyukiy/2) -- (\ippen/2-\ippen/3.5+\okuyukix/2,\ippen/2-\ippen/3.5+\okuyukiy/2);
  \draw[-{Latex[length=2.5mm]}] (\ippen/2+\okuyukix/2,\ippen/2+\okuyukiy/2) -- (\ippen/2-\ippen/3.5+\okuyukix/2,\ippen/2+\ippen/3.5+\okuyukiy/2);
  
\end{tikzpicture}\;\;\;\;
\begin{tikzpicture}[line cap=round,line join=round,x=1.0cm,y=1.0cm, scale=0.3, baseline={([yshift=-.6ex]current bounding box.center)}, thick, shift={(0,0)}, scale=0.7]
  \def\ippen{12} 
  \def\okuyukix{4} 
  \def\okuyukiy{2.7} 
  \draw (0,0) rectangle ++(\ippen,\ippen);
  \draw[dashed] (\okuyukix,\okuyukiy) rectangle ++(\ippen,\ippen);
  \draw[dashed] (0,0) -- (\okuyukix,\okuyukiy);
  \draw (\ippen,0) -- (\ippen+\okuyukix,\okuyukiy);
  \draw[dashed] (0,\ippen) -- (\okuyukix,\ippen+\okuyukiy);
  \draw (\ippen,\ippen) -- (\ippen+\okuyukix,\ippen+\okuyukiy);

  \draw (\ippen/2,0) -- (\ippen/2,\ippen);
  \draw (0,\ippen/2) -- (\ippen,\ippen/2);
  \draw (0,0) -- (\ippen,\ippen);
  \draw (\ippen,0) -- (0,\ippen);

  \draw[-{Latex[length=2.5mm]}] (\ippen/2,\ippen/2) -- (\ippen/2+\ippen/3.5,\ippen/2+\ippen/3.5);
  \draw[-{Latex[length=2.5mm]}] (\ippen/2,\ippen/2) -- (\ippen/2+\ippen/3.5,\ippen/2);
  \draw[-{Latex[length=2.5mm]}] (\ippen/2,\ippen/2) -- (\ippen/2+\ippen/3.5,\ippen/2-\ippen/3.5);
  \draw[-{Latex[length=2.5mm]}] (\ippen/2,\ippen/2) -- (\ippen/2,\ippen/2-\ippen/3.5);
  \draw[-{Latex[length=2.5mm]}] (\ippen/2,\ippen/2) -- (\ippen/2-\ippen/3.5,\ippen/2-\ippen/3.5);
  \draw[-{Latex[length=2.5mm]}] (\ippen/2,\ippen/2) -- (\ippen/2-\ippen/3.5,\ippen/2);
  \draw[-{Latex[length=2.5mm]}] (\ippen/2,\ippen/2) -- (\ippen/2-\ippen/3.5,\ippen/2+\ippen/3.5);
  \draw[-{Latex[length=2.5mm]}] (\ippen/2,\ippen/2) -- (\ippen/2,\ippen/2+\ippen/3.5);

  \draw (\ippen+\okuyukix/2,0+\okuyukiy/2) -- (\ippen+\okuyukix/2,\ippen+\okuyukiy/2);
  \draw (\ippen,\ippen/2) -- (\ippen+\okuyukix,\ippen/2+\okuyukiy);
  \draw (\ippen,0) -- (\ippen+\okuyukix,\ippen+\okuyukiy);
  \draw (\ippen,\ippen) -- (\ippen+\okuyukix,\okuyukiy);

  \draw[-{Latex[length=2.5mm]}] (\ippen+\okuyukix/2,\ippen/2+\okuyukiy/2) -- (\ippen+\okuyukix/2+\okuyukix/3.5,\ippen/2+\okuyukiy/2+\okuyukiy/3.5+\ippen/3.5);
  \draw[-{Latex[length=2.5mm]}] (\ippen+\okuyukix/2,\ippen/2+\okuyukiy/2) -- (\ippen+\okuyukix/2+\okuyukix/3.5,\ippen/2+\okuyukiy/2+\okuyukiy/3.5);
  \draw[-{Latex[length=2.5mm]}] (\ippen+\okuyukix/2,\ippen/2+\okuyukiy/2) -- (\ippen+\okuyukix/2,\ippen/2+\okuyukiy/2+\okuyukiy/3.5+\ippen/3.5);
  \draw[-{Latex[length=2.5mm]}] (\ippen+\okuyukix/2,\ippen/2+\okuyukiy/2) -- (\ippen+\okuyukix/2+\okuyukix/3.5,\ippen/2+\okuyukiy/2+\okuyukiy/3.5-\ippen/3.5);
  \draw[-{Latex[length=2.5mm]}] (\ippen+\okuyukix/2,\ippen/2+\okuyukiy/2) -- (\ippen+\okuyukix/2,\ippen/2+\okuyukiy/2-\okuyukiy/3.5-\ippen/3.5);
  \draw[-{Latex[length=2.5mm]}] (\ippen+\okuyukix/2,\ippen/2+\okuyukiy/2) -- (\ippen+\okuyukix/2-\okuyukix/3.5,\ippen/2+\okuyukiy/2-\okuyukiy/3.5-\ippen/3.5);
  \draw[-{Latex[length=2.5mm]}] (\ippen+\okuyukix/2,\ippen/2+\okuyukiy/2) -- (\ippen+\okuyukix/2-\okuyukix/3.5,\ippen/2+\okuyukiy/2-\okuyukiy/3.5);
  \draw[-{Latex[length=2.5mm]}] (\ippen+\okuyukix/2,\ippen/2+\okuyukiy/2) -- (\ippen+\okuyukix/2-\okuyukix/3.5,\ippen/2+\okuyukiy/2-\okuyukiy/3.5+\ippen/3.5);

  \draw (\ippen/2,\ippen) -- (\ippen/2+\okuyukix,\ippen+\okuyukiy);
  \draw (\okuyukix/2,\ippen+\okuyukiy/2) -- (\okuyukix/2+\ippen,\ippen+\okuyukiy/2);
  \draw (0,\ippen) -- (\ippen+\okuyukix,\ippen+\okuyukiy);
  \draw (0+\okuyukix,\ippen+\okuyukiy) -- (\ippen,\ippen);

  \draw[-{Latex[length=2.5mm]}] (\ippen/2+\okuyukix/2,\ippen+\okuyukiy/2) -- (\ippen/2+\okuyukix/2+\okuyukix/3.5,\ippen+\okuyukiy/2+\okuyukiy/3.5);
  \draw[-{Latex[length=2.5mm]}] (\ippen/2+\okuyukix/2,\ippen+\okuyukiy/2) -- (\ippen/2+\okuyukix/2+\okuyukix/3.5+\ippen/3.5,\ippen+\okuyukiy/2+\okuyukiy/3.5);
  \draw[-{Latex[length=2.5mm]}] (\ippen/2+\okuyukix/2,\ippen+\okuyukiy/2) -- (\ippen/2+\okuyukix/2+\ippen/3.5,\ippen+\okuyukiy/2);
  \draw[-{Latex[length=2.5mm]}] (\ippen/2+\okuyukix/2,\ippen+\okuyukiy/2) -- (\ippen/2+\okuyukix/2-\okuyukix/3.5+\ippen/3.5,\ippen+\okuyukiy/2-\okuyukiy/3.5);
  \draw[-{Latex[length=2.5mm]}] (\ippen/2+\okuyukix/2,\ippen+\okuyukiy/2) -- (\ippen/2+\okuyukix/2-\okuyukix/3.5,\ippen+\okuyukiy/2-\okuyukiy/3.5);
  \draw[-{Latex[length=2.5mm]}] (\ippen/2+\okuyukix/2,\ippen+\okuyukiy/2) -- (\ippen/2+\okuyukix/2-\okuyukix/3.5-\ippen/3.5,\ippen+\okuyukiy/2-\okuyukiy/3.5);
  \draw[-{Latex[length=2.5mm]}] (\ippen/2+\okuyukix/2,\ippen+\okuyukiy/2) -- (\ippen/2+\okuyukix/2-\ippen/3.5,\ippen+\okuyukiy/2);
  \draw[-{Latex[length=2.5mm]}] (\ippen/2+\okuyukix/2,\ippen+\okuyukiy/2) -- (\ippen/2+\okuyukix/2-\ippen/3.5+\okuyukix/3.5,\ippen+\okuyukiy/2+\okuyukiy/3.5);

  \draw[-{Latex[length=2.5mm]}] (\ippen/2,\ippen) -- (\ippen/2+\ippen/3.5,\ippen);
  \draw[-{Latex[length=2.5mm]}] (\ippen/2,\ippen) -- (\ippen/2-\ippen/3.5,\ippen);
  \draw[-{Latex[length=2.5mm]}] (\ippen/2,0) -- (\ippen/2+\ippen/3.5,0);
  \draw[-{Latex[length=2.5mm]}] (\ippen/2,0) -- (\ippen/2-\ippen/3.5,0);
  \draw[-{Latex[length=2.5mm]}] (\ippen,\ippen/2) -- (\ippen,\ippen/2-\ippen/3.5);
  \draw[-{Latex[length=2.5mm]}] (\ippen,\ippen/2) -- (\ippen,\ippen/2+\ippen/3.5);
  \draw[-{Latex[length=2.5mm]}] (0,\ippen/2) -- (0,\ippen/2-\ippen/3.5);
  \draw[-{Latex[length=2.5mm]}] (0,\ippen/2) -- (0,\ippen/2+\ippen/3.5);

  \draw[-{Latex[length=2.5mm]}] (\ippen+\okuyukix/2,\ippen+\okuyukiy/2) -- (\ippen+\okuyukix/2+\okuyukix/3.5,\ippen+\okuyukiy/2+\okuyukiy/3.5);
  \draw[-{Latex[length=2.5mm]}] (\ippen+\okuyukix/2,\ippen+\okuyukiy/2) -- (\ippen+\okuyukix/2-\okuyukix/3.5,\ippen+\okuyukiy/2-\okuyukiy/3.5);
  \draw[-{Latex[length=2.5mm]}] (\ippen+\okuyukix/2,0+\okuyukiy/2) -- (\ippen+\okuyukix/2+\okuyukix/3.5,0+\okuyukiy/2+\okuyukiy/3.5);
  \draw[-{Latex[length=2.5mm]}] (\ippen+\okuyukix/2,0+\okuyukiy/2) -- (\ippen+\okuyukix/2-\okuyukix/3.5,0+\okuyukiy/2-\okuyukiy/3.5);

\end{tikzpicture}
\caption{
Branching structure of the triangulation\label{fig7}}
\end{figure}

\noindent
Here, to avoid complexity by incorporating all edges into a single figure, the edges were divided into three groups for illustration.

To compute the higher Berry phase, 
we need to assign the transfer MPU to each face, assign the fixed point tensors to each edge, and evaluate the quadruple inner product on each vertex. 
In the following, we explain the systematic assignment procedure.

{\bf Step 1: Transfer MPUs on faces} 
    To assign the transition MPU to each face systematically, we introduce the base points of each patch \cite{OTS23}, which are illustrated by orange dots in Fig.\ \ref{fig:base_point}. 
    By using the base points, we define transition functions on the intersection $U_{12}=U_{1}\cap U_{2}$ under the following rules:
\begin{enumerate}
    \item Take a path starting from the base point of the patch $U_{1}$ and passing through $U_{12}$ and terminating at the base point of $U_{2}$.
    \item If the path is through the surface of the cube, we give $\mathcal{O}[\{B^{ij}_Z\}]$ on the surface.
\end{enumerate}
    Under the rules, the configuration of the transition MPU of the model is illustrated in Fig.\ \ref{fig:config_of_MPU}.

\begin{figure}[h]
\begin{minipage}[t]{0.45\linewidth}
  \centering
  \includegraphics[scale=0.35]{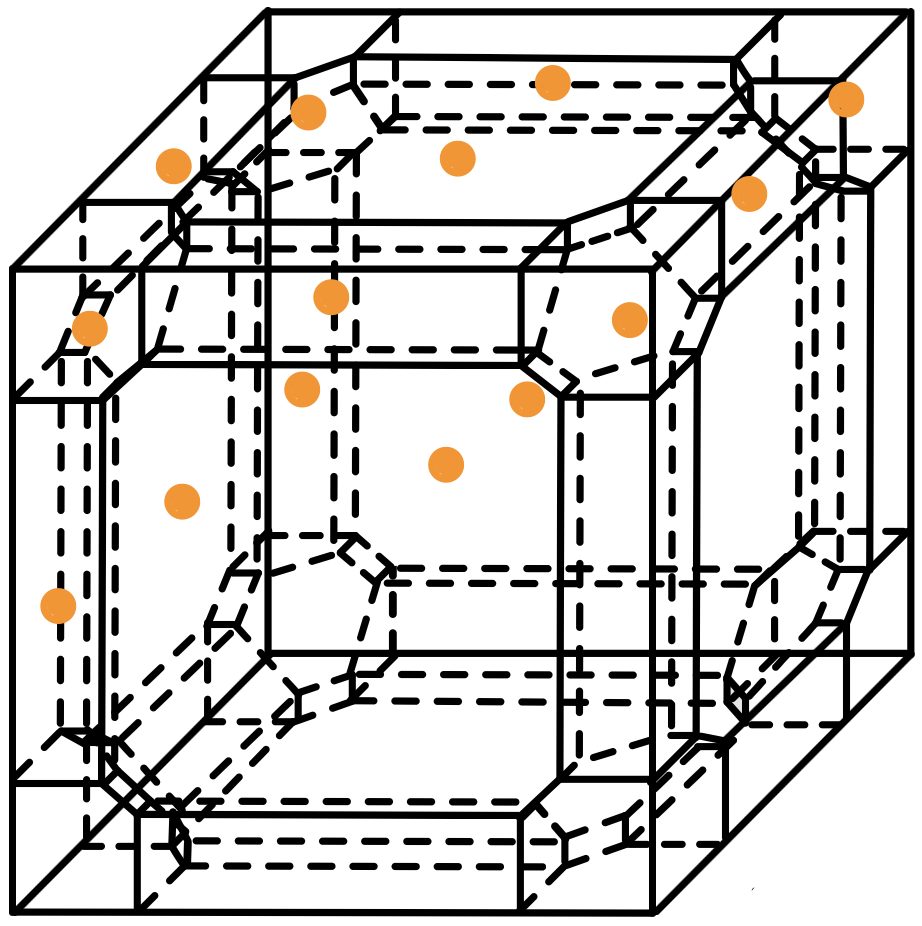}
  \hspace{0.5cm}
  \caption{The configuration of basepoints of $\rp{3}$.}
  \label{fig:base_point}
\end{minipage}
\begin{minipage}[t]{0.45\linewidth}
  \centering
  \includegraphics[scale=0.2]{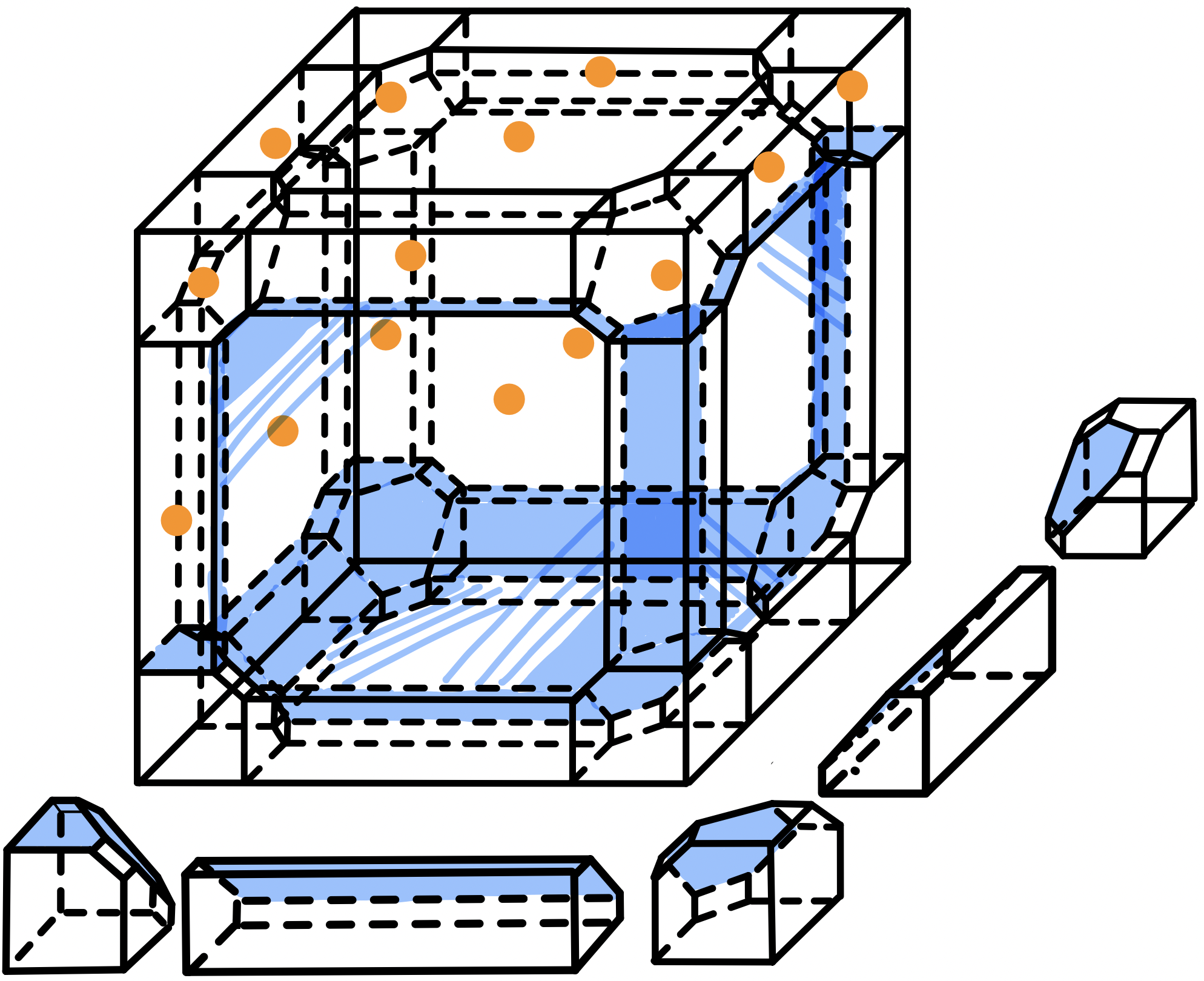}
  \hspace{0.5cm}
  \vspace{-0.4cm}
  \caption{The configuration of transition MPU.
  Blue faces represent 
  intersections for which 
  a non-trivial transition MPU is assigned.
  }
  \label{fig:config_of_MPU}
\end{minipage}
\end{figure}

{\bf Step 2: Fixed-point tensors on edges}
    A triangulation is provided by taking the dual lattice. Recall that each triangle was a triple overlap $U_{\alpha\beta\gamma}$ in the original open-covering picture. Therefore, a $3$-leg tensor, which is a fixed point of the transfer matrix (e.g. Fig.\ \ref{fig:highereigeneq}), is assigned to each triangle of the triangulation. Note that the $3$-leg tensor has two incoming legs and one outgoing leg (or the opposite assignment). To distinguish these legs, we utilize a branching structure. As shown in Fig.\ \ref{fig7}, draw the dual lattice of the triangle and rotate the direction of the edges determined by the branching structure 
    by
    $90$ degrees clockwise. 
    Depending on whether the resulting $3$-leg tensor has two incoming legs or two outgoing legs, assign $\Lambda^{R(2,1)}_{ZZI}$ or $\Lambda_{ZZI}^{L(2,1)}$, respectively. 

    In this model, the $3$-leg tensors are non-trivial only when both incoming legs are $\{B^{ij}_{Z}\}$ or both outgoing legs are $\{B^{ij}_{Z}\}$; otherwise, the $3$-leg tensors are trivial, i.e.,
    the Kronecker delta.  
    We show the edges to which non-trivial 3-leg tensors are assigned in Fig.\ \ref{fig:config_of_3_leg}.
\begin{figure}[h]
  \centering
  \includegraphics[scale=0.25]{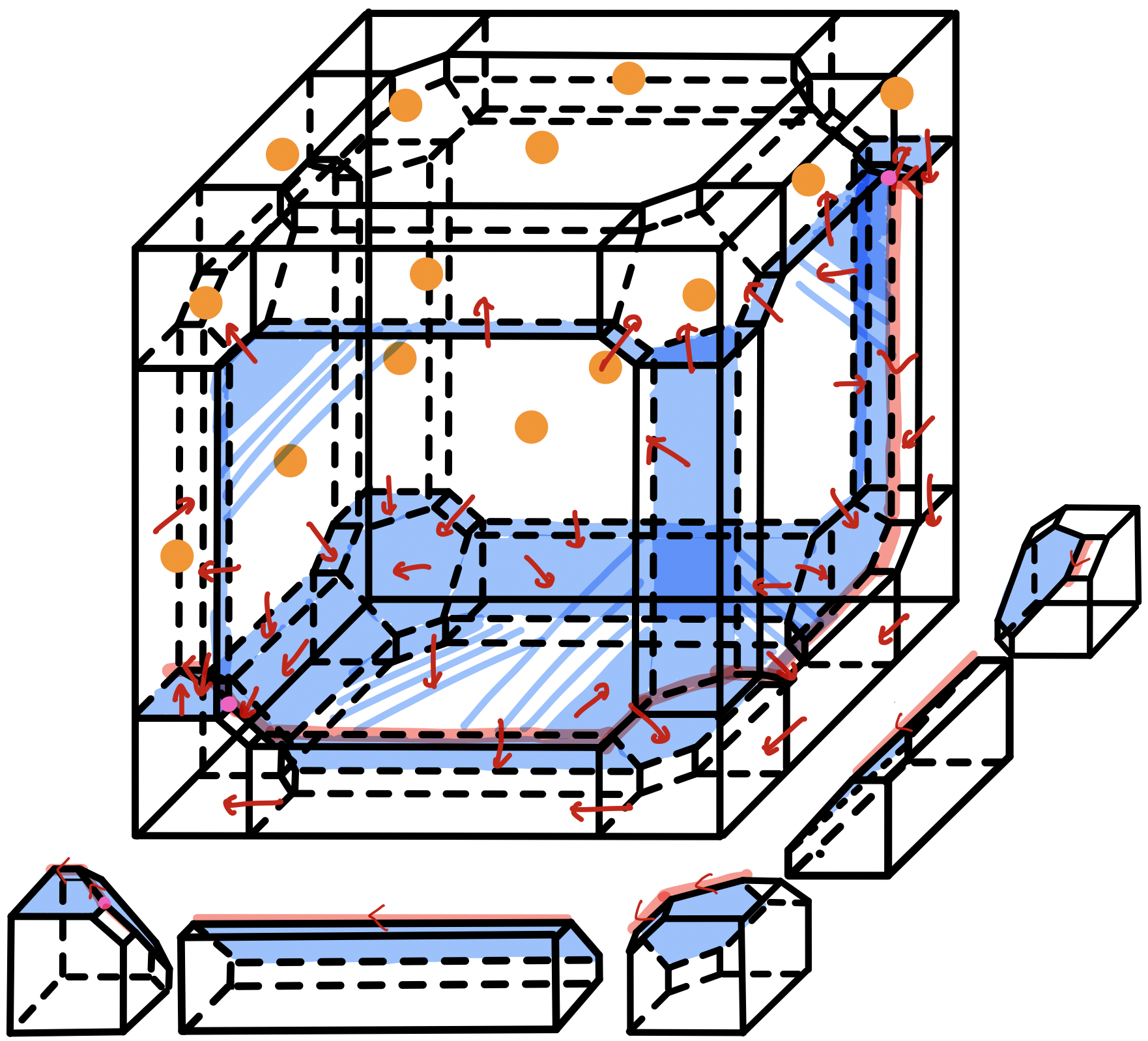}
  \hspace{0.5cm}
  \caption{The configuration of nontrivial $3$-leg tensors.
  Red edges represent
  3-intersections for which a non-trivial 3-leg tensor is assigned.
  }
  \label{fig:config_of_3_leg}
\end{figure}\\
    
    {\bf Step 3: Quadruple inner product on vertices} 
    Recall that each triangle 
    is
    a triple overlap $U_{\alpha\beta\gamma\delta}$ in the original open-covering picture. Therefore, the quadruple inner product is assigned to each tetrahedron of the triangulation. Using the procedure introduced in Step 2, a $3$-leg tensor is defined for each face of the tetrahedron, thus providing four $3$-leg tensors in total. By gluing them together naturally, we can evaluate the quadruple inner product.

As an example, we evaluate the quadruple inner product of three vertices in Fig.\ \ref{fig7}. In the triangulation, the tetrahedra, which correspond to these vertices, are
\begin{eqnarray}\label{eq:triang_tetrahedra}
\begin{tikzpicture}[line cap=round,line join=round,x=1.0cm,y=1.0cm, scale=0.3, baseline={([yshift=-.6ex]current bounding box.center)}, thick, shift={(0,0)}, scale=0.6]
  \def\ippen{12/2} 
  \def\okuyukix{6/2} 
  \def\okuyukiy{2.7/2} 
  \draw node at (\ippen/2,-\okuyukiy*2) {(a)};
  \draw[cyan!70!blue] (0,0) -- (0,\ippen);
  \draw[-{Latex[length=2.5mm]},dashed,cyan!70!blue] (0,0) -- (0,\ippen/1.75);
  
  \draw[cyan!70!blue] (0,0) -- (\okuyukix+\ippen,\okuyukiy+\ippen);
  \draw[-{Latex[length=2.5mm]},dashed,cyan!70!blue] (0,0) -- (\okuyukix/1.75+\ippen/1.75,\ippen/1.75+\okuyukiy/1.75);
  
  
  \draw[cyan!70!blue] (0+\ippen,\ippen) -- (\okuyukix+\ippen,\okuyukiy+\ippen);
  \draw[-{Latex[length=2.5mm]},dashed,cyan!70!blue] (0+\ippen,\ippen) -- (\ippen+\okuyukix/1.75+\ippen/1.75-\ippen/1.75,\ippen+\okuyukiy/1.75+\ippen/1.75-\ippen/1.75);
  
  \draw (0+\okuyukix+\ippen,\okuyukiy+\ippen) -- (0,\ippen);
  \draw[-{Latex[length=2.5mm]},dashed] (\okuyukix+\ippen,\okuyukiy+\ippen) -- (\okuyukix+\ippen-\okuyukix/1.75-\ippen/1.75,\okuyukiy+\ippen+\ippen/1.75-\okuyukiy/1.75-\ippen/1.75);

  \draw (0,0) -- (0+\ippen,\ippen);
  \draw[-{Latex[reversed,length=2.5mm]},dashed] (0,0) -- (0+\ippen/1.75,\ippen/1.75);

  \draw[cyan!70!blue] (0,\ippen) -- (\ippen,\ippen);
  \draw[-{Latex[length=2.5mm]},dashed,cyan!70!blue] (0,\ippen) -- (\ippen/1.75,\ippen+\ippen/1.75-\ippen/1.75);
\end{tikzpicture}
\, ,
\quad
\begin{tikzpicture}[line cap=round,line join=round,x=1.0cm,y=1.0cm, scale=0.3, baseline={([yshift=-.6ex]current bounding box.center)}, thick, shift={(0,0)}, scale=0.6]
  \def\ippen{12/2} 
  \def\okuyukix{6/2} 
  \def\okuyukiy{2.7/2} 
  \draw node at (\ippen/2,-\okuyukiy*2) {(b)};
  \draw[dashed,cyan!70!blue] (0,0) -- (\okuyukix+\ippen,\okuyukiy+\ippen);
  \draw[-{Latex[reversed,length=2.5mm]},dashed,cyan!70!blue] (0,0) -- (\okuyukix/1.75+\ippen/1.75,\ippen/1.75+\okuyukiy/1.75);
  
  \draw (\ippen,0) -- (0+\ippen,\ippen);
  \draw[-{Latex[length=2.5mm]}] (\ippen,0) -- (\ippen+\ippen/1.75-\ippen/1.75,\ippen/1.75);
  
  \draw[cyan!70!blue] (0+\ippen,\ippen) -- (\okuyukix+\ippen,\okuyukiy+\ippen);
  \draw[-{Latex[length=2.5mm]},dashed,cyan!70!blue] (0+\ippen,\ippen) -- (\ippen+\okuyukix/1.75+\ippen/1.75-\ippen/1.75,\ippen+\okuyukiy/1.75+\ippen/1.75-\ippen/1.75);
  
  \draw[cyan!70!blue] (0+\okuyukix+\ippen,\okuyukiy+\ippen) -- (\ippen,0);
  \draw[-{Latex[length=2.5mm]},dashed,cyan!70!blue] (\okuyukix+\ippen,\okuyukiy+\ippen) -- (\okuyukix+\ippen+\ippen/1.75-\okuyukix/1.75-\ippen/1.75,\okuyukiy+\ippen-\okuyukiy/1.75-\ippen/1.75);

  \draw (0,0) -- (0+\ippen,\ippen);
  \draw[-{Latex[length=2.5mm]},dashed] (0,0) -- (0+\ippen/1.75,\ippen/1.75);

  \draw (0,0) -- (\ippen,0);
  \draw[-{Latex[reversed,length=2.5mm]},dashed] (0,0) -- (\ippen/1.75,0);
\end{tikzpicture}
\, ,
\;
\quad
\begin{tikzpicture}[line cap=round,line join=round,x=1.0cm,y=1.0cm, scale=0.3, baseline={([yshift=-.6ex]current bounding box.center)}, thick, shift={(0,0)}, scale=0.6]
  \def\ippen{12/2} 
  \def\okuyukix{6/2} 
  \def\okuyukiy{2.7/2} 
  \draw node at (\ippen/2,-\okuyukiy*2) {(c)};
  \draw[dashed,cyan!70!blue] (0,0) -- (\okuyukix,\okuyukiy+\ippen);
  \draw[-{Latex[reversed,length=2.5mm]},dashed,cyan!70!blue] (0,0) -- (\okuyukix/1.75,\ippen/1.75+\okuyukiy/1.75);
  
  \draw (\ippen,0) -- (0,\ippen);
  \draw[-{Latex[length=2.5mm]}] (\ippen,0) -- (\ippen-\ippen/1.75,\ippen/1.75);
  
  \draw[cyan!70!blue] (0,\ippen) -- (\okuyukix,\okuyukiy+\ippen);
  \draw[-{Latex[length=2.5mm]},dashed,cyan!70!blue] (0,\ippen) -- (\okuyukix/1.75,\ippen+\okuyukiy/1.75+\ippen/1.75-\ippen/1.75);
  
  \draw[cyan!70!blue] (0+\okuyukix,\okuyukiy+\ippen) -- (\ippen,0);
  \draw[-{Latex[length=2.5mm]},dashed,cyan!70!blue] (\okuyukix,\okuyukiy+\ippen) -- (\okuyukix+\ippen/1.75-\okuyukix/1.75,\okuyukiy+\ippen-\okuyukiy/1.75-\ippen/1.75);

  \draw (0,0) -- (0,\ippen);
  \draw[-{Latex[reversed,length=2.5mm]},dashed] (0,0) -- (0,\ippen/1.75);

  \draw (0,0) -- (\ippen,0);
  \draw[-{Latex[length=2.5mm]},dashed] (0,0) -- (\ippen/1.75,0);
\end{tikzpicture},
\quad
\begin{tikzpicture}[line cap=round,line join=round,x=1.0cm,y=1.0cm, scale=0.3, baseline={([yshift=-.6ex]current bounding box.center)}, thick, shift={(0,0)}, scale=0.6]
  \def\ippen{12/2} 
  \def\okuyukix{6/2} 
  \def\okuyukiy{2.7/2} 
  \draw node at (\ippen/2,-\okuyukiy*2) {(d)};
  \draw[cyan!70!blue] (0,0) -- (\ippen,0);
  \draw[-{Latex[reversed,length=2.5mm]},cyan!70!blue] (0,0) -- (\ippen/1.75,0);
  
  \draw[cyan!70!blue] (\ippen,0) -- (\ippen+\okuyukix,\ippen+\okuyukiy);
  \draw[-{Latex[length=2.5mm]},cyan!70!blue] (\ippen,0) -- (\ippen+\ippen/1.75+\okuyukix/1.75-\ippen/1.75,\ippen/1.75+\okuyukiy/1.75);

  \draw (0,0) -- (\ippen+\okuyukix,\ippen+\okuyukiy);
  \draw[-{Latex[length=2.5mm]}] (0,0) -- (\ippen/1.75+\okuyukix/1.75,\ippen/1.75+\okuyukiy/1.75);

  \draw[dashed,cyan!70!blue] (0,0) -- (\ippen+\okuyukix,\okuyukiy);
  \draw[-{Latex[reversed,length=2.5mm]},dashed,cyan!70!blue] (0,0) -- (\ippen/1.75+\okuyukix/1.75,\okuyukiy/1.75);

  \draw (\ippen,0) -- (\ippen+\okuyukix,\okuyukiy);
  \draw[-{Latex[length=2.5mm]}] (\ippen,0) -- (\ippen+\ippen/1.75+\okuyukix/1.75-\ippen/1.75,\okuyukiy/1.75);

  \draw[cyan!70!blue] (\ippen+\okuyukix,\okuyukiy) -- (\ippen+\okuyukix,\okuyukiy+\ippen);
  \draw[-{Latex[length=2.5mm]},cyan!70!blue] (\ippen+\okuyukix,\okuyukiy) -- (\ippen+\okuyukix+\ippen/1.75+\okuyukix/1.75-\ippen/1.75-\okuyukix/1.75,\okuyukiy+\okuyukiy/1.75+\ippen/1.75-\okuyukiy/1.75);
  
\end{tikzpicture}
\, ,
\end{eqnarray}
where the edges with blue color imply that the transition MPU assigned to the edges are nontrivial. 
Let's compute the tetrahedra (b) in Eq.\ \eqref{eq:triang_tetrahedra}. For instance, focusing on the right-side triangle, according to the procedure explained in Step 2, the $3$-leg tensor with 2-in 1-out configuration is assigned to this face, where the incoming legs are labeled with nontrivial elements of $\zmod{2}$. Similarly, focusing on the left-side triangle, the $3$-leg tensor with 1-in 2-out configuration is assigned to this face, where the outgoing legs are labeled with nontrivial elements of $\zmod{2}$. We can easily check that the trivial $3$-leg tensors are assigned to the other edges. 
When these $3$-leg tensors are glued along the edges, the diagram for the quadruple inner product is 
\begin{eqnarray}
    \begin{tikzpicture}[line cap=round,line join=round,x=1.0cm,y=1.0cm, scale=0.3, baseline={([yshift=-.6ex]current bounding box.center)}, thick, shift={(0,0)}, scale=0.7]
  \def\ippen{14/2} 
  \def\okuyukix{8/2} 
  \def\okuyukiy{-4/2} 
  \def\takasa{8/2}
  \draw[dashed,cyan!70!blue] (0,0) -- (\ippen,0);
  \draw[-{Latex[length=2.5mm]},dashed,cyan!70!blue] (0,0) -- (\ippen/1.75,0);
  
  \draw (0,0) -- (\okuyukix,\okuyukiy);
  \draw[-{Latex[reversed,length=2.5mm]}] (0,0) -- (\okuyukix/1.75,\okuyukiy/1.75);
  
  \draw (\okuyukix,\okuyukiy) -- (\ippen,0);
  \draw[-{Latex[reversed,length=2.5mm]}] (\okuyukix,\okuyukiy) -- (\okuyukix+\ippen/1.75-\okuyukix/1.75,\okuyukiy-\okuyukiy/1.75);

  \draw[cyan!70!blue] (0,0) -- (\okuyukix,\takasa);
  \draw[-{Latex[length=2.5mm]},cyan!70!blue] (0,0) -- (\okuyukix/1.75,\takasa/1.75);
  
  \draw (\okuyukix,\okuyukiy) -- (\okuyukix,\takasa);
  \draw[-{Latex[length=2.5mm]}] (\okuyukix,\okuyukiy) -- (\okuyukix+\okuyukix/1.75-\okuyukix/1.75,\okuyukiy+\takasa/1.75-\okuyukiy/1.75);
  
  \draw[cyan!70!blue] (\okuyukix,\takasa) -- (\ippen,0);
  \draw[-{Latex[length=2.5mm]},cyan!70!blue] (\okuyukix,\takasa) -- (\okuyukix+\ippen/1.75-\okuyukix/1.75,\takasa-\takasa/1.75);
\end{tikzpicture}=\sum_{ab}(\Lambda_{ZZI}^{R(2,1)})_{ab}{}^{0}(\Lambda_{ZZI}^{L(2,1)})_{0}{}^{ab}=1.
\end{eqnarray}
Generally, as with the tetrahedron (b) and (c) 
in Eq.\ \eqref{eq:triang_tetrahedra}, 
the quadruple inner product turns out to be
trivial specifically in cases where exactly three edges have non-trivial transition functions. It can be confirmed that among the tetrahedra included in the triangulation, only the ones labeled (a) and (d) have four nontrivial edges. Therefore, it is sufficient to evaluate the quadruple inner products corresponding to these two tetrahedra. For 
the tetrahedron
(d) in Eq.\ \eqref{eq:triang_tetrahedra}, 
the quadruple inner product is 
\begin{eqnarray}
    \begin{tikzpicture}[line cap=round,line join=round,x=1.0cm,y=1.0cm, scale=0.3, baseline={([yshift=-.6ex]current bounding box.center)}, thick, shift={(0,0)}, scale=0.7]
  \def\ippen{14/2} 
  \def\okuyukix{8/2} 
  \def\okuyukiy{-4/2} 
  \def\takasa{8/2}
  \draw[dashed,cyan!70!blue] (0,0) -- (\ippen,0);
  \draw[-{Latex[length=2.5mm]},dashed,cyan!70!blue] (0,0) -- (\ippen/1.75,0);
  
  \draw (0,0) -- (\okuyukix,\okuyukiy);
  \draw[-{Latex[reversed,length=2.5mm]}] (0,0) -- (\okuyukix/1.75,\okuyukiy/1.75);
  
  \draw[cyan!70!blue] (\okuyukix,\okuyukiy) -- (\ippen,0);
  \draw[-{Latex[reversed,length=2.5mm]},cyan!70!blue] (\okuyukix,\okuyukiy) -- (\okuyukix+\ippen/1.75-\okuyukix/1.75,\okuyukiy-\okuyukiy/1.75);

  \draw[cyan!70!blue] (0,0) -- (\okuyukix,\takasa);
  \draw[-{Latex[length=2.5mm]},cyan!70!blue] (0,0) -- (\okuyukix/1.75,\takasa/1.75);
  
  \draw[cyan!70!blue] (\okuyukix,\okuyukiy) -- (\okuyukix,\takasa);
  \draw[-{Latex[reversed,length=2.5mm]},cyan!70!blue] (\okuyukix,\okuyukiy) -- (\okuyukix+\okuyukix/1.75-\okuyukix/1.75,\okuyukiy+\takasa/1.75-\okuyukiy/1.75);
  
  \draw (\okuyukix,\takasa) -- (\ippen,0);
  \draw[-{Latex[length=2.5mm]}] (\okuyukix,\takasa) -- (\okuyukix+\ippen/1.75-\okuyukix/1.75,\takasa-\takasa/1.75);
\end{tikzpicture}=\sum_{ab}(\Lambda_{ZZI}^{R(2,1)})_{ab}{}^{0}(\Lambda_{ZZI}^{L(2,1)})_0{}^{ab}=1.
\end{eqnarray}
On the other hand, for the tetrahedron
(a) in Eq.\ \eqref{eq:triang_tetrahedra}, 
the quadruple inner product is 
\begin{eqnarray}
    \begin{tikzpicture}[line cap=round,line join=round,x=1.0cm,y=1.0cm, scale=0.3, baseline={([yshift=-.6ex]current bounding box.center)}, thick, shift={(0,0)}, scale=0.7]
  \def\ippen{14/2} 
  \def\okuyukix{8/2} 
  \def\okuyukiy{-4/2} 
  \def\takasa{8/2}
  \draw[dashed,cyan!70!blue] (0,0) -- (\ippen,0);
  \draw[-{Latex[length=2.5mm]},dashed,cyan!70!blue] (0,0) -- (\ippen/1.75,0);
  
  \draw (0,0) -- (\okuyukix,\okuyukiy);
  \draw[-{Latex[reversed,length=2.5mm]}] (0,0) -- (\okuyukix/1.75,\okuyukiy/1.75);
  
  \draw[cyan!70!blue] (\okuyukix,\okuyukiy) -- (\ippen,0);
  \draw[-{Latex[length=2.5mm]},cyan!70!blue] (\okuyukix,\okuyukiy) -- (\okuyukix+\ippen/1.75-\okuyukix/1.75,\okuyukiy-\okuyukiy/1.75);

  \draw[cyan!70!blue] (0,0) -- (\okuyukix,\takasa);
  \draw[-{Latex[length=2.5mm]},cyan!70!blue] (0,0) -- (\okuyukix/1.75,\takasa/1.75);
  
  \draw[cyan!70!blue] (\okuyukix,\okuyukiy) -- (\okuyukix,\takasa);
  \draw[-{Latex[reversed,length=2.5mm]},cyan!70!blue] (\okuyukix,\okuyukiy) -- (\okuyukix+\okuyukix/1.75-\okuyukix/1.75,\okuyukiy+\takasa/1.75-\okuyukiy/1.75);
  
  \draw (\okuyukix,\takasa) -- (\ippen,0);
  \draw[-{Latex[length=2.5mm]}] (\okuyukix,\takasa) -- (\okuyukix+\ippen/1.75-\okuyukix/1.75,\takasa-\takasa/1.75);
\end{tikzpicture}=\sum_{ab}(\Lambda_{ZZI}^{R(2,1)})_{ab}{}^{0}(\Lambda_{ZZI}^{L(2,1)})_{0}{}^{ba}=-1. 
\end{eqnarray}
Therefore, the total higher Berry phase is $-1$ in $\cohoZ{4}{\rp{4}}\simeq \zmod{2}$, and this implies the nontriviality as a family.

\section{Geometric Interpretation: $2$-Gerbe}\label{sec:2gerbe}

In this section, we provide a further discussion
on our construction of the higher Berry phase and the topological invariant,
from the perspective of a gerbe. 
In particular,
we provide a geometric interpretation, called a $2$-gerbe \cite{Breen94}, of the family of semi-injective PEPS. 
We recall that the underlying gerbe structure of the higher Berry phase for (1+1)d many-body
states (MPS) are discussed in Refs.\ \cite{OR23} and \cite{qi2023charting}. 
To address the nontriviality of a family of PEPSs,
we need to consider the interface of the interface and put the zipper tensor on
it.
The zipper tensor can be regarded as the redundancy of the transition MPU, and the transition MPU is the redundancy of the PEPS tensor. In this sense, the zipper tensor is the redundancy of the redundancy. Now, $2$-gerbe is a sophisticated tool to handle the redundancy of the redundancy\footnote{In general, $n$-category has information about the $n$th-redundancy of objects.}.
We note that Ref.\ \cite{qi2023charting} 
points out a 2-gerbe structure for (2+1)-dimensional PEPS using the bundle gerbe.

A $1$-gerbe, or simply a gerbe, is a categorification of a line bundle, and a $2$-gerbe is a categorification of a $1$-gerbe. Therefore, we will first recall a line bundle and explain the relationship between line bundles and gerbes.
Consider a $(0+1)$-dimensional unique gapped Hamiltonian parametrized by $X$. We fix an open covering $\{U_\alpha\}$ of $X$. We can take a family of normalized ground states $\ket{\psi_{\alpha}(x)}$ over each open set $U_\alpha$. Note that the phase of the ground state can freely be changed. In other words, to take a family of the ground states $\ket{\psi_{\alpha}(x)}$ can be regarded as a gauge fixing of the phase redundancy of the states. At each point of the intersection $U_{\alpha\beta}$, we have two identical states in different gauges. Thus, in particular, it is possible to determine the phase difference between the two states, $\braket{\psi_{\alpha}(x)|\psi_{\beta}(x)}$. This is a physical 
realization
of a line bundle.

Next, consider a $(1+1)$-dimensional unique gapped Hamiltonian parametrized by $X$. We assume that we can take a normal MPS representation $\{A^i(x)\}$ of the ground state at each point of $X$. As in the $(0+1)$ dimensional case, this involves gauge fixing with respect to the redundancy of the MPS representation. Using the fundamental theorem of MPS, at each point of the two intersection $U_{\alpha\beta}$, we can take the unique largest eigenvector (or reduction tensor) $\Lambda_{\alpha\beta}$ of the mixed transfer matrix, and at the three intersection $U_{\alpha\beta\gamma}$, a triple product can be taken to obtain a ${\rm U}(1)$ function. That is, a line bundle spanned by $\Lambda_{\alpha\beta}$ is defined on the $2$-intersection, and we can extract 
a $\mathrm{U}(1)$-valued function on the three intersections. In the case of $(0+1)$ dimensions, a family of one-dimensional vector spaces is defined on each open set, and a ${\rm U}(1)$-valued function is provided at the two intersections. In contrast, in the case of $(1+1)$ dimensions, a family of one-dimensional vector spaces is established at the two intersections, and a ${\rm U}(1)$-valued function is provided at the three intersections. In this sense, the degree of intersection has increased by one. Such objects are mathematically referred to as ${\rm U}(1)$-gerbes. The type of gerbe obtained by considering a family of normal MPS is called an MPS gerbe \cite{OR23, QSWSPBH23}.

Let's now consider a $(2+1)$-dimensional unique gapped Hamiltonian parametrized by $X$. We assume that we can take a semi-injective PEPS representation $\{A_{\alpha}^i(x)\}$ of the ground state at each point of $U_\alpha$. Using the fundamental theorem of semi-injective, at each point of the two intersection $U_{\alpha\beta}$, we obtain an injective MPO $\mathcal{O}_{\alpha\beta}$ between $\{A_{\alpha}^i(x)\}$ and $\{A_{\beta}^i(x)\}$. If we assume the unitarity of the MPO, on the three intersection $U_{\alpha\beta\gamma}$, we have a unique fixed point $\Lambda_{\alpha\beta\gamma}$ of transfer matrix $T_{\alpha\beta\gamma}$, and we can extract a $\mathrm{U}(1)$-valued function on the four intersection. That is, a line bundle spanned by $\Lambda_{\alpha\beta\gamma}$ is defined on the three intersection, and we can extract a $\mathrm{U}(1)$-valued function on the four intersection. Such objects are mathematically referred to as $\mathrm{Tor}(\mathrm{U}(1))$-gerbes \cite{Breen94}. However, to extract the $\mathrm{U}(1)$-valued function, we need to use the information of the transition MPU in general, and in such a case, this is not a $2$-gerbe. 
In the case of the example in Sec.\ \ref{sec:Model parameterized over rp4}, 
we need not to insert the transition MPU. 
Therefore, in this case, $2$-gerbe over $\rp{4}$ is constructed from a family of semi-injective PEPS.

If the zipper condition holds, we do not need to insert the transition MPU. Thus, the zipper condition is a sufficient condition for the $2$-gerbe description. 
%
More abstractly, a $2$-gerbe over $X$ is a $2$-stack over $X$ such that every $1$-morphism is invertible up to $2$-morphism and every $2$-morphism is exactly invertible \cite{Breen94}. 
In the PEPS description, $1$-morphisms correspond to the transition MPU $\mathcal{O}_{\alpha\beta}$ and $2$-morphisms 
correspond to the fixed point tensors $\Lambda_{\alpha\beta\gamma}$. 
Since $\mathcal{O}_{\alpha\beta}$ is injective, we can take the inverse $\mathcal{O}^{-1}_{\alpha\beta}$ and reduction pairs from $\mathcal{O}_{\alpha\beta}\cdot\mathcal{O}^{-1}_{\alpha\beta}$ to the identity and from $\mathcal{O}^{-1}_{\alpha\beta}\cdot\mathcal{O}_{\alpha\beta}$ to the identity. 
This is the invertibility of $1$-morphisms up to $2$-morphisms. 
However, the invertibility of the $2$-morphism does not hold in general. 
For example, let's consider the right reduction $\Lambda^{R}$ from $\{A^i\}$ to $\{B^i\}$. 
Here, $\{A^i\}$ is an MPO tensor that is not necessarily normal and $\{B^i\}$ is a normal MPO tensor. 
The invertibility of $\Lambda^{R}$ implies that there is a tensor $X$ such that $X\cdot\Lambda^{R}=1_{B}$ and $\Lambda^{R}\cdot X=1_{A}$. 
Here, $1_A$ and $1_B$ are the identity operators acting on the virtual legs of $\{A^i\}$ and $\{B^i\}$. The left reduction $\Lambda^L$ is a good candidate for the inverse tensor. Indeed, $\Lambda^L$ satisfies the first condition $\Lambda^L \cdot\Lambda^R=1_{B}$. However, $\Lambda^{R}\cdot\Lambda^R$ is not the identity. Diagrammatically, when we consider the interface of the $\{B^i\}$, $\{A^i\}$ and $\{B^i\}$, we can zip the reduction tensors as follows:
\begin{eqnarray}
    \begin{tikzpicture}[line cap=round,line join=round,x=1.0cm,y=1.0cm, scale=0.3, baseline={([yshift=-.6ex]current bounding box.center)}, thick, shift={(0,0)}, scale=0.7]
  \def\saitosu{4} 
  \pgfmathsetmacro\saitosutasuichi{\saitosu+1}
  \setcounter{n}{0}
  \setcounter{n}{\saitosu} 
  \def\yoko{8} 
  \def\tate{3} 
  \draw (0,\tate/2) -- (\yoko,\tate/2);
  \fill[color=black] (\yoko,\tate/2) circle (3mm); 
  \draw (\yoko,\tate/2) node [below] {$\Lambda^{L}$};
  \draw (\yoko/2,0) node [below] {$B^i$};
  \draw (\yoko/2,\tate) node [above] {$\textcolor{white}{B^i}$};
  \foreach \x in {1,...,\then}
      \draw (\x*\yoko/\saitosutasuichi,0) -- (\x*\yoko/\saitosutasuichi,\tate); 
  \foreach \x in {1,...,\then}
      \draw[fill=pink!70!red] (\x*\yoko/\saitosutasuichi,\tate/2) circle (3mm);
  \draw (0+\yoko,\tate/2) -- (\yoko+\yoko,\tate/2);
  \fill[color=black] (\yoko+\yoko,\tate/2) circle (3mm); 
  \draw (\yoko+\yoko,\tate/2) node [below] {$\Lambda^{R}$};
  \draw (\yoko/2+\yoko,0) node [below] {$A^i$};
  \foreach \x in {1,...,\then}
      \draw (\x*\yoko/\saitosutasuichi+\yoko,0) -- (\x*\yoko/\saitosutasuichi+\yoko,\tate); 
  \foreach \x in {1,...,\then}
      \draw[fill=cyan!70!blue] (\x*\yoko/\saitosutasuichi+\yoko,\tate/2) circle (3mm);
  \draw (0+\yoko*2,\tate/2) -- (\yoko+\yoko*2,\tate/2);
  \draw (\yoko/2+\yoko*2,0) node [below] {$B^i$};
  \foreach \x in {1,...,\then}
      \draw (\x*\yoko/\saitosutasuichi+\yoko*2,0) -- (\x*\yoko/\saitosutasuichi+\yoko*2,\tate); 
  \foreach \x in {1,...,\then}
      \draw[fill=pink!70!red] (\x*\yoko/\saitosutasuichi+\yoko*2,\tate/2) circle (3mm);
\end{tikzpicture}
=
\begin{tikzpicture}[line cap=round,line join=round,x=1.0cm,y=1.0cm, scale=0.3, baseline={([yshift=-.6ex]current bounding box.center)}, thick, shift={(0,0)}, scale=0.7]
  \def\saitosu{12} 
  \pgfmathsetmacro\saitosutasuichi{\saitosu+1}
  \setcounter{n}{0}
  \setcounter{n}{\saitosu} 
  \def\yoko{8*3} 
  \def\tate{3} 
  \draw (0,\tate/2) -- (\yoko,\tate/2);
  \draw (\yoko/2,0) node [below] {$B^i$};
  \draw (\yoko/2,\tate) node [above] {$\textcolor{white}{B^i}$};
  \foreach \x in {1,...,\then}
      \draw (\x*\yoko/\saitosutasuichi,0) -- (\x*\yoko/\saitosutasuichi,\tate); 
  \foreach \x in {1,...,\then}
      \draw[fill=pink!70!red] (\x*\yoko/\saitosutasuichi,\tate/2) circle (3mm);
\end{tikzpicture}.
\end{eqnarray}
On the other hand, when we consider the interface of $\{A^i\}$, $\{B^i\}$ and $\{A^i\}$, we can unzip the reduction tensors but there remains a tensor $\Lambda^R\Lambda^L$:
\begin{eqnarray}
\begin{tikzpicture}[line cap=round,line join=round,x=1.0cm,y=1.0cm, scale=0.3, baseline={([yshift=-.6ex]current bounding box.center)}, thick, shift={(0,0)}, scale=0.7]
  \def\saitosu{4} 
  \pgfmathsetmacro\saitosutasuichi{\saitosu+1}
  \setcounter{n}{0}
  \setcounter{n}{\saitosu} 
  \def\yoko{8} 
  \def\tate{3} 
  \draw (0,\tate/2) -- (\yoko,\tate/2);
  \fill[color=black] (\yoko,\tate/2) circle (3mm); 
  \draw (\yoko,\tate/2) node [below] {$\Lambda^{R}$};
  \draw (\yoko/2,0) node [below] {$A^i$};
  \draw (\yoko/2,\tate) node [above] {$\textcolor{white}{B^i}$};
  \foreach \x in {1,...,\then}
      \draw (\x*\yoko/\saitosutasuichi,0) -- (\x*\yoko/\saitosutasuichi,\tate); 
  \foreach \x in {1,...,\then}
      \draw[fill=cyan!70!blue] (\x*\yoko/\saitosutasuichi,\tate/2) circle (3mm);
  \draw (0+\yoko,\tate/2) -- (\yoko+\yoko,\tate/2);
  \fill[color=black] (\yoko+\yoko,\tate/2) circle (3mm); 
  \draw (\yoko+\yoko,\tate/2) node [below] {$\Lambda^{L}$};
  \draw (\yoko/2+\yoko,0) node [below] {$B^i$};
  \foreach \x in {1,...,\then}
      \draw (\x*\yoko/\saitosutasuichi+\yoko,0) -- (\x*\yoko/\saitosutasuichi+\yoko,\tate); 
  \foreach \x in {1,...,\then}
      \draw[fill=pink!70!red] (\x*\yoko/\saitosutasuichi+\yoko,\tate/2) circle (3mm);
  \draw (0+\yoko*2,\tate/2) -- (\yoko+\yoko*2,\tate/2);
  \draw (\yoko/2+\yoko*2,0) node [below] {$A^i$};
  \foreach \x in {1,...,\then}
      \draw (\x*\yoko/\saitosutasuichi+\yoko*2,0) -- (\x*\yoko/\saitosutasuichi+\yoko*2,\tate); 
  \foreach \x in {1,...,\then}
      \draw[fill=cyan!70!blue] (\x*\yoko/\saitosutasuichi+\yoko*2,\tate/2) circle (3mm);
\end{tikzpicture}
=
\begin{tikzpicture}[line cap=round,line join=round,x=1.0cm,y=1.0cm, scale=0.3, baseline={([yshift=-.6ex]current bounding box.center)}, thick, shift={(0,0)}, scale=0.7]
  \def\saitosu{6} 
  \pgfmathsetmacro\saitosutasuichi{\saitosu+1}
  \setcounter{n}{0}
  \setcounter{n}{\saitosu} 
  \def\yoko{12} 
  \def\tate{3} 
  \draw (0,\tate/2) -- (\yoko,\tate/2);
  \fill[color=black] (\yoko,\tate/2) circle (3mm); 
  \draw (\yoko,\tate/2-1.5
  ) node [below] {$\Lambda^{R}\Lambda^{L}$};
  \draw (\yoko/2,0) node [below] {$A^i$};
  \draw (\yoko/2,\tate) node [above] {$\textcolor{white}{B^i}$};
  \foreach \x in {1,...,\then}
      \draw (\x*\yoko/\saitosutasuichi,0) -- (\x*\yoko/\saitosutasuichi,\tate); 
  \foreach \x in {1,...,\then}
      \draw[fill=cyan!70!blue] (\x*\yoko/\saitosutasuichi,\tate/2) circle (3mm);
  \draw (0+\yoko,\tate/2) -- (\yoko+\yoko,\tate/2);
  \draw (\yoko/2+\yoko,0) node [below] {$A^i$};
  \foreach \x in {1,...,\then}
      \draw (\x*\yoko/\saitosutasuichi+\yoko,0) -- (\x*\yoko/\saitosutasuichi+\yoko,\tate); 
  \foreach \x in {1,...,\then}
      \draw[fill=cyan!70!blue] (\x*\yoko/\saitosutasuichi+\yoko,\tate/2) circle (3mm);
\end{tikzpicture}
\end{eqnarray}
Since $\Lambda^R\Lambda^L$ is a projection that is supported on the normal part of $\{A^i\}$, we can annihilate it when we take a periodic boundary condition \cite{G-RLM23}. If the zipper condition holds exactly 
we can annihilate the projection freely. In this case, a $2$-gerbe over the parameter space is constructed from a family of semi-injective PEPS. This is consistent with the above observation.

\section{Summary and Discussion} \label{sec:summary}

In this paper, we provide the construction of the higher-Berry phase
using the semi-injective PEPS representation of (2+1)-dimensional invertible many-body quantum states.
There are many open questions to be explored. We close by listing some of them.

First and foremost, 
it is interesting to find many more examples and apply our formula. 
To this end, it is desirable to construct 
many more simple toy models. 
In this paper, we constructed 
the (2+1)d model characterized 
by the $\mathbb{Z}/2\mathbb{Z}$ higher-Berry phase.
It is interesting to construct and study 
some models that are characterized by 
the $\mathbb{Z}$ topological invariant
(the free rather than torsion part of $\cohoZ{4}{X}$).
Furthermore, it is interesting to study more "realistic" models. 
To this end,
it is important to consider a wider class of tensor networks, beyond semi-injective PEPS. 
We need to see if our formalism and formula work for generic tensor networks. 
It is also important to implement 
our formalism numerically.

There are also other fundamental aspects to explore. For example, 
while we constructed 
a parameterized model 
realizing 
a 2-gerbe 
over $X=\mathbb{R}P^4$,
for generic cases,
our construction does not fit 
completely into a 2-gerbe structure. While this is not required 
for the discussion of topological classifications and the higher Berry phase, 
we speculate 
it is interesting to explore
other settings, e.g., using other 
tensor networks or 
other implementation of normality. 
Also, it would be nice to develop quantum field theory descriptions of the higher-Berry phase. 
It is also interesting to make a connection with
higher structures explored in other areas, 
higher-energy physics and mathematics,
in particular 
(see, for example, \cite{borsten2024higher}).



\begin{acknowledgments}
  We would like to thank Takamasa Ando, Atsushi Ueda,
  Andr\'{a}s Moln\'{a}r, Norbert Schuch, Ken Shiozaki, 
  Yuji Terashima, Alex Turzillo and Mayuko Yamashita for valuable
  discussions.
  We would also like to thank Yuya Kusuki and Bowei Liu for 
  collaboration in a related project.
  Special thanks go to Ophelia Evelyn Sommer,
  Xueda Wen, and Ashvin Vishwanath, who let us know of their related works, and also kindly agreed to coordinate submissions of our papers to arXiv.  
  This research was supported in part by Perimeter Institute for Theoretical Physics.
  Research at Perimeter Institute is supported by the Government of Canada through the Department of Innovation, Science and Economic Development and by the Province of Ontario through the Ministry of Colleges and Universities. 
We are grateful to the long-term workshop YITP-T-23-01 held at YITP, Kyoto University, where a part of this work was done.
S.O. is supported by JSPS KAKENHI Grant Number 23KJ1252 and 24K00522.
S.R.~is supported
by a Simons Investigator Grant from
the Simons Foundation (Award No.~566116).
This work is supported by
the Gordon and Betty Moore Foundation through Grant
GBMF8685 toward the Princeton theory program. 

\end{acknowledgments}



\bibliography{parentbib.bib}

\end{document}